

\documentclass[cmp,numbook,envcountsame]{svjour}
\pdfoutput=1
\usepackage{amsmath}
\usepackage{amsfonts}
\usepackage{amssymb}
\usepackage{graphicx} 
\setcounter{MaxMatrixCols}{30}
\providecommand{\U}[1]{\protect\rule{.1in}{.1in}}

\begin{document}

\title{Three proofs of the Makeenko--Migdal equation for Yang--Mills theory on the plane}
\author{
Bruce K. Driver\inst{1} \and
Brian C. Hall\inst{2} \thanks{Supported in part by NSF grant DMS-1301534} \and
Todd Kemp\inst{3} \thanks{Supported in part by NSF CAREER award DMS-1254807}}
\institute{University of California San Diego, Department of Mathematics, La Jolla, CA
92093 USA, \email{bdriver@ucsd.edu} \and
University of Notre Dame, Department of Mathematics, Notre Dame, IN 46556 USA,
\email{bhall@nd.edu} \and
University of California San Diego, Department of Mathematics, La Jolla, CA
92093 USA, \email{tkemp@math.ucsd.edu}}

\maketitle

\begin{abstract}
We give three short proofs of the Makeenko--Migdal equation for the
Yang--Mills measure on the plane, two using the edge variables and one using
the loop or lasso variables. Our proofs are significantly simpler than the
earlier pioneering rigorous proofs given by T. L\'{e}vy and by A. Dahlqvist.
In particular, our proofs are \textquotedblleft local\textquotedblright\ in
nature, in that they involve only derivatives with respect to variables
adjacent to the crossing in question. In an accompanying paper with F.
Gabriel, we show that two of our proofs can be adapted to the case of
Yang--Mills theory on any compact surface.

\end{abstract}

\keywords{Yang--Mills theory, Wilson loops, Makeenko--Migdal equation, master field, large-$N$ limit}


\section{Introduction}

The (Euclidean) Yang--Mills field theory describes a random connection on a
principal bundle for a compact Lie group $K,$ known as the structure group. In
two dimensions, the theory is tractable and has been studied extensively. In
particular, for Yang--Mills theory on the plane, it is possible to use a gauge
fixing to make the measure Gaussian, opening the door to rigorous calculations
in a continuum setting. This approach was developed simultaneously in two
papers: \cite{GKS} by L. Gross, C. King, and A. Sengupta; and \cite{Dr} by B. Driver.

The typical objects of study in the theory are the Wilson loop functionals,
given by%
\begin{equation}
\mathbb{E}\{\mathrm{trace}(\mathrm{hol}(L))\}, \label{wilsonLoop}%
\end{equation}
where $\mathbb{E}$ denotes the expectation value with respect to the
Yang--Mills measure, $\mathrm{hol}(L)$ denotes the holonomy of the random
connection around a loop $L,$ and the trace is taken in some fixed
representation of the structure group $K.$ If $L$ is traced out on a graph in
the plane, work of Driver \cite[Theorem 6.4]{Dr} gives a formula for the
Wilson loop functional in terms of the \textit{heat kernel measure} on $K.$
(See (\ref{DriversForm}) below.) One noteworthy feature of the two-dimensional
Yang--Mills theory is its invariance under area-preserving diffeomorphisms.
This invariance is reflected in Driver's formula: the expectation
\eqref{wilsonLoop} may be expressed as a function (determined by the topology
of the graph) of all the areas of the faces of the graph.

The \textit{Makeenko--Migdal equation} relates variations of a Wilson loop
functional in the neighborhood of a simple crossing to the associated Wilson
loops on either side of the crossing, in the case $K=U(N)$. The original
equations, in any dimension, were the subject of \cite{MM}. In \cite[Section
4]{KK}, V. A. Kazakov and I. K. Kostov show that in two dimensions, the
\textquotedblleft keyboard-type\textquotedblright\ variation in Eq. (3) of
\cite{MM} can be interpreted as the alternating sum of derivatives of the
Wilson loop functional with respect to the areas of the faces surrounding a
simple crossing. (See \cite[Equation 24]{KK}, \cite[Equation 9]{KazakovU(N)},
and \cite[Equation 6.4]{GG}.) L\'{e}vy \cite{Levy} then provided a rigorous
proof of the planar Makeenko--Migdal equation, using Driver's formula. A
different proof was subsequently given by A. Dahlqvist in \cite{Dahl}. In this
paper, we offer three new, short proofs of the equation. As we show in an
accompanying paper with F. Gabriel \cite{DHKsurf}, two of these proofs can be
adapted to give a new result, namely a rigorous proof of the Makeenko--Migdal
equation for Yang--Mills theory on an arbitrary compact surface.

We use the bi-invariant metric on $U(N)$ whose value on the Lie algebra
$\mathfrak{u}(N)=T_{e}(U(N))$ is a scaled version of the Hilbert--Schmidt
inner product:%
\begin{equation}
\left\langle X,Y\right\rangle =N\mathrm{trace}(X^{\ast}Y). \label{HSN}%
\end{equation}
It is then convenient to express the Wilson loop functionals using the
\textit{normalized} trace,%
\[
\mathrm{tr}(A):=\frac{1}{N}\mathrm{trace}(A).
\]

Suppose now that $L$ is a loop that is traced out on an oriented graph in the plane. We
now explain what it means for $L$ to have a simple crossing at a vertex $v.$
First, we assume that the graph has four edges incident to $v,$ where we
count an edge~$e$ twice if both the initial and final vertices of $e$ are
equal to $v.$ Second, we assume that $L,$ when viewed as a map of the circle
into the plane, passes through $v$ exactly twice. Third, we assume that each
time $L$ passes through $v,$ it comes in along one edge and passes
\textquotedblleft straight across\textquotedblright\ to the cyclically
opposite edge. Last, we assume that $L$ traverses two of the edges on one
pass through $v$ and the remaining two edges on the other pass through $v.$ 

If $L$ has a simple crossing at $v$, we may parametrize $L$ with time-interval $[0,1]$
and with $L(0)=L(1)=v,$ so
that there is a unique $s_{0}\in (0,1)$ with $L(s_{0})=v.$ Let us label the 
\textit{outgoing} edges $e_{1},\ldots ,e_{4}$ at $v$ in cyclic order, with
the first edge traversed by $L$ labeled as $e_{1}.$ The last edge traversed
by $L$ will then be $e_{3}^{-1}.$ We may choose the labeling of the
remaining two edges so that the first return to $v$ is along $e_{4}^{-1}.$
Thus, $L$ initially leaves $v$ along $e_{1},$ returns along $e_{4}^{-1},$
leaves again along $e_{2},$ and finally returns again along $e_{3}^{-1}.$
Note that depending on the curve $L,$ the cyclic ordering of the edges may
be either clockwise or counter-clockwise. Having labeled the edges in cyclic
order, we then label the faces $F_{1},\ldots ,F_{4}$ adjacent to $v$ in
cyclic order so that $e_{1}$ lies between $F_{4}$ and $F_{1},$ $e_{2}$ lies
between $F_{1}$ and $F_{2},$ etc. See Figure \ref{mmplot.fig} for an example, where $L$
initially leaves $v$ along the edge between $F_{1}$ and $F_{4}.$

\begin{figure}[ptb]
\centering
\includegraphics[
height=2.1949in,
width=2.1949in
]{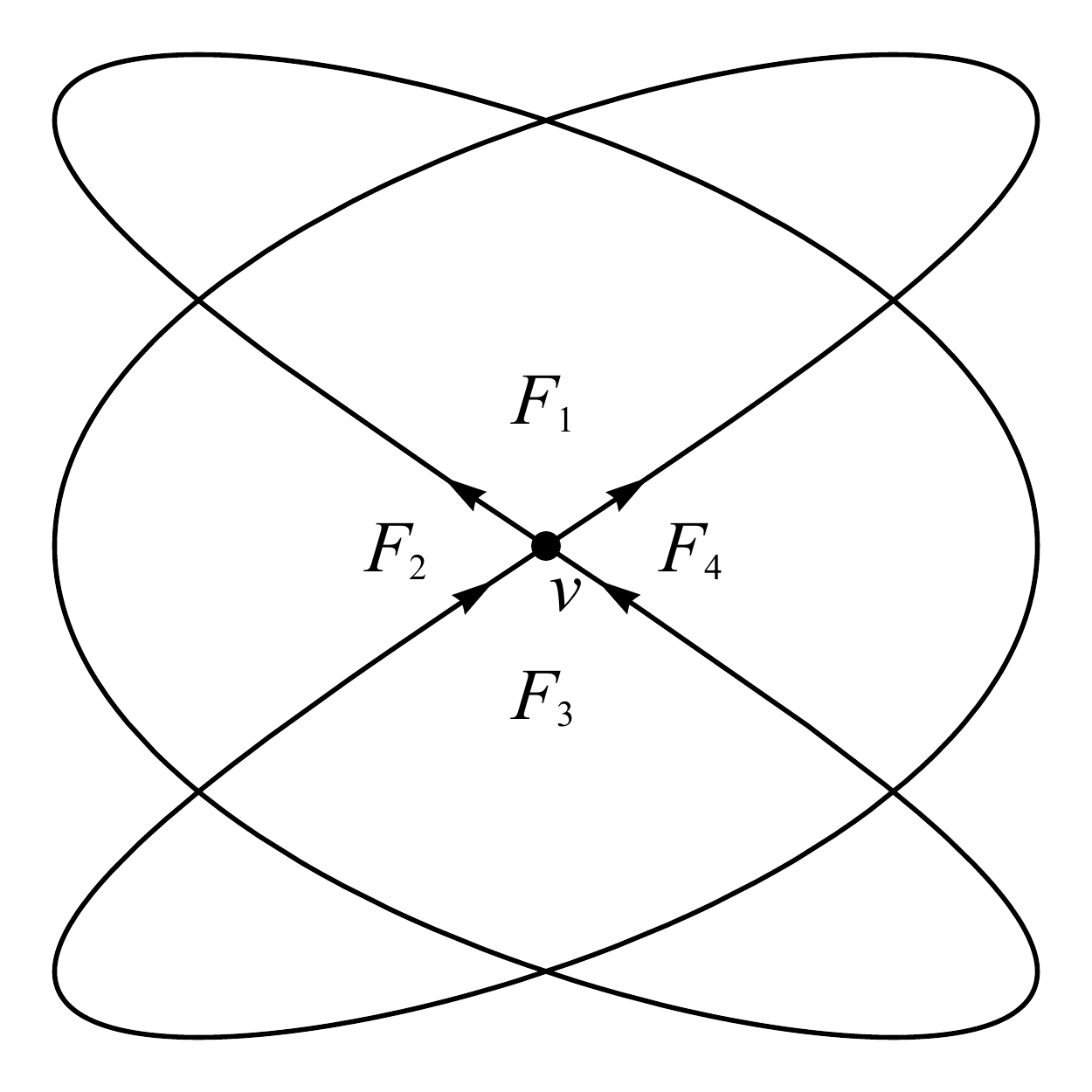}\caption{A typical loop $L$ for the Makeenko--Migdal equation}%
\label{mmplot.fig}%
\end{figure}

We also let $L_{1}$ denote the
loop from the beginning to the first return to $v$ and let $L_{2}$ denote the
loop from the first return to the end. (See Figure \ref{l1l2.fig}.) Finally, we let $t_1,\dots ,t_4$ 
denote the areas of the four faces adjacent to $v$. The
Makeenko--Migdal equation, in the plane case, then gives a formula for the
alternating sum of the derivatives of the Wilson loop functional with respect
to these areas.

\begin{theorem}
[Makeenko--Migdal equation for $U(N)$]\label{MMun.thm}Let $L$ be a
closed\break curve with simple crossings and let $v$ be a crossing.
Parametrize $L$ over the time interval $[0,1]$ with $L(0)=L(1)=v$, and let
$s_{0}$ be the unique time with $0<s_{0}<1$ such that $L(s_{0})=v.$ Let
$L_{1}$ be the restriction of $L$ to $[0,s_{0}]$ and let $L_{2}$ be the
restriction of $L$ to $[s_{0},1].$ Then%
\begin{equation}
\left(  \frac{\partial}{\partial t_{1}}-\frac{\partial}{\partial t_{2}}%
+\frac{\partial}{\partial t_{3}}-\frac{\partial}{\partial t_{4}}\right)
\mathbb{E}\{\mathrm{tr}(\mathrm{hol}(L))\}=\mathbb{E}\{\mathrm{tr}%
(\mathrm{hol}(L_{1}))\mathrm{tr}(\mathrm{hol}(L_{2}))\}. \label{MMun}%
\end{equation}

\end{theorem}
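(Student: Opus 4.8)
The plan is to differentiate under the integral sign in Driver's heat‑kernel formula for Wilson loops (\cite[Theorem~6.4]{Dr}), using the heat equation on $U(N)$ to turn the area‑derivatives into a differential operator, and then to reduce everything to an elementary local identity at the crossing $v$. Fix an oriented graph carrying $L$ so that $v$ is a simple crossing, let $a,b,c,d$ be the holonomy variables of the four outgoing edges $e_{1},e_{2},e_{3},e_{4}$ at $v$, and write $\mathrm{hol}(L)=c^{-1}Bbd^{-1}Aa$, where $A$ and $B$ are the holonomies of the portions of $L_{1}$ and $L_{2}$ strictly between these edges. Because $L$ has a \emph{simple} crossing, $A$ and $B$ do not involve $a,b,c,d$, each of $a,b,c,d$ occurs exactly once in the word $\mathrm{hol}(L)$, and $\mathrm{tr}(\mathrm{hol}(L))=\mathrm{tr}(\mathrm{hol}(L_{1})\mathrm{hol}(L_{2}))$ by cyclicity; moreover the two passages of $L$ through $v$ (at times $0$ and $s_{0}$) are exactly the two junctions ``$a\mid c^{-1}$'' and ``$b\mid d^{-1}$'' of the word, the first involving the edges $e_{1},e_{3}$ of one strand and the second the edges $e_{2},e_{4}$ of the other. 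Driver's formula writes $\mathbb{E}\{\mathrm{tr}(\mathrm{hol}(L))\}$ as the integral, over all edge holonomies against normalized Haar measure, of $\mathrm{tr}(\mathrm{hol}(L))\prod_{F}\rho_{t_{F}}(Q_{F})$, the product running over the bounded faces $F$, where $\rho_{t}$ is the heat kernel on $U(N)$ for the metric \eqref{HSN} and $Q_{F}$ is the holonomy around $\partial F$. Only $F_{1},\dots,F_{4}$ carry the parameters $t_{1},\dots,t_{4}$, and each of $a,b,c,d$ enters $\prod_{F}\rho_{t_{F}}(Q_{F})$ only through the two factors belonging to the faces it separates.

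Now differentiate. Since $\partial_{t}\rho_{t}=\tfrac12\Delta\rho_{t}$, where $\Delta$ is the bi‑invariant Laplacian on $U(N)$, and since $Q_{F_{i}}$ contains each adjacent edge variable exactly once, $\partial_{t_{i}}$ of the integral equals $\tfrac12\int(\Delta\rho_{t_{i}})(Q_{F_{i}})$ times the remaining factors; by self‑adjointness of $\Delta$ on the \emph{compact} group $U(N)$ (no boundary terms) this may be transferred onto the remaining factors, giving $\tfrac12$ times a second‑order operator in the edge variables on $\partial F_{i}$ applied to them. Those factors depend on these variables only through $\mathrm{tr}(\mathrm{hol}(L))$ and through the heat kernels of the faces sharing those edges, so the whole computation takes place in a neighbourhood of $v$ and involves only derivatives in $a,b,c,d$ --- this is the sense in which the proof is local.

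Expand the resulting operators in an orthonormal basis $\{X_{k}\}$ of $\mathfrak{u}(N)$ and use the two elementary facts for this normalization: $\sum_{k}X_{k}^{2}=-I$ and the ``magic formula'' $\sum_{k}(X_{k})_{pq}(X_{k})_{rs}=-\tfrac1N\delta_{ps}\delta_{rq}$ (equivalently $\sum_{k}X_{k}\otimes X_{k}$ equals $-\tfrac1N$ times the flip). The ``diagonal'' contributions --- those in which a single one of $a,b,c,d$ receives $\sum_{k}X_{k}^{2}=-I$ --- are harmless: since each of these variables occurs exactly once in $\mathrm{hol}(L)$, one has $\Delta_{g}\,\mathrm{tr}(\mathrm{hol}(L))=-\mathrm{tr}(\mathrm{hol}(L))$ when $g$ is any one of $a,b,c,d$, so with the alternating signs $+,-,+,-$ these terms reassemble into a multiple of $\int\mathrm{tr}(\mathrm{hol}(L))\prod_{F}\rho_{t_{F}}$, which vanishes since the signs sum to $0$, together with a term reproducing $-(\partial_{t_{1}}-\partial_{t_{2}}+\partial_{t_{3}}-\partial_{t_{4}})\mathbb{E}\{\mathrm{tr}(\mathrm{hol}(L))\}$ --- so the left side of \eqref{MMun} equals one half of what is left. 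What is left is a sum of genuinely second‑order terms; after, if needed, one further integration by parts (moving a derivative off a heat kernel and again discarding the resulting $\sum_{k}X_{k}^{2}$ piece), these become insertions of two factors $X_{k}$ --- one coming from each of the two strands of $L$ through $v$ --- contracted by the magic formula. Since the crossing is simple and $L$ goes straight across $v$, the two insertions sit exactly at the two junctions ``$a\mid c^{-1}$'' and ``$b\mid d^{-1}$'' of the word, and the magic formula turns the resulting $\sum_{k}\mathrm{tr}(\mathrm{hol}(L_{1})\,X_{k}\,\mathrm{hol}(L_{2})\,X_{k})$ into $-\mathrm{tr}(\mathrm{hol}(L_{1}))\mathrm{tr}(\mathrm{hol}(L_{2}))$ inside the expectation. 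Bookkeeping the signs and the factors of $\tfrac12$ then gives \eqref{MMun}. (An alternative, the lasso‑variable version, works directly with the face holonomies, which are independent heat‑kernel distributed, so that $\partial_{t_{i}}$ acts literally as $\tfrac12\Delta$ in the $i$‑th face variable and no integration by parts over edge variables is needed; the same magic‑formula mechanism then applies.)

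The measure‑theoretic ingredients --- Driver's formula, differentiation under the integral, integration by parts on $U(N)$ --- and the algebraic engine --- the heat equation together with the two $\mathfrak{u}(N)$ identities --- are all standard. I expect the genuine difficulty to be the combinatorial bookkeeping in the previous paragraph: writing each $Q_{F_{i}}$ explicitly enough in $a,b,c,d$ to see which second‑order terms actually occur, verifying that after the alternating sum precisely the diagonal terms cancel (and supply the compensating copy of the left‑hand side), and checking that the magic formula is applied at the correct pair of insertion points and with the correct overall sign. One must also treat the degenerate configurations in which some of $F_{1},\dots,F_{4}$ coincide (for instance a figure eight, where $F_{1}=F_{3}$) --- either by a direct check or by a continuity argument. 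It is precisely at these points that all four defining properties of a simple crossing are used.
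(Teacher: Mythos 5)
Your overall strategy---Driver's formula, differentiation under the integral via the heat equation, integration by parts, and the two $\mathfrak{u}(N)$ identities $\sum_k X_k^2=-I$ and $\sum_k X_kCX_k=-\mathrm{tr}(C)I$ applied at the junctions ``$a\mid c^{-1}$'' and ``$b\mid d^{-1}$''---is exactly the paper's, and you have correctly located the insertion points for the magic formula. But there is a genuine gap precisely where the theorem lives: the cancellation of the terms in which, after integration by parts, derivatives land on the heat kernels of adjacent faces rather than on $\mathrm{tr}(\mathrm{hol}(L))$. Each of $a,b,c,d$ occurs in \emph{two} of the four factors $\rho_{t_i}(h_i)$, so ``transferring $\Delta$ onto the remaining factors'' inevitably produces cross terms of the form $\nabla\rho_{t_i}\cdot\nabla\rho_{t_{i\pm1}}$ and $\nabla\rho_{t_{i\pm1}}\cdot\nabla f$; you assert that the bookkeeping works out but never exhibit the mechanism. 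In the paper this is done in two explicit stages: first, conjugation invariance of the heat kernel is used to write $(\Delta\rho_{t_j})(h_j)=\tfrac14(\nabla^{a_j}-\nabla^{a_{j+1}})^2[\rho_{t_j}(h_j)]$ (Lemma~\ref{l.Brian's.proof.1}), after which the first-stage cross terms telescope in the alternating sum; second, the residual cross terms from the second integration by parts cancel because $\mathrm{tr}(\mathrm{hol}(L))=\mathrm{tr}(c^{-1}Bbd^{-1}Aa)$ has \emph{extended gauge invariance} at the crossing, i.e.\ $\nabla^{a}f=-\nabla^{c}f$ and $\nabla^{b}f=-\nabla^{d}f$. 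Neither the conjugation invariance of $\rho_t$ nor this invariance property of the Wilson loop appears anywhere in your argument, and without them the cancellation you are relying on cannot be carried out.

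Two further points. Your claim that the ``diagonal'' contributions vanish together with ``a term reproducing $-(\partial_{t_1}-\partial_{t_2}+\partial_{t_3}-\partial_{t_4})\mathbb{E}\{\mathrm{tr}(\mathrm{hol}(L))\}$, so that the left side equals one half of what is left'' is not what happens in the paper's computation: there the diagonal terms $(\nabla^{a_j})^2f$ cancel pairwise in the alternating sum \emph{before} any use of $\sum_kX_k^2=-I$, and no copy of the left-hand side reappears. A variant of your recursion can be made to work if one integrates by parts only in the terminal edge variable of each face holonomy, but then the asserted factor of $\tfrac12$ and the vanishing of the diagonal terms are precisely the computations you have omitted. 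Finally, the degenerate configurations (coinciding faces, $e_i=e_j^{-1}$, an unbounded adjacent face) cannot be dispatched ``by a continuity argument,'' since the degeneration is combinatorial rather than a limit of areas; the paper reduces them to the generic case by subdividing the four edges at $v$ with a small circle of new edges and invoking the consistency of the Yang--Mills measure under subdivision and edge addition.
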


We follow the convention that if any of the adjacent faces is the unbounded
face, the corresponding derivative on the left-hand side of (\ref{MMun}) is
omitted. Note also that the faces $F_{1},$ $F_{2},$ $F_{3},$ and $F_{4}$ are
not necessarily distinct, so that the same derivative may occur more than once
on the left-hand side of (\ref{MMun}). We will actually prove (following
L\'{e}vy) an abstract Makeenko--Migdal equation that allows one to compute
alternating sums of derivatives of more general functions; see Section
\ref{MMexamples.sec} for additional examples.

The original argument of Makeenko and Migdal for the equation that bears their
names was based on heuristic calculations with a path integral and is far from
rigorous. (See Section 0.6 of \cite{Levy}.) Rigorous proofs have been given by
L\'{e}vy \cite{Levy} and Dahlqvist \cite{Dahl}. The goal of the current paper
is to provide three short proofs of the result, each of which is substantially
simpler than the proofs in \cite{Levy} and \cite{Dahl}. Our proofs are
\textquotedblleft local\textquotedblright\ in the sense that the key
calculations involve only the edges and faces adjacent to the crossing $v.$
This local nature of the proofs allows two of them, the proofs based on the
edge variables, to be extended to the case of Yang--Mills theory over
arbitrary compact surfaces; cf.\ \cite{DHKsurf}. In particular, our proofs, in
contrast to those of L\'{e}vy and Dahlqvist, make no reference to the
unbounded face. Since the Makeenko--Migdal equation itself is a local
statement, it is natural that it should have a local proof as well; this was
one motivation for the present paper, which provides purely local arguments.

\begin{figure}[ptb]
\centering
\includegraphics[
height=2.1949in,
width=2.1949in
]{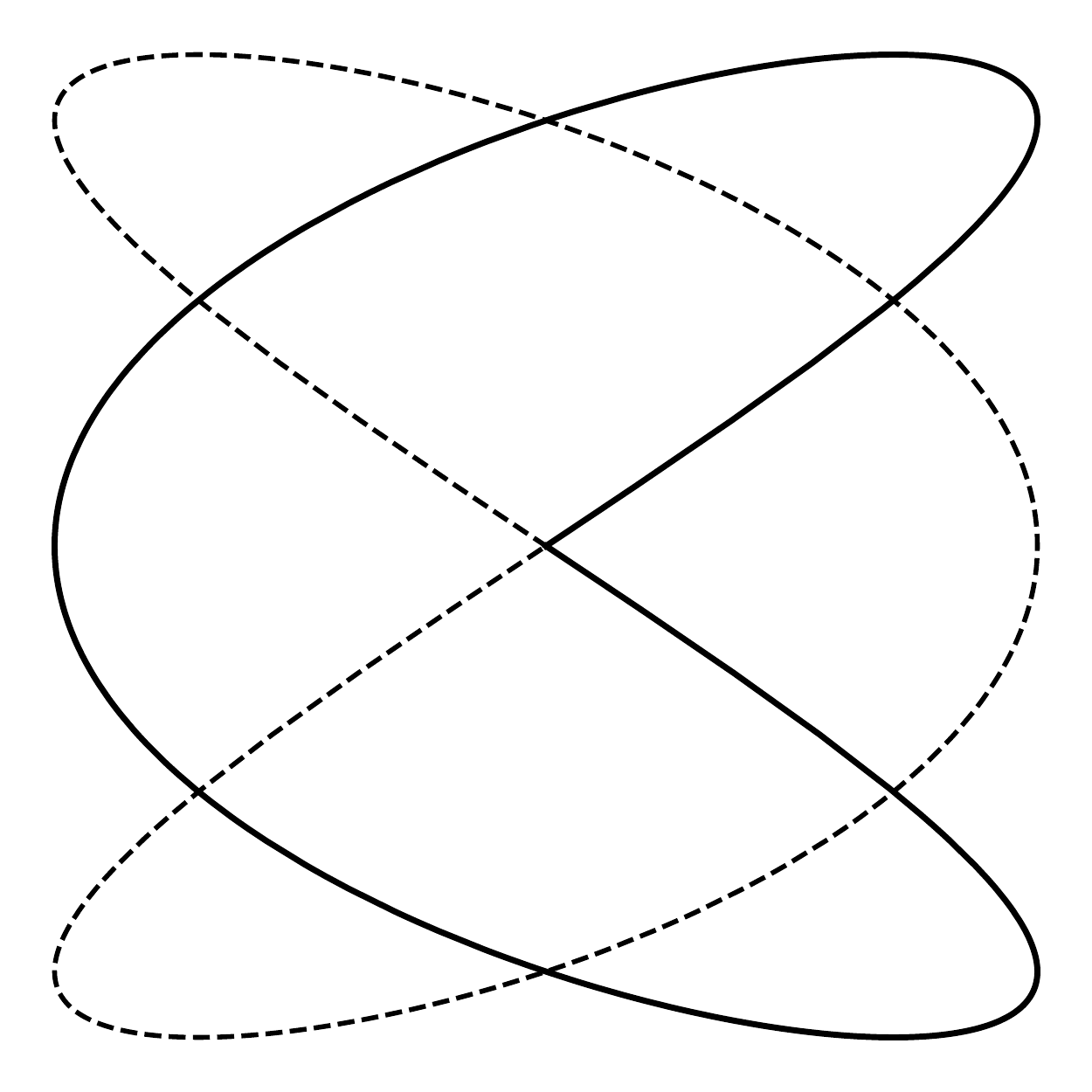}\caption{The loops $L_{1}$ (black) and $L_{2}$ (dashed)}%
\label{l1l2.fig}%
\end{figure}

The significance of (\ref{MMun}) is that the two loops $L_{1}$ and $L_{2}$ on
the right-hand side are simpler than the loop $L$. On the other hand, if one
is attempting to compute Wilson loop expectations recursively, the right-hand
side of (\ref{MMun}) cannot be considered as a \textquotedblleft
known\textquotedblright\ quantity, because it involves the \textit{expectation
of the product} of traces, rather than the \textit{product of the
expectations}. Thus, Theorem \ref{MMun.thm} is not especially useful in
computing Wilson loop expectations for a fixed rank $N$.

Nevertheless, it has been suggested at least since the work of 't~Hooft
\cite{tHooft} that quantum gauge theories with structure group $U(N)$ simplify
in the $N\rightarrow\infty$ limit. Specifically, it has been suggested that in
this limit, the Euclidean Yang--Mills path-integral concentrates onto a
\textit{single} connection (modulo gauge transformations), known as the
\textit{master field}. In the plane case, the structure of the master field
has been described by I. M. Singer \cite{Sing} and R. Gopakumar and D. Gross
\cite{GG,Gop}; see also A. N. Sengupta's paper \cite{Sengupta1}. Recently,
rigorous analyses of the master field on the plane have been given by M.
Anshelevich and A. N. Sengupta \cite{AS} and T. L\'{e}vy \cite{Levy}.
L\'{e}vy, in particular, shows in detail that the Wilson loop functionals
become deterministic in the large-$N$ limit.

In the large-$N$ limit, then, all variances and covariances go to zero,
meaning that there is no difference between an expectation of a product and a
product of the expectations. For the master field on the plane, the
Makeenko--Migdal equation takes the form%
\begin{equation}
\left(  \frac{\partial}{\partial t_{1}}-\frac{\partial}{\partial t_{2}}%
+\frac{\partial}{\partial t_{3}}-\frac{\partial}{\partial t_{4}}\right)
\tau(\mathrm{hol}(L))=\tau(\mathrm{hol}(L_{1}))\tau(\mathrm{hol}(L_{2})),
\label{MMmaster}%
\end{equation}
where $\tau(\cdot)$ is the limiting value of $\mathbb{E}\{\mathrm{tr}%
(\cdot)\}.$ L\'{e}vy shows (Section 9.4 of \cite{Levy}) that by using the
Makeenko--Migdal equation at each crossing of the loop (along with a simpler
relation that we describe in Theorem \ref{simpleDeriv.thm}), one can recover
the derivative of a Wilson loop functional with respect to the area of any one
face. This result leads to an effective procedure for (recursively) computing
the Wilson loop functionals for the master field.

S. Chatterjee has given a rigorous version of the Makeenko--Migdal equation
for lattice gauge theories in any dimension (Theorem 3.6 of \cite{Chatt}).
This equation takes a somewhat different form from the two-dimensional
continuum result in Theorem \ref{MMun.thm}.

\section{Two proofs using edge variables}

\subsection{The set-up}

\label{setup.sec}

The appearance of the heat kernel on $K$ in two-dimensional Yang--Mills theory
can be traced back at least to the work of Migdal. Equation (27) in
\cite{Migdal}, for example, can be understood as the expansion of the heat
kernel on $K$ in terms of characters, although Migdal does not make this
connection explicit. (See also Theorem 2 in \cite{DM} in the $SU(2)$ case.)
Eventually, the role of the heat kernel began to be explicitly identified,
leading to the notion of the \textquotedblleft heat kernel
action,\textquotedblright\ as in work of Menotti and Onofri \cite{MO}. Here,
the heat kernel on $K$ can be used as an alternative to the Wilson action as a
lattice approximation to the continuum Yang--Mills action. In the
two-dimensional case, however, the heat kernel action is invariant under
refinement of the lattice. (That is to say, the heat kernel action in two
dimensions is a \textquotedblleft fixed point for the renormalization group
flow.\textquotedblright) This invariance property of the heat kernel action
suggests that the heat kernel action on a fixed lattice actually gives the
exact continuum result for all Wilson loop variables traced out in that lattice.

A different approach to such results was developed simultaneously by L. Gross,
C. King, and A. Sengupta in \cite{GKS} and B. Driver in \cite{Dr}. These
authors use a gauge fixing to represent the continuum Yang--Mills measure on
the plane as a Gaussian measure, allowing for rigorous computations of Wilson
loop functionals in a continuum theory. Both papers confirm the role of the
heat kernel in the Wilson loop expectations. Driver then gave a formula
\cite[Theorem 6.4]{Dr} for the Wilson loop functional for a self-intersecting
loop traced out on a graph in the plane, under mild restrictions on the nature
of the edges involved. In the work of L\'{e}vy \cite{Levy1}, the author takes
Driver's formula as the definition of the Yang--Mills measure on a graph and
then uses the consistency of this measure under subdivision to construct a
continuous theory.

Driver's formula involves the heat kernel $\rho_{t}$ on the structure group
$K$ with respect to a fixed bi-invariant metric. That is to say, $\rho_{t}$
satisfies the heat equation%
\begin{equation}
\frac{\partial\rho_{t}}{\partial t}=\frac{1}{2}\mathrm{\Delta}\rho_{t},
\label{e.heat.eq}%
\end{equation}
where $\mathrm{\Delta}$ is the Laplacian associated to the given metric, and
for any continuous function $f$ on $K,$ we have%
\[
\lim_{t\rightarrow0}\int_{K}f(x)\rho_{t}(x)\,dx=f(\mathrm{id}),
\]
where $\mathrm{id}$ is the identity element of $K$ and where $dx$ is the
normalized Haar measure on $K.$ It will be important in our computations to
note that the heat kernel with respect to a bi-invariant metric on a compact
Lie group is conjugation invariant:
\begin{equation}
\rho_{t}(uxu^{-1})=\rho_{t}(x),\qquad\forall\;x,u\in K. \label{e.conj.inv}%
\end{equation}
This identity holds because the conjugation action of $K$ on itself is isometric
and fixes the origin.

We now consider the appropriate notion of a graph in the plane. By an edge we
will mean a continuous map $\gamma:[0,1]\rightarrow\mathbb{R}^{2},$ assumed to
be injective except possibly that $\gamma(0)=\gamma(1)$. We identify two edges
if they differ by an orientation-preserving reparametrization. Two edges that
differ by an orientation-reversing reparametrization are said to be inverses
of each other. A graph is then a finite collection of edges (and their
inverses) that meet only at their endpoints. Given a graph $\mathbb{G}$, we
choose arbitrarily one element out of each pair consisting of an edge and its
inverse. We then refer to the chosen edges as the positively oriented edges.

If $n$ denotes the number of positively oriented edges in $\mathbb{G}$,
Driver's result then says that the expectation value of any gauge-invariant
function of the parallel transport along the edges of $\mathbb{G}$ may be
computed as integration against a measure $\mu$ on $K^{n}.$ To compute $\mu,$
we associate to each positively oriented edge $e$ in $\mathbb{G}$ an
\emph{edge variable} $x$ taking values in $K$, and correspondingly associate $x^{-1}$ to the
inverse of $e$. We interpret the edge variable as the parallel transport of a
connection along the edge. Then $\mu$ is given by
\begin{equation}
d\mu=\left(  \prod\rho_{\left\vert F_{i}\right\vert }(h_{i})\right)
~d\mathbf{x}, \label{DriversForm}%
\end{equation}
where the product is over all the bounded faces $F_{i}$ of the graph, that is,
over all the bounded components of the complement of the graph in the plane.
Here $d\mathbf{x}$ denotes the product of normalized Haar measures in all the
edge variables, $\left\vert F_{i}\right\vert $ denotes the area of $F_{i},$
and $h_{i}$ denotes the \textquotedblleft holonomy\textquotedblright\ around
$F_{i},$ that is, the product of edge variables and their inverses going
around the boundary of $F_{i}$; in \cite{MM}, these discrete holonomies were
referred to as plaquettes. Note: since the Haar measure on $K$ is symmetric
(i.e.,\ invariant under $x\mapsto x^{-1} $), the measure $\mu$ is independent
of the choice of which edges in $\mathbb{G}$ are positively oriented.

\begin{remark}
Following the usual conventions in the field, we take the parallel transport
operation ``par" to be \emph{order reversing}. That is, if $\gamma_{1}%
\gamma_{2}$ means ``traverse $\gamma_{1}$ and then traverse $\gamma_{2}$,"
then
\[
\mathrm{par}(\gamma_{1}\gamma_{2}) = \mathrm{par}(\gamma_{2})\mathrm{par}%
(\gamma_{1}).
\]
(The reason for this convention is presumably that the convention for
concatenation of paths is contrary to the usual convention for function
composition, where $f_{1}\circ f_{2}$ means first do $f_{2}$ and then do
$f_{1}$.) The reader should keep in mind this order reversal in the
computations throughout the paper.
\end{remark}

It is harmless to assume that the boundary of each face $F_{i}$ of
$\mathbb{G}$ is connected, although as shown in \cite[pp. 591--592]{Dr} this
assumption is not actually necessary. If the boundary of $F_{i}$ is connected,
it is easy to see that the value of $\rho_{\left\vert F_{i}\right\vert }%
(h_{i})$ does not depend on where one starts on the boundary of $F_{i}$ or on
the direction one proceeds (since the heat kernel has the symmetry $\rho_t(x)=\rho_t(x^{-1})$
for all $x\in K$ and $t>0$).

\begin{definition}
\label{discreteGaugeInv.def} Let $\mathrm{V}(\mathbb{G})$ denote the set of
vertices of $\mathbb{G}$. A \textbf{discrete gauge transformation} is a
map $g:\mathrm{V}(\mathbb{G})\rightarrow K.$ For each discrete gauge
transformation $g,$ we define a transformation $\mathrm{\Psi} _{g}$ of the
edge variables of $\mathbb{G}$ as follows. If $e$ is an edge and $a_{e}$ is
the associated edge variable, we set
\[
\mathrm{\Psi} _{g}(a_{e})=g(v_{2})^{-1}a_{e}g(v_{1}),
\]
where $v_{1}$ and $v_{2}$ are the initial and final vertices of $e,$
respectively. If $f$ is a function of the edge variables, we say that $f$ is
\textbf{gauge invariant} if $f\circ\mathrm{\Psi} _{g}=f$ for every discrete
gauge transformation $g.$
\end{definition}

Note that if an edge $f$ is the inverse of another edge $e$---so that
$a_{f}=a_{e}^{-1}$---then $\mathrm{\Psi}_{g}(a_{f})=(\mathrm{\Psi}_{g}%
(a_{e}))^{-1}.$ Thus, the gauge transformation of the edge variables does not
depend on the choice of orientation of the edges.

If $f$ is a function of the edge variables of $\mathbb{G},$ we can associate a
function $\hat{f}$ on the space of connections for a trivial principal
$K$-bundle over the plane, by defining $\hat{f}%
(A)=f(\mathbf{a}(A)),$ where $\mathbf{a}(A)=\left\{  \mathrm{par}^{A}%
(e):e\in\mathbb{G}\right\}  ,$ and $\mathrm{par}^{A}(e)\in K$ is the parallel
transport of the connection $A$ along the edge $e\in\mathbb{G}.$ If
$g:\mathbb{R}^{2}\rightarrow K$ is a $C^{1}$ \textquotedblleft gauge
transformation\textquotedblright\ and $A^{g}=\mathrm{Ad}_{g}^{-1}A+g^{-1}dg$ is the
connection $A$ gauge transformed by $g,$ then
\[
\mathrm{par}^{A^{g}}(e)=g(v_{2})^{-1}\mathrm{par}^{A}(e)g(v_{1}),
\]
where $v_{1}$ and $v_{2}$ are the initial and final vertices of $e,$
respectively. By comparing this formula to Definition
\ref{discreteGaugeInv.def}, we see that if $f$ is invariant under (discrete)
gauge transformations for $\mathbb{G},$ then $\hat{f}$ is invariant under
(continuous) gauge transformations for $\mathbb{R}^{2}$ and thus constitutes a
valid observable for Yang--Mills theory over $\mathbb{R}^{2}.$ If the edges of
$\mathbb{G}$ satisfy a mild regularity property, Theorem 6.4 of \cite{Dr}
states that
\[
\mathbb{E}\{\hat{f}\}=\int_{K^{n}}f~d\mu,
\]
where the expectation value is with respect to a rigorously defined
Yang--Mills measure on the space of connections modulo gauge transformations.

The measure $\mu$ appears also in the work \cite{Levy} of L\'{e}vy. The
approach there is, however, different, in that L\'{e}vy takes Driver's formula
(\ref{DriversForm}) as the \textit{definition} of the Yang--Mills measure for
a graph in the plane and then uses consistency results to construct a
continuous object. Regardless of the approach used, once the measure $\mu$ has
been defined, it makes sense to integrate \textit{any} function $f$ of the
edge variables, whether or not $f$ has any special invariance property.

\subsection{A simple area-derivative formula}

Before coming to the Makeenko--Migdal equation itself, we record a simple
result that can be proven much more easily. This result was stated by Kazakov
in \cite[Equation 10]{KazakovU(N)}; it is also a special case of Corollary 6.5
of \cite{Levy}, but in this case, the proof simplifies dramatically. We
include a proof here for completeness and to give an indication of the
difficulties in computing area derivatives in general. In \cite{Levy},
L\'{e}vy shows that the master field (i.e.\ the large-$N$ limit of Yang--Mills
for $U(N)$) is completely characterized by the limiting Makeenko--Migdal
equation (\ref{MMmaster}) and the large-$N$ limit of (\ref{simple2}).

\begin{theorem}
[Unbounded Face Condition]\label{simpleDeriv.thm} Suppose $f$ is a smooth
function of the edge variables associated to a graph $\mathbb{G}$ and that $F$
is a bounded face of $\mathbb{G}$ that shares a positively oriented edge $e$
with the unbounded face. Let $a$ denote the edge variable associated to the
edge $e$ and let $t$ denote the area of $F.$ Then we have%
\begin{equation}
\frac{d}{dt}\int f~d\mu=\frac{1}{2}\int\mathrm{\Delta}^{a}f~d\mu,
\label{simple}%
\end{equation}
where $\mathrm{\Delta}^{a}$ denotes the Laplacian with respect to $a$ with the
other edge variables held constant.

In particular, suppose that $K=U(N),$ that $L$ is a loop traced out on
$\mathbb{G}$ in which the edge $e$ (which borders the unbounded face) is
traversed exactly once, and that $f=\mathrm{tr}(\mathrm{hol}(L))$. Then
\eqref{simple} reduces to%
\begin{equation}
\frac{d}{dt}\mathbb{E}\{\mathrm{tr}(\mathrm{hol}(L))\}=-\frac{1}{2}%
\mathbb{E}\{\mathrm{tr}(\mathrm{hol}(L))\}. \label{simple2}%
\end{equation}
Finally, if $K=U(N)$ and $L$ is a simple closed curve enclosing area $t$, we
have
\[
\mathbb{E}\{\mathrm{tr}(\mathrm{hol}(L))\}=e^{-t/2}.
\]

\end{theorem}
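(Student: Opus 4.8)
The plan is to prove the three assertions of Theorem \ref{simpleDeriv.thm} in order, since each builds on the previous one.

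\emph{Step 1: the area-derivative formula \eqref{simple}.} The starting point is Driver's formula \eqref{DriversForm}. The only place the area $t$ of the face $F$ enters $d\mu$ is through the factor $\rho_t(h)$, where $h$ is the holonomy around $F$. Differentiating under the integral sign and using the heat equation \eqref{e.heat.eq}, $\partial_t \rho_t = \tfrac12\Delta\rho_t$, gives
\[
\frac{d}{dt}\int f\,d\mu = \frac12\int f\,\bigl(\mathrm{\Delta}\rho_t\bigr)(h)\,\Bigl(\prod_{i\neq F}\rho_{|F_i|}(h_i)\Bigr)\,d\mathbf{x}.
\]
Now I want to move the Laplacian off of $\rho_t(h)$ and onto $f$. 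The key structural point is that $e$ borders the unbounded face, so $e$ appears in the boundary word of $F$ \emph{exactly once} (there is no other bounded face on the other side of $e$), and $a$ (the edge variable of $e$) appears in no other factor $\rho_{|F_i|}(h_i)$. Writing $h$ as a word in the edge variables, $h = w_1\, a^{\pm 1}\, w_2$ with $w_1,w_2$ independent of $a$, one checks that the Laplacian $\mathrm{\Delta}^a$ with respect to $a$, when applied to the function $a\mapsto \rho_t(w_1 a^{\pm1} w_2)$, reproduces $(\mathrm{\Delta}\rho_t)(h)$ — this uses bi-invariance of the metric (so that the Laplacian is invariant under left and right translation) and, for the $a^{-1}$ case, invariance under inversion, which holds because inversion is an isometry fixing the identity; this is exactly the symmetry \eqref{e.conj.inv} territory. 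Then integration by parts for the Laplacian $\mathrm{\Delta}^a$ on the compact group $K$ in the single variable $a$ (with respect to normalized Haar measure, against which $\mathrm{\Delta}^a$ is symmetric) transfers the Laplacian from $\rho_t(h)$ onto $f$, and since no other factor in $d\mu$ depends on $a$, this yields \eqref{simple}. I expect this transfer-and-integrate-by-parts step to be the main obstacle: one must be careful that $a$ genuinely appears only once and only in the $\rho_t(h)$ factor (which is precisely what the unbounded-face hypothesis buys), and one must handle the orientation ($a$ versus $a^{-1}$) correctly.

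\emph{Step 2: specialization to $f=\mathrm{tr}(\mathrm{hol}(L))$ for $K=U(N)$, giving \eqref{simple2}.} Here $\mathrm{hol}(L) = \mathrm{par}(L)$ is, up to the order-reversal convention, a product of edge variables; since $e$ is traversed exactly once by $L$, the edge variable $a$ (or $a^{-1}$) appears exactly once in this product, so $\mathrm{hol}(L)$ has the form $b\,a^{\pm 1}\,c$ with $b,c$ independent of $a$. Then $f(a) = \tfrac1N\mathrm{trace}(b\,a^{\pm1}\,c)$, and I compute $\mathrm{\Delta}^a f$ directly. Using the metric \eqref{HSN}, $\langle X,Y\rangle = N\,\mathrm{trace}(X^*Y)$, the Laplacian on $U(N)$ acts on the matrix-valued function $a\mapsto a$ by $\mathrm{\Delta}^a a = -a$ (the standard fact that for this normalization the quadratic Casimir acts as $-1$ on the defining representation; equivalently $\sum_\beta X_\beta^2 = -I$ where $\{X_\beta\}$ is an orthonormal basis of $\mathfrak{u}(N)$ for \eqref{HSN}), and similarly $\mathrm{\Delta}^a a^{-1} = -a^{-1}$ by the inversion symmetry. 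Since $b$ and $c$ are constant in $a$, $\mathrm{\Delta}^a f = -f$, and plugging into \eqref{simple} gives \eqref{simple2}. The only subtlety is pinning down the normalization constant $-1$; this follows from \eqref{HSN} together with the identity $\sum_\beta \mathrm{trace}(X_\beta^* A X_\beta) = -\mathrm{trace}(A)$ for $A\in M_N(\mathbb{C})$, which is a short linear-algebra computation with an orthonormal basis of $\mathfrak{u}(N)$.

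\emph{Step 3: the simple closed curve.} If $L$ is a simple closed curve enclosing area $t$, take $\mathbb{G}$ to be the graph whose single bounded face $F$ is the region enclosed by $L$; then $\mathrm{hol}(L) = h$ is the lone edge variable (after collapsing $L$ to a single loop-edge, or subdividing into finitely many edges — Driver's formula is consistent under subdivision), every edge of $F$ borders the unbounded face, and $L$ traverses it once, so \eqref{simple2} applies: $g(t) := \mathbb{E}\{\mathrm{tr}(\mathrm{hol}(L))\}$ satisfies $g'(t) = -\tfrac12 g(t)$. For the initial condition, as $t\to 0$ the heat kernel $\rho_t$ concentrates at the identity, so $\mathrm{hol}(L)\to \mathrm{id}$ in distribution and $g(0) = \mathrm{tr}(I) = 1$. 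Solving the ODE gives $g(t) = e^{-t/2}$, completing the proof.
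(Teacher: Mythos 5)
Your proposal is correct and follows essentially the same route as the paper: differentiate under the integral using the heat equation, observe that the unbounded-face hypothesis forces $a$ to occur in exactly one heat-kernel factor so that the Laplacian can be moved onto $f$ by integration by parts, then use $\sum_X X^2=-I$ for the normalization \eqref{HSN} and solve the resulting ODE with initial condition $1$. The only cosmetic difference is that the paper reparametrizes $\partial F$ so that $a$ sits at the end of the holonomy word, whereas you handle $a^{\pm 1}$ in general position via bi-invariance of the Laplacian.
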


The key idea in the proof of (\ref{simple}) is that because the edge $e$ lies
on the boundary of only one bounded face, the edge variable $a$ occurs in only
one of the heat kernels in Driver's formula. By contrast, a generic edge
variable lies in two different heat kernels, which is a substantial
complicating factor in the proof of the Makeenko--Migdal formula.

\begin{proof}
We may choose the orientation of the boundary of $F$ so that it contains the
edge $e$ (as opposed to $e^{-1}$) exactly once. (For example, referring to
Figure \ref{loopvar1.fig} below, we may take $F=F_{5}$ and $e=e_{7}$.) It is
harmless to assume that $e$ is the first edge traversed, in which case, since
parallel transport is order-reversing, the holonomy $h$ around $\partial F$
will have the form%
\[
h=\alpha a,
\]
where $\alpha$ is a word in edge variables other than $a.$ We then note that
$(\mathrm{\Delta}\rho_{t})(h)$ may be computed as%
\[
(\mathrm{\Delta}\rho_{t})(h)=\mathrm{\Delta}^{a}(\rho_{t}(\alpha a)).
\]

Thus, using Driver's formula \eqref{DriversForm} and differentiating under the
integral, we obtain%
\[
\frac{d}{dt}\int f~d\mu=\frac{1}{2}\int f~[\mathrm{\Delta}^{a}\rho_{t}(\alpha
a)]\prod_{F_{i}\neq F}\rho_{\left\vert F_{i}\right\vert }(h_{i})\,d\mathbf{x}%
.
\]
Now, since $e$ lies between $F$ and the unbounded face, the edge variable $a$
does not occur in any other heat kernel besides $\rho_{t}(\alpha a).$ Thus, if
we integrate by parts, the Laplacian does not hit any other heat kernel, but
hits only $f,$ giving (\ref{simple}).

Meanwhile, suppose $K=U(N)$ and $f$ is the normalized trace of the holonomy of
$L,$ where $L$ traverses the edge $e$ exactly once. If $L$ traverses $e$ in
the positive direction, then $f$ will have the form%
\[
f=\mathrm{tr}(\beta a\gamma),
\]
where $\beta$ and $\gamma$ are words in edge variables distinct from $a.$ Then%
\[
\mathrm{\Delta}^{a}f=\sum_{X}\mathrm{tr}(\beta aX^{2}\gamma),
\]
where $X$ ranges over an orthonormal basis for the Lie algebra $\mathfrak{k}%
=\mathfrak{u}(N).$ But a simple argument (e.g., Proposition 3.1 in \cite{DHK})
shows that if the inner product on $\mathfrak{u}(N)$ is normalized as in
(\ref{HSN}) we have%
\[
\sum_{X}X^{2}=-I,
\]
in which case (\ref{simple}) reduces to (\ref{simple2}). If $L$ traverses $e$
negatively, the argument is almost identical. Finally, if $L$ has only one
bounded face with area $t$, Driver's formula (\ref{DriversForm}) tells us that
at $t=0$, the holonomy concentrates at the identity, so that the normalized
trace of the holonomy is 1. \hfill$\square$
\end{proof}

\subsection{An abstract Makeenko--Migdal equation}

Suppose now that $\mathbb{G}$ is a graph in the plane and $v$ is a vertex of
$\mathbb{G}$ having four edges, where we count an edge $e$ twice if both the
initial and final vertices of $e$ are equal to $v$. We label the four
\textit{outgoing} edges at $v$ in cyclic order as $e_{1},$ $e_{2},$ $e_{3},$
and $e_{4}$. Any one of the four outgoing edges may be labeled as $e_{1}$ and
the cyclic ordering may be either clockwise or counter-clockwise. We then
label the four (not necessarily distinct) faces of $\mathbb{G}$ adjacent to
$v$ in cyclic order as $F_{1},$ $F_{2},$ $F_{3},$ and $F_{4}$, where $F_{1}$
is the face whose boundary contains $e_{1}$ and $e_{2}$, $F_{2}$ is the face
whose boundary contains $e_{2}$ and $e_{3}$, etc. Figure \ref{edgesandfaces.fig}
shows one such labeling. 

We assume for now that these edges are distinct; this assumption is removed in
Section \ref{generic.sec}. (More precisely, we assume not only that the
$e_{i}$'s are distinct as oriented edges, but also that $e_{i}\neq e_{j}^{-1}$
for $i\neq j$.) We also let $a_{i}$ denote the edge variable, with values in
$K,$ associated to $e_{i}.$ We write the collection $\mathbf{x}$ of all edge
variables in our graph as
\[
\mathbf{x}=(a_{1},a_{2},a_{3},a_{4},\mathbf{b}),
\]
where $\mathbf{b}$ is the tuple of all edge variables other than $a_{1},$
$a_{2},$ $a_{3},$ and $a_{4}.$ In \cite{Levy}, L\'{e}vy isolates a version of
the Makeenko--Migdal equation that is valid for an arbitrary compact structure
group $K,$ and in which the function does not have to be the trace of a holonomy.

\begin{figure}[ptb]
\centering
\includegraphics[
height=2.1845in,
width=2.1932in
]{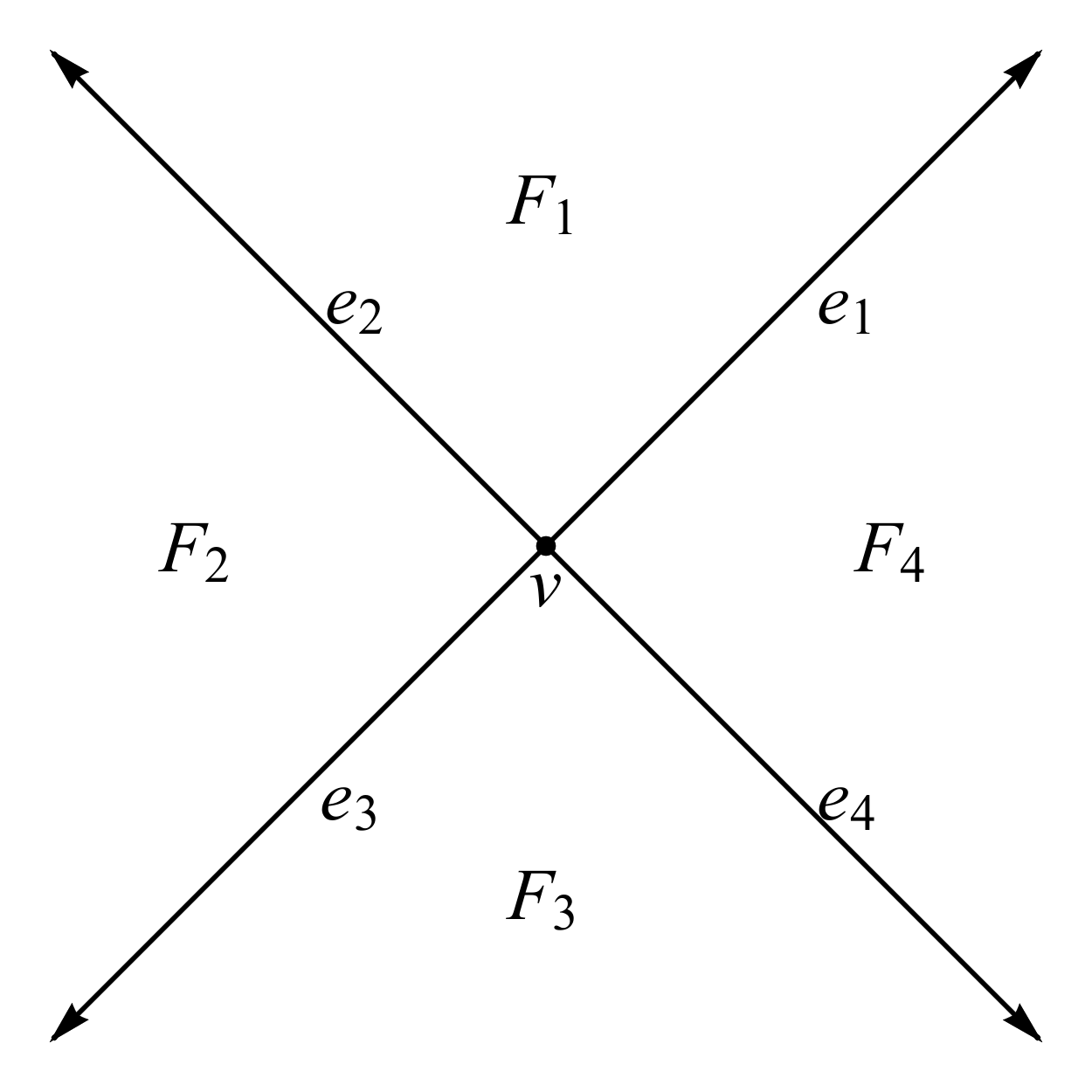}\caption{A possible labeling of the faces and edges adjacent
to $v$}%
\label{edgesandfaces.fig}%
\end{figure}

\begin{definition}
\label{extended.def}If the edges $e_{1},\ldots,e_{4}$ are distinct, a function
$f(a_{1},a_{2},a_{3},a_{4},\mathbf{b})$ of the edge variables has
\emph{extended gauge invariance at }$v$ if, for all $x\in K$,
\begin{equation}
f(a_{1},a_{2},a_{3},a_{4},\mathbf{b})=f(a_{1}x,a_{2},a_{3}x,a_{4}%
,\mathbf{b})=f(a_{1},a_{2}x,a_{3},a_{4}x,\mathbf{b}). \label{extendDef}%
\end{equation}

\end{definition}

By contrast, if the edges are distinct, $f$ has ordinary gauge invariance at
$v$ (i.e., invariance under a gauge transformation supported at the vertex $v
$) if
\begin{equation}
f(a_{1},a_{2},a_{3},a_{4},\mathbf{b})=f(a_{1}x,a_{2}x,a_{3}x,a_{4}%
x,\mathbf{b}) \label{ordinaryInv}%
\end{equation}
for all $x\in K.$ (Apply Definition \ref{discreteGaugeInv.def} with $g(v)=x$
and $g(v^{\prime})=\mathrm{id}$ for all $v\neq v^{\prime}$.) Clearly, extended
gauge invariance at $v$ implies ordinary gauge invariance at $v,$ but not vice versa.

Suppose, for example, that $f$ is the trace of the holonomy around a loop $L$
traced out on $\mathbb{G}$ starting from $v$. We assume $L$ has a simple
crossing at $v$. In that case, as explained in the introduction, it is possible to choose the cyclic ordering of the edges so that $L$ initially leaves $v$ along $e_{1},$ returns along $e_{4}^{-1},$
leaves again along $e_{2},$ and finally returns again along $e_{3}^{-1}$. Thus, $L$ must have the form%
\[
L=e_{1}Ae_{4}^{-1}e_{2}Be_{3}^{-1},
\]
where $A$ and $B$ are sequences of edges not belonging to $\{e_{1},e_{2}%
,e_{3},e_{4}\}.$ Since parallel transport is order-reversing, the trace of the
holonomy around $L$ is then represented by a function of the form
\begin{equation}
f(a_{1},a_{2},a_{3},a_{4},\mathbf{b}):=\mathrm{tr}(a_{3}^{-1}\beta a_{2}%
a_{4}^{-1}\alpha a_{1}), \label{holonomyf}%
\end{equation}
where $\alpha$ and $\beta$ are words in the $\mathbf{b}$ variables. This
function is easily seen to have extended gauge invariance at $v.$

\begin{definition}
\label{gradient.def}If $f$ is a smooth function on $K,$ the
\emph{left-invariant gradient} of $f,$ denoted $\nabla f,$ is the function
with values in the Lie algebra $\mathfrak{k}$ of $K$ given by%
\[
(\nabla f)(x)=\sum_{X}\left(  \left.  \frac{d}{ds}f(xe^{sX})\right\vert
_{s=0}\right)  ~X,
\]
where the sum is over any orthonormal basis of $\mathfrak{k}.$ More generally,
if $f$ is a smooth function of the edge variables and $a$ is one of the edge
variables, we let $\nabla^{a}f$ denote the left-invariant gradient of $f$ with
respect to $a$ with the other variables fixed. Finally, if $a$ and $b$ are two
distinct edge variables, $\nabla^{a}\cdot\nabla^{b}f$ is the scalar-valued
function defined by%
\[
(\nabla^{a}\cdot\nabla^{b}f)(a,b,\mathbf{c})=\sum_{X}\left.  \frac
{\partial^{2}}{\partial s\partial t}f(ae^{sX},be^{tX},\mathbf{c})\right\vert
_{s=t=0}%
\]
where $\mathbf{c}$ is the tuple of edge variables other than $a$ and $b.$
\end{definition}

If $f$ is smooth and has extended gauge invariance at $v$, then by
differentiating (\ref{extendDef}), we obtain%
\begin{equation}
\label{e.inf.ext.gauge.inv}\nabla^{a_{i}}f=-\nabla^{a_{i+2}}f,
\end{equation}
where $i+2$ is computed mod 4. Since, also, $\nabla^{a_{i}}$ commutes with
$\nabla^{a_{j}},$ we have%
\begin{align}
\nabla^{a_{i}}\cdot\nabla^{a_{j}}f  &  =-\nabla^{a_{i}}\cdot\nabla^{a_{j+2}%
}f=-\nabla^{a_{j+2}}\cdot\nabla^{a_{i}}f\nonumber\\
&  =\nabla^{a_{j+2}}\cdot\nabla^{a_{i+2}}f=\nabla^{a_{i+2}}\cdot
\nabla^{a_{j+2}}f, \label{doubleInv}%
\end{align}
even though $\nabla^{a_{j}}f$ does not necessarily have extended gauge invariance.

We are now ready to state (a special case of) L\'{e}vy's abstract form of the
Makeenko--Migdal equation.

\begin{theorem}
[T. L\'{e}vy]\label{MMab.thm}Suppose $\mathbb{G}$ is a graph in the plane and
$v$ is a vertex of $\mathbb{G}$ with four distinct edges emanating from $v.$
Label the four faces of $\mathbb{G}$ adjacent to $v$ in cyclic order as
$F_{1},\ldots,F_{4}$ and label the outgoing edges in cyclic order as
$e_{1},\ldots,e_{4},$ with $e_{1}$ lying between $F_{4}$ and $F_{1}$, $e_{2}$
lying between $F_{1}$ and $F_{2}$, etc. Then if $f$ is a smooth function of
the edge variables of $\mathbb{G}$ having extended gauge invariance at $v$, we
have%
\begin{equation}
\left(  \frac{\partial}{\partial t_{1}}-\frac{\partial}{\partial t_{2}}%
+\frac{\partial}{\partial t_{3}}-\frac{\partial}{\partial t_{4}}\right)  \int
f~d\mu=-\int\nabla^{a_{1}}\cdot\nabla^{a_{2}}f~d\mu, \label{e.MM.abstract}%
\end{equation}
where $t_{i}$ is the area of $F_{i},$ $i=1,\ldots,4.$
\end{theorem}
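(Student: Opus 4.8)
The plan is to differentiate Driver's formula \eqref{DriversForm} with respect to each of the areas $t_1,\ldots,t_4$, use the heat equation \eqref{e.heat.eq} to turn each area-derivative into $\tfrac12\mathrm{\Delta}$ applied to the corresponding face heat kernel, and then integrate by parts to move the Laplacians off the heat kernels and onto $f$ (plus the other heat kernels). The crucial bookkeeping is that each of the four edges $e_1,\ldots,e_4$ borders exactly two of the faces $F_1,\ldots,F_4$: in cyclic order, $e_1$ borders $F_4$ and $F_1$, $e_2$ borders $F_1$ and $F_2$, and so on. So the edge variable $a_i$ appears in exactly two of the heat kernels $\rho_{t_1}(h_1),\ldots,\rho_{t_4}(h_4)$, and in no others (the remaining faces touching $v$ are already accounted for among $F_1,\dots,F_4$, and faces not adjacent to $v$ do not involve $a_1,\dots,a_4$). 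Writing the Laplacian $\mathrm{\Delta}$ in terms of a fixed orthonormal basis $\{X\}$ of $\mathfrak{k}$ as a sum of right- or left-invariant second derivatives, I would express $(\mathrm{\Delta}\rho_{t_i})(h_i)$ as a sum over $\{X\}$ of second derivatives in the two edge variables that occur in $h_i$, then integrate by parts in those variables.

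Concretely, after differentiating under the integral I get
\[
\left(\frac{\partial}{\partial t_1}-\frac{\partial}{\partial t_2}+\frac{\partial}{\partial t_3}-\frac{\partial}{\partial t_4}\right)\int f\,d\mu=\frac12\sum_{i=1}^4(-1)^{i+1}\int f\,[\mathrm{\Delta}^{(i)}\rho_{t_i}(h_i)]\prod_{j\ne i}\rho_{t_j}(h_j)\,d\mathbf{x},
\]
where $\mathrm{\Delta}^{(i)}$ acts through the two edge variables appearing in $h_i$. Integrating each term by parts and collecting the results, the second derivatives that land on $f$ will be of the form $\nabla^{a}\cdot\nabla^{a'}f$ where $a,a'$ are the two $v$-adjacent edges on the boundary of $F_i$ near $v$ (namely $\{a_{i},a_{i+1}\}$ for $F_i$, with indices mod $4$), together with pure terms $\mathrm{\Delta}^{a}f$ and cross terms where the Laplacian is split between a heat kernel and $f$ or between two heat kernels. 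The design of the alternating signs is exactly what makes the unwanted terms cancel: the pure Laplacian terms $\mathrm{\Delta}^{a_k}f$ and the heat-kernel/heat-kernel and heat-kernel/$f$ cross terms each get counted with opposite signs from the two faces sharing the relevant edge, and the extended gauge invariance identities \eqref{e.inf.ext.gauge.inv} and \eqref{doubleInv} collapse the surviving $f$-terms. What is left is a multiple of $\nabla^{a_1}\cdot\nabla^{a_2}f$; tracking the signs and the factor $\tfrac12$ (and using $\nabla^{a_i}\cdot\nabla^{a_{i+1}}f=\nabla^{a_{i+2}}\cdot\nabla^{a_{i+3}}f$ from \eqref{doubleInv} to identify all four $\nabla^a\cdot\nabla^{a'}$ contributions with $\pm\nabla^{a_1}\cdot\nabla^{a_2}f$) gives the right-hand side $-\int\nabla^{a_1}\cdot\nabla^{a_2}f\,d\mu$.

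I expect the main obstacle to be the careful handling of the integration by parts when the same edge variable $a_i$ sits inside two different heat kernels and also possibly inside $f$: the Laplacian $\mathrm{\Delta}^{(i)}$ applied to $\rho_{t_i}(h_i)$, once integrated by parts against the remaining integrand, produces a first-derivative-times-first-derivative cross term between the two copies of $a_i$'s neighbours and between those and $f$, and one must be sure these are organized so the alternating sum kills everything except the single $\nabla^{a_1}\cdot\nabla^{a_2}f$ term. Two points make this manageable and keep the argument \emph{local}: first, by the conjugation invariance \eqref{e.conj.inv} of $\rho_t$ one may cyclically rotate and, if needed, invert each $h_i$ so that the two $v$-adjacent edge variables appear in a convenient position (e.g.\ at the two ends), so the relevant derivatives are genuinely derivatives in $a_i,a_{i+1}$ only; second, the commutativity of $\nabla^{a}$ with $\nabla^{a'}$ for $a\ne a'$, together with \eqref{e.inf.ext.gauge.inv}, lets one replace $\nabla^{a_3}$ by $-\nabla^{a_1}$ and $\nabla^{a_4}$ by $-\nabla^{a_2}$ throughout, which is the algebraic engine behind the cancellations. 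Once the cancellation pattern is verified on the four face-contributions, the theorem follows.
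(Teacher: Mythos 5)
Your proposal is correct and follows essentially the same route as the paper's first edge-variable proof: differentiate under the integral, use the heat equation, express $(\mathrm{\Delta}\rho_{t_i})(h_i)$ via derivatives in the two $v$-adjacent edge variables (the paper makes this precise as $\frac14(\nabla^{a_i}-\nabla^{a_{i+1}})^2$ acting on $\rho_{t_i}(h_i)$, using conjugation invariance), integrate by parts twice, and kill the unwanted terms via the alternating signs together with the extended-gauge-invariance identities \eqref{e.inf.ext.gauge.inv} and \eqref{doubleInv}. The cancellation pattern you anticipate is exactly the one that occurs, so no further comment is needed.
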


As usual, we set $\partial/\partial t_{i}$ equal to zero if $F_{i}$ is the
unbounded face. A version of the theorem still holds even if the edges
$e_{1},\ldots,e_{4}$ are not distinct; see Section \ref{generic.sec}. Note
that $f$ is not assumed to have any special invariance property at any vertex
other than $v.$

Theorem \ref{MMab.thm} is a special case of Proposition 6.22 in \cite{Levy}.
Specifically, since the Yang--Mills measure does not depend on the orientation
of the plane, it is harmless to assume that the faces $F_{1},$ $F_{2},$
$F_{3},$ $F_{4}$ in our labeling scheme occur in counterclockwise order, as in
Figure \ref{mmplot.fig}. We may take the set $I$ in Levy's Proposition 6.22 to
be $\{e_{1},e_{3}\},$ as in Figure 25 in \cite{Levy}. Then the left-hand side
of Proposition 6.22 is actually the negative of the usual alternating sum of
area-derivatives. On the right-hand side of Proposition 6.22, meanwhile, there
is only one term in the sum, namely $\int\mathrm{\Delta}^{e_{1};e_{2}}f~d\mu$,
which corresponds to $\int\nabla^{a_{1}}\cdot\nabla^{a_{2}}f~d\mu$ in our notation.

Note that since $f$ is assumed to have extended gauge invariance at $v$, we
have, as in (\ref{doubleInv}),%
\begin{equation}
\nabla^{a_{1}}\cdot\nabla^{a_{2}}f=-\nabla^{a_{2}}\cdot\nabla^{a_{3}}%
f=\nabla^{a_{3}}\cdot\nabla^{a_{4}}f=-\nabla^{a_{4}}\cdot\nabla^{a_{1}}f.
\label{gradABforms}%
\end{equation}
If we specialize Theorem \ref{MMab.thm} to the case in which $K=U(N)$ and $f$
is as in (\ref{holonomyf}), we find that%
\begin{align}
&  \left(  \frac{\partial}{\partial t_{1}}-\frac{\partial}{\partial t_{2}%
}+\frac{\partial}{\partial t_{3}}-\frac{\partial}{\partial t_{4}}\right)
\int\mathrm{tr}(a_{3}^{-1}\beta a_{2}a_{4}^{-1}\alpha a_{1})~d\mu\nonumber\\
&  =-\sum_{X}\int\mathrm{tr}(a_{3}^{-1}\beta a_{2}Xa_{4}^{-1}\alpha
a_{1}X)~d\mu, \label{MMholonomy}%
\end{align}
where the sum is over any orthonormal basis $\{X\}$ for $\mathfrak{u}(N)$. But
an elementary argument (e.g. \cite[Proposition 3.1]{DHK}) shows that if we
normalize the inner product on $\mathfrak{u}(N)$ as in (\ref{HSN}), then
\begin{equation}
\sum_{X}XCX=-\mathrm{tr}(C)I \label{magic}%
\end{equation}
for any $N\times N$ matrix $C.$ Thus, (\ref{MMholonomy}) reduces to
\begin{align}
&  \left(  \frac{\partial}{\partial t_{1}}-\frac{\partial}{\partial t_{2}%
}+\frac{\partial}{\partial t_{3}}-\frac{\partial}{\partial t_{4}}\right)
\int\mathrm{tr}(a_{3}^{-1}\beta a_{2}a_{4}^{-1}\alpha a_{1})~d\mu\nonumber\\
&  =\int\mathrm{tr}(a_{4}^{-1}\alpha a_{1})\mathrm{tr}(a_{3}^{-1}\beta
a_{2})~d\mu, \label{splitTrace}%
\end{align}
which is---in light of Driver's formula---just the Makeenko--Migdal equation
for $U(N),$ as in Theorem \ref{MMun.thm}.

The goal of this section is to give two short proofs of Theorem \ref{MMab.thm}%
. In \cite{Levy}, L\'{e}vy develops a method of differentiating any function
with respect to the area $t_{i}$ of some face $F_{i}.$ Specifically, if $f$ is
any smooth function of the edge variables---which need not have any special
invariance property---L\'{e}vy shows that%
\begin{equation}
\frac{\partial}{\partial t_{i}}\int f~d\mu=\int Df~d\mu,
\label{LevyDerivative}%
\end{equation}
where $D$ is a certain differential operator. (See Corollary 6.5 in
\cite{Levy}.) The formula for $D$ involves the choice of a spanning tree in
$\mathbb{G}$ and a sum over a sequence of adjacent faces proceeding from
$F_{i}$ to the unbounded face. Thus, $D$ contains, in general, derivatives
involving edges far from the vertex in question.

L\'{e}vy then specializes his result to the case where $f$ has extended gauge
invariance at $v$ and takes the alternating sum of derivatives around a
vertex. At that point, a substantial cancellation occurs: all derivatives of
$f$ drop out, except for derivatives involving edges coming out of the
crossing, and L\'{e}vy then obtains the abstract Makeenko--Migdal equation of
Theorem \ref{MMab.thm}. (See the proof of Proposition 6.22 in \cite{Levy}.)

Our strategy for a simplified proof of Theorem \ref{MMab.thm} is to think that
if the cancellation described in the previous paragraph actually occurs, it
should be possible to see the cancellation \textquotedblleft
locally,\textquotedblright\ that is, in such a way that derivatives involving
far away edges never occur in the first place. Of course, L\'{e}vy's formula
(\ref{LevyDerivative}) is useful for various computations, notably the
computation of finite-$N$ Wilson loop variables in Section 6.9 of \cite{Levy}.
Nevertheless, we do not use this result in our proofs of the Makeenko--Migdal
equation \eqref{MMun}. The local nature of our argument allows us to prove a
new rigorous result, namely that the Makeenko--Migdal equation holds also for
Yang--Mills theory on an arbitrary compact surface, as shown in our paper
\cite{DHKsurf}. In particular, our argument in the plane case does not make
any reference to the unbounded face.

It is interesting to note that although the function $f$ in Theorem
\ref{MMab.thm} is assumed to have extended gauge invariance at $v,$ the
function $\nabla^{a_{1}}\cdot\nabla^{a_{2}}f$ occurring on the right-hand side
of (\ref{e.MM.abstract}), does not necessarily have this invariance. In
(\ref{splitTrace}), for example, the product of traces on the right-hand side
is not invariant under the transformation sending $a_{1}$ to $a_{1}x$ and
$a_{3} $ to $a_{3}x$ (while leaving the other edge variables unchanged). On
the other hand, as discussed in Section \ref{setup.sec}, the most natural
application of Theorem \ref{MMab.thm} is to a function $f$ that is gauge
invariant (in addition to having extended gauge invariance at $v$). In that
case, it is natural to expect that the function $\nabla^{a_{1}}\cdot
\nabla^{a_{2}}f$ will also be gauge invariant, so that the right-hand side of
(\ref{e.MM.abstract}) can be interpreted as the expectation value of a
functional on the space of connections modulo gauge transformations with
respect to the Yang--Mills measure. This expectation is fulfilled in the
following result.

\begin{proposition}
\label{gaugeInvPreserved.prop}Suppose $f$ is a function of the edge variables
of $\mathbb{G}$ that is gauge invariant in the sense of Definition
\ref{discreteGaugeInv.def}. Let $v$ be a vertex of $\mathbb{G}$, let $e$ and
$f$ be two distinct outgoing edges of $\mathbb{G}$ at $v$ such that $e\neq
f^{-1},$ and let $a$ and $b$ be the associated edge variables. Then the
function $\nabla^{a}\cdot\nabla^{b}f$ is also gauge invariant.
\end{proposition}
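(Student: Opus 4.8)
The plan is to verify the defining identity $(\nabla^{a}\cdot\nabla^{b}f)\circ\mathrm{\Psi}_{g}=\nabla^{a}\cdot\nabla^{b}f$ directly, for an arbitrary discrete gauge transformation $g$, by tracking how the one-parameter right-translations used to define $\nabla^{a}$ and $\nabla^{b}$ interact with $\mathrm{\Psi}_{g}$. Write $v$ for the common initial vertex of the two outgoing edges at $v$ (with edge variables $a$ and $b$), and let $w$ and $w'$ be their respective terminal vertices. Since both edges are outgoing at $v$, Definition~\ref{discreteGaugeInv.def} gives
\[
\mathrm{\Psi}_{g}(a)=g(w)^{-1}a\,g(v),\qquad \mathrm{\Psi}_{g}(b)=g(w')^{-1}b\,g(v),
\]
so the \emph{same} group element $g(v)$ sits on the right in both.

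The one computation that matters is that right multiplication by a one-parameter subgroup passes through $\mathrm{\Psi}_{g}$ at the price of an $\mathrm{Ad}$: using $g(v)\,e^{sX}=e^{s\,\mathrm{Ad}_{g(v)}X}\,g(v)$,
\[
\mathrm{\Psi}_{g}(a)\,e^{sX}=g(w)^{-1}\bigl(a\,e^{s\,\mathrm{Ad}_{g(v)}X}\bigr)g(v)=\mathrm{\Psi}_{g}\!\bigl(a\,e^{s\,\mathrm{Ad}_{g(v)}X}\bigr),
\]
and likewise $\mathrm{\Psi}_{g}(b)\,e^{tX}=\mathrm{\Psi}_{g}\!\bigl(b\,e^{t\,\mathrm{Ad}_{g(v)}X}\bigr)$, with the \emph{same} $\mathrm{Ad}_{g(v)}$ in both, precisely because $v$ is the common initial vertex. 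Substituting these into the definition of $\nabla^{a}\cdot\nabla^{b}f$ evaluated at the gauge-transformed configuration $\mathrm{\Psi}_{g}(\mathbf{x})$, and then using $f\circ\mathrm{\Psi}_{g}=f$ to remove the $\mathrm{\Psi}_{g}$'s, one is left with
\[
\bigl((\nabla^{a}\cdot\nabla^{b}f)\circ\mathrm{\Psi}_{g}\bigr)(\mathbf{x})=\sum_{X}\left.\frac{\partial^{2}}{\partial s\,\partial t}\,f\bigl(a\,e^{s\,\mathrm{Ad}_{g(v)}X},\,b\,e^{t\,\mathrm{Ad}_{g(v)}X},\,\mathbf{c}\bigr)\right|_{s=t=0},
\]
with $X$ running over an orthonormal basis of $\mathfrak{k}$.

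To finish I would re-index the sum by $Y:=\mathrm{Ad}_{g(v)}X$. Since the chosen metric on $K$ is bi-invariant, hence $\mathrm{Ad}$-invariant (for $U(N)$ with the normalization \eqref{HSN} this is the one-line check using $u^{\ast}=u^{-1}$), the map $\mathrm{Ad}_{g(v)}$ is orthogonal, so $Y$ again ranges over an orthonormal basis of $\mathfrak{k}$; and the sum defining $\nabla^{a}\cdot\nabla^{b}f$ in Definition~\ref{gradient.def} is independent of which orthonormal basis is chosen, being the trace of the bilinear form $(Y,Z)\mapsto\frac{\partial^{2}}{\partial s\,\partial t}f(a\,e^{sY},b\,e^{tZ},\mathbf{c})\big|_{s=t=0}$. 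Hence the right-hand side above equals $(\nabla^{a}\cdot\nabla^{b}f)(\mathbf{x})$, which is what we wanted.

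I do not expect a real obstacle here; the whole content is bookkeeping. The two points to be careful with are (i) keeping each perturbing factor $e^{sX}$, $e^{tX}$ on the correct (right) side of $a$ and $b$ while commuting it past $\mathrm{\Psi}_{g}$, and (ii) recognizing that the hypothesis that both edges are outgoing at $v$ is exactly what makes the conjugating element $g(v)$ the same for $a$ and $b$, which is what keeps the ``diagonal'' structure of the sum (the same basis vector $X$ differentiated in the $a$-slot and the $b$-slot). If instead one edge were incoming and the other outgoing, the two perturbations would be twisted by $\mathrm{Ad}$ of \emph{different} vertex-values of $g$, and the re-indexing in the last step would break down. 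The assumptions that the two edges are distinct and that neither is the inverse of the other are used only so that $a$ and $b$ are genuinely independent edge variables, so that $\nabla^{a}\cdot\nabla^{b}$ is defined at all.
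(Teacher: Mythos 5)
Your proof is correct and is essentially the paper's own argument: the paper packages your computation into the intertwining identity $\mathrm{\Phi}_{s,t}^{X}\circ\mathrm{\Psi}_{g}=\mathrm{\Psi}_{g}\circ\mathrm{\Phi}_{s,t}^{\mathrm{Ad}_{g(v)}(X)}$ and likewise concludes by noting that $\mathrm{Ad}_{g(v)}$ permutes orthonormal bases of $\mathfrak{k}$, exactly as in your re-indexing step. Your closing remarks about the role of the common initial vertex $v$ and of the hypothesis $e\neq f^{-1}$ also match the paper's discussion (cf.\ Remark \ref{nongenericGauge.rem}).
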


A version of this proposition holds even if $e=f^{-1}$; see Remark
\ref{nongenericGauge.rem} below.

\begin{proof} For each~$X\in\mathfrak{k}$ and $s,t\in\mathbb{R},$ define a transformation
$\mathrm{\Phi}_{s,t}^{X}:K^{n}\rightarrow K^{n}$ of the edge variables by
replacing $a$ by $ae^{sX}$ and $b$ by $be^{tX},$ while leaving all other edge
variables unchanged. Then
\begin{equation}
\nabla^{a}\cdot\nabla^{b}f=\sum_{X}\left.  \frac{\partial^{2}}{\partial
s~\partial t}f\circ\mathrm{\Phi}_{s,t}^{X}\right\vert _{s=t=0}
\label{gradabComputation}%
\end{equation}
where the sum runs over $X$ in an orthonormal basis of $\mathfrak{k}.$ For
each discrete gauge transformation $g,$ let $\mathrm{\Psi}_{g}:K^{n}%
\rightarrow K^{n}$ be the associated transformation of the edge variables, as
in Definition \ref{discreteGaugeInv.def}. We claim that for all $s,$ $t,$ $X,$
and $g,$ the following identity holds:
\begin{equation}
\mathrm{\Phi}_{s,t}^{X}\circ\mathrm{\Psi}_{g}=\mathrm{\Psi}_{g}\circ
\mathrm{\Phi}_{s,t}^{\mathrm{Ad}_{g(v)}(X)}. \label{gaugeIdentity}%
\end{equation}

To verify (\ref{gaugeIdentity}), note that
\[
\mathrm{\Psi}_{g}(a,b,\mathbf{c})=(g(v_{2})ag(v),g(v_{2}^{\prime
})bg(v),\mathbf{c}_{g}),
\]
where $v_{2}$ and $v_{2}^{\prime}$ are the final vertices of $e$ and $f,$
respectively, $\mathbf{c}$ denotes the collection of edge variables distinct
from $a$ and $b$, and $\mathbf{c}_{g}$ denotes the $\mathbf{c}$-variables
transformed by the discrete gauge transform, $g.$ Thus,%
\begin{align*}
\mathrm{\Phi}_{s,t}^{X}\circ\mathrm{\Psi}_{g}(a,b,\mathbf{c})  &
=(g(v_{2})ag(v)e^{sX},g(v_{2}^{\prime})bg(v)e^{tX},\mathbf{c}_{g})\\
&  =(g(v_{2})ag(v)e^{sX}g(v)^{-1}g(v),g(v_{2}^{\prime})bg(v)e^{tX}%
g(v)^{-1}g(v),\mathbf{c}_{g})\\
&  =(g(v_{2})ae^{s\mathrm{Ad}_{g(v)}(X)}g(v),g(v_{2}^{\prime})be^{t\mathrm{Ad}%
_{g(v)}(X)}g(v),\mathbf{c}_{g})\\
&  =\mathrm{\Psi}_{g}\circ\mathrm{\Phi}_{s,t}^{\mathrm{Ad}_{g(v)}%
(X)}(a,b,\mathbf{c}).
\end{align*}

With the identity (\ref{gaugeIdentity}) in hand, we compute that%
\begin{align*}
(\nabla^{a}\cdot\nabla^{b}f)\circ\mathrm{\Psi}_{g}  &  =\sum_{X}\left.
\frac{\partial^{2}}{\partial s~\partial t}\left[  f\circ\mathrm{\Phi}%
_{s,t}^{X}\circ\mathrm{\Psi}_{g}\right]  \right\vert _{s=t=0}\\
&  =\sum_{X}\left.  \frac{\partial^{2}}{\partial s~\partial t}\left[
f\circ\mathrm{\Psi}_{g}\circ\mathrm{\Phi}_{s,t}^{\mathrm{Ad}_{g(v)}%
(X)}\right]  \right\vert _{s=t=0}\\
&  =\nabla^{a}\cdot\nabla^{b}\left[  f\circ\mathrm{\Psi}_{g}\right]  ,
\end{align*}
where in the last equality we have used that $\mathrm{Ad}_{g(v)}(X)$ runs over an
orthonormal basis of $\mathfrak{k}$ when $X$ does. If $f$ is gauge invariant
(i.e., $f=f\circ\mathrm{\Psi}_{g}),$ then the previously displayed equation
reduces to $(\nabla^{a}\cdot\nabla^{b}f)\circ\mathrm{\Psi}_{g}=\nabla^{a}%
\cdot\nabla^{b}f$ which shows $\nabla^{a}\cdot\nabla^{b}f$ is also gauge
invariant.\hfill$\square$
\end{proof}

\begin{remark}
\label{nongenericGauge.rem}If the edge $e$ is equal to $f^{-1},$ then the edge
variable $b$ associated to $f$ is not an indepedent variable from $a.$ In that
case, as explained in Section \ref{interpret.sec}, the natural way to define
$\nabla^{a}\cdot\nabla^{b}f$ is to use (\ref{gradabComputation}), but where
now $\mathrm{\Phi} _{s,t}^{X}$ replaces $a$ by $e^{-tX}ae^{sX}$ and leaves all
other (independent) edge variables unchanged. With this definition,
Proposition \ref{gaugeInvPreserved.prop} still holds, with a small
modification of the preceding proof.
\end{remark}

\subsection{Two \textquotedblleft local\textquotedblright\ proofs of the
theorem\label{edge.sec}}

We consider at first the \textquotedblleft generic\textquotedblright\ case, in
which the faces $F_{1},$ $F_{2},$ $F_{3}$, and $F_{4}$ are distinct and
bounded, and the edges $e_{1},$ $e_{2},$ $e_{3},$ and $e_{4}$ from $v$ are
distinct. (These assumptions are lifted in Section \ref{generic.sec}.) In that
case, the boundary of $F_{i}$ may be represented by a loop of the form%
\begin{equation}
\partial F_{i}=e_{i}A_{i}e_{i+1}^{-1}, \label{boundaryF}%
\end{equation}
where $A_{i}$ is a sequence of edges not belonging to $\{e_{1},e_{2}%
,e_{3},e_{4}\},$ where the index $i$ is understood to be in $\mathbb{Z}/4.$
Since parallel transport is order reversing, the holonomy $h_{i}$ around
$\partial F_{i}$ is represented by an expression of the form%
\begin{equation}
h_{i}=a_{i+1}^{-1}\alpha_{i}a_{i},\quad i=1,2,3,4, \label{hi}%
\end{equation}
where $\alpha_{i}$ is a word in the $\mathbf{b}$ variables (i.e., the edge
variables not belonging to $\{a_{1},a_{2},a_{3},a_{4}\}$). Furthermore, none
of the variables $a_{1},a_{2},a_{3},a_{4}$ appears in any holonomy other than
ones associated to $F_{1},F_{2},F_{3},F_{4}$. Thus, the Yang--Mills measure
$\mu$ takes the form%
\begin{equation}
d\mu=\rho_{t_{1}}(a_{2}^{-1}\alpha_{1}a_{1})\rho_{t_{2}}(a_{3}^{-1}\alpha
_{2}a_{2})\rho_{t_{3}}(a_{4}^{-1}\alpha_{3}a_{3})\rho_{t_{4}}(a_{1}^{-1}%
\alpha_{4}a_{4})\nu(\mathbf{b})~d\mathbf{x,} \label{muInAVariables}%
\end{equation}
where $d\mathbf{x}$ is the product of the normalized Haar measures in all the
edge variables, and $\nu(\mathbf{b})$ is a product of heat kernels in
$\mathbf{b}$ variables.

Our proofs based on the edge variables are based on the following
\textquotedblleft local\textquotedblright\ version of the abstract
Makeenko--Migdal equation. Since the local structure of the Yang--Mills
measure on an arbitrary compact surface is the same as on the plane, Theorem
\ref{localMM.thm} can be applied also on surfaces. This observation leads to a
proof of the Makeenko--Migdal equation over surfaces, as worked out in
\cite{DHKsurf}.

\begin{theorem}
[Local Abstract Makeenko--Migdal Equation]\label{localMM.thm}Suppose
$f:K^{4}\rightarrow\mathbb{C}$ is a smooth function satisfying the following
\textquotedblleft extended gauge invariance\textquotedblright\ property:%
\[
f(a_{1},a_{2},a_{3},a_{4})=f(a_{1}x,a_{2},a_{3}x,a_{4})=f(a_{1},a_{2}%
x,a_{3},a_{4}x)
\]
for all $\mathbf{a}=(a_{1},a_{2},a_{3},a_{4})$ in $K^{4}$ and all $x$ in $K.$
For each fixed $\mathbf{\alpha}=(\alpha_{1},\alpha_{2},\alpha_{3},\alpha_{4})
$ in $K^{4}$ and $\mathbf{t}=(t_{1},t_{2},t_{3},t_{4})$ in $(\mathbb{R}%
^{+})^{4},$ define a measure $\mu_{\mathbf{\alpha},\mathbf{t}}$ on $K^{4}$ by%
\begin{equation}
\label{e.def.mu_alpha}d\mu_{\mathbf{\alpha},\mathbf{t}}(\mathbf{a}%
)=\rho_{t_{1}}(a_{2}^{-1}\alpha_{1}a_{1})\rho_{t_{2}}(a_{3}^{-1}\alpha
_{2}a_{2})\rho_{t_{3}}(a_{4}^{-1}\alpha_{3}a_{3})\rho_{t_{4}}(a_{1}^{-1}%
\alpha_{4}a_{4})~d\mathbf{a},
\end{equation}
where $d\mathbf{a}$ is the normalized Haar measure on $K^{4}.$ Then for all
$\mathbf{\alpha}\in K^{4},$ we have
\begin{equation}
\left(  \frac{\partial}{\partial t_{1}}-\frac{\partial}{\partial t_{2}}%
+\frac{\partial}{\partial t_{3}}-\frac{\partial}{\partial t_{4}}\right)
\int_{K^{4}}f~d\mu_{\mathbf{\alpha},\mathbf{t}}=-\int_{K^{4}}\nabla^{a_{1}%
}\cdot\nabla^{a_{2}}f~d\mu_{\mathbf{\alpha},\mathbf{t}}. \label{mmLocal}%
\end{equation}

\end{theorem}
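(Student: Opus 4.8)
The plan is to differentiate under the integral sign, use the heat equation to convert each area derivative into a Laplacian that acts \emph{only} through the four edge variables meeting the crossing, integrate by parts against Haar measure, and then read off the identity from the alternating structure of the sum over the faces. Throughout, write $\langle g\rangle:=\int_{K^{4}}g\,d\mu_{\mathbf{\alpha},\mathbf{t}}$, let $P_{m}:=\rho_{t_{m}}(h_{m})$ with $h_{m}:=a_{m+1}^{-1}\alpha_{m}a_{m}$ (indices mod $4$), let $\epsilon_{m}:=(-1)^{m-1}$, and let $D:=\frac{\partial}{\partial t_{1}}-\frac{\partial}{\partial t_{2}}+\frac{\partial}{\partial t_{3}}-\frac{\partial}{\partial t_{4}}=\sum_{m}\epsilon_{m}\frac{\partial}{\partial t_{m}}$.

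By \eqref{e.heat.eq}, $\frac{\partial}{\partial t_{i}}\langle f\rangle=\tfrac{1}{2}\int_{K^{4}}f\,(\Delta\rho_{t_{i}})(h_{i})\prod_{j\neq i}P_{j}\,d\mathbf{a}$. Two features of the local structure do the work. First, because the metric on $K$ is bi-invariant, $\Delta$ is simultaneously the sum of squares of the left-invariant and of the right-invariant vector fields, so $(\Delta\rho_{t_{i}})(h_{i})$ equals both $\Delta^{a_{i}}[P_{i}]$ (varying $a_{i}$, which sits on the right of $h_{i}$) and $\Delta^{a_{i+1}}[P_{i}]$. Second, because $\rho_{t_{i}}$ is conjugation invariant \eqref{e.conj.inv}, each plaquette has its own ``extended gauge invariance'' $\nabla^{a_{i}}P_{i}=-\nabla^{a_{i+1}}P_{i}$. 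The key point is that in $\mu_{\mathbf{\alpha},\mathbf{t}}$ the variable $a_{i}$ occurs in exactly the two plaquettes $h_{i-1}$ and $h_{i}$, so integration by parts never creates far-away terms. Using $(\Delta\rho_{t_{i}})(h_{i})=\Delta^{a_{i}}[P_{i}]$, the self-adjointness of $\Delta^{a_{i}}$, and the fact that among $\{P_{j}\}_{j\neq i}$ only $P_{i-1}$ involves $a_{i}$, two integrations by parts in $a_{i}$ give
\begin{equation*}
\frac{\partial}{\partial t_{i}}\langle f\rangle=\tfrac{1}{2}\langle\Delta^{a_{i}}f\rangle+\int\big(\nabla^{a_{i}}f\cdot\nabla^{a_{i}}P_{i-1}\big)\prod_{j\neq i-1}P_{j}\,d\mathbf{a}+\frac{\partial}{\partial t_{i-1}}\langle f\rangle,
\end{equation*}
the last term arising because $\Delta^{a_{i}}[P_{i-1}]=(\Delta\rho_{t_{i-1}})(h_{i-1})$; using $\Delta^{a_{i+1}}$ in place of $\Delta^{a_{i}}$ gives the same identity with $i-1$ replaced by $i+1$ throughout.

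Next I would apply $D$. Since $\epsilon_{i\pm1}=-\epsilon_{i}$, the $\frac{\partial}{\partial t_{i\mp1}}$ terms sum to $-D\langle f\rangle$, so each of the two identities becomes a formula for $2D\langle f\rangle$; using the extended gauge invariance of $f$ ($\Delta^{a_{1}}f=\Delta^{a_{3}}f$, $\Delta^{a_{2}}f=\Delta^{a_{4}}f$, $\nabla^{a_{m+2}}f=-\nabla^{a_{m}}f$) and of the plaquettes, and re-indexing the sums, the first becomes
\begin{equation*}
2D\langle f\rangle=\langle\Delta^{a_{1}}f\rangle-\langle\Delta^{a_{2}}f\rangle+\sum_{m}\epsilon_{m}\int\big(\nabla^{a_{m+1}}f\cdot\nabla^{a_{m}}P_{m}\big)\prod_{j\neq m}P_{j}\,d\mathbf{a},
\end{equation*}
and the $\Delta^{a_{i+1}}$-version gives the analogous identity with the right-hand side's sign reversed and $\nabla^{a_{m+1}}f$ replaced by $\nabla^{a_{m}}f$. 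Their left-hand sides being equal, subtracting and adding them isolates $S_{\pm}:=\sum_{m}\epsilon_{m}\int\big((\nabla^{a_{m}}f\pm\nabla^{a_{m+1}}f)\cdot\nabla^{a_{m}}P_{m}\big)\prod_{j\neq m}P_{j}\,d\mathbf{a}$, namely $S_{+}=-2(\langle\Delta^{a_{1}}f\rangle-\langle\Delta^{a_{2}}f\rangle)$ and $S_{-}=-4D\langle f\rangle$. Finally, one more integration by parts in $S_{-}$ — moving the derivative off the factor $\nabla^{a_{m}}P_{m}$, once more using that $a_{m}$ lies only in $h_{m-1}$ and $h_{m}$, that $\Delta^{a_{m}}[P_{m}]=(\Delta\rho_{t_{m}})(h_{m})$, and the gauge-invariance relations — produces $S_{-}=-2(\langle\Delta^{a_{1}}f\rangle-\langle\Delta^{a_{2}}f\rangle)+4\langle\nabla^{a_{1}}\cdot\nabla^{a_{2}}f\rangle-S_{+}$, where the mixed derivatives $\nabla^{a_{m}}\cdot\nabla^{a_{m+1}}f$ that appear all collapse to $\pm\nabla^{a_{1}}\cdot\nabla^{a_{2}}f$ by extended gauge invariance of $f$. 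Substituting the values of $S_{\pm}$, the $\langle\Delta^{a_{k}}f\rangle$ terms cancel and $4D\langle f\rangle=-4\langle\nabla^{a_{1}}\cdot\nabla^{a_{2}}f\rangle$, which is \eqref{mmLocal}.

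The only real difficulty is the bookkeeping: keeping track of which edge variable sits in which plaquette at each integration by parts, and checking that the several re-indexings of the sum over faces — combined with the extended gauge invariance of $f$ and of the plaquettes — genuinely make the unwanted terms (the $\langle\Delta^{a_{k}}f\rangle$ integrals, and any derivative of $f$ not of the form $\nabla^{a_{1}}\cdot\nabla^{a_{2}}f$) cancel. Nothing is used beyond the heat equation, conjugation invariance of $\rho_{t}$, and integration by parts against Haar measure; in particular nothing is special to $U(N)$, as one would expect for a statement valid over every compact structure group $K$.
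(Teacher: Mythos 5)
Your argument is correct; I checked the intermediate identity
\[
\frac{\partial}{\partial t_{i}}\langle f\rangle=\tfrac{1}{2}\langle\mathrm{\Delta}^{a_{i}}f\rangle+\int\bigl(\nabla^{a_{i}}f\cdot\nabla^{a_{i}}P_{i-1}\bigr)\prod_{j\neq i-1}P_{j}\,d\mathbf{a}+\frac{\partial}{\partial t_{i-1}}\langle f\rangle,
\]
its $\Delta^{a_{i+1}}$ counterpart, the values $S_{+}=-2(\langle\mathrm{\Delta}^{a_{1}}f\rangle-\langle\mathrm{\Delta}^{a_{2}}f\rangle)$ and $S_{-}=-4D\langle f\rangle$, and the final integration by parts, and all the signs work out. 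The route differs from the paper's two proofs in how the cancellations are organized. The paper's first proof symmetrizes at the outset, writing $(\mathrm{\Delta}\rho_{t_{j}})(h_{j})=\tfrac{1}{4}(\nabla^{a_{j}}-\nabla^{a_{j+1}})^{2}[\rho_{t_{j}}(h_{j})]$ (Lemma \ref{l.Brian's.proof.1}) and then integrating by parts one derivative at a time, so that the ``bad'' terms visibly cancel in pairs in the alternating sum at each of the two stages; its second proof runs backwards from the right-hand side using the factorization of the density into $(t_{1},t_{2})$ and $(t_{3},t_{4})$ blocks. You instead keep the two \emph{one-sided} representations $\mathrm{\Delta}^{a_{i}}[P_{i}]$ and $\mathrm{\Delta}^{a_{i+1}}[P_{i}]$ of $(\mathrm{\Delta}\rho_{t_{i}})(h_{i})$, integrate by parts twice in one go, and observe that the resulting ``bad'' term is itself a heat-equation term $\partial_{t_{i\mp1}}\langle f\rangle$, which under the alternating sum just contributes another copy of $-D\langle f\rangle$; a linear combination of the two resulting identities plus one more integration by parts then closes the system. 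The ingredients are identical (heat equation, conjugation invariance of $\rho_{t}$, the fact that each $a_{i}$ sits in exactly two plaquettes, extended gauge invariance), but your bookkeeping — solving a small linear system in $D\langle f\rangle$, $S_{\pm}$, and $\langle\mathrm{\Delta}^{a_{1}}f\rangle-\langle\mathrm{\Delta}^{a_{2}}f\rangle$ rather than exhibiting pairwise cancellations — is a genuinely different and equally local organization of the same proof. The paper's versions are somewhat more economical (the symmetrization or the block factorization makes the unwanted terms disappear before any system needs to be solved), while yours avoids the preliminary lemma entirely; both share the feature, essential for the extension to compact surfaces, that only the four edge variables at the crossing are ever differentiated. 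One cosmetic slip: in passing from the $\mathrm{\Delta}^{a_{i}}$ identity to the $\mathrm{\Delta}^{a_{i+1}}$ one, ``replace $i-1$ by $i+1$ throughout'' should also replace the superscripts $a_{i}$ by $a_{i+1}$ and the good term $\tfrac12\langle\mathrm{\Delta}^{a_{i}}f\rangle$ by $\tfrac12\langle\mathrm{\Delta}^{a_{i+1}}f\rangle$; your subsequent displayed formulas show you used the correct version.
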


Our first proof of Theorem \ref{localMM.thm} proceeds by directly computing
the alternating sum of area-derivatives, and integrating by parts twice. Our
second proof, which is even shorter, proceeds from the right-hand-side of
(\ref{mmLocal}) and relies on the decomposition of the density of $\mu
_{\alpha,\mathbf{t}}$ into the product of $(t_{1},t_{2})$ heat kernels (both
independent of edge variable $a_{4}$) and $(t_{3},t_{4})$ heat kernels (both
independent of edge variable $a_{2}$).

We now observe that Theorem \ref{localMM.thm} easily implies the generic case
of the abstract Makeenko--Migdal equation in Theorem \ref{MMab.thm}.

\begin{proof}
[of Theorem \ref{MMab.thm} (Generic Case)]If $f:K^{n}\rightarrow\mathbb{C}$
has extended gauge invariance at $v$, then $f(a_{1},a_{2},a_{3},a_{4}%
,\mathbf{b})$ has extended gauge invariance as a function of $a_{1}%
,\ldots,a_{4}$ for each $\mathbf{b}.$ In light of (\ref{hi}), we will have%
\[
\int_{K^{n}}f~d\mu=\int_{K^{n-4}}\int_{K^{4}}f(\mathbf{a},\mathbf{b}%
)~d\mu_{\mathbf{\alpha},\mathbf{t}}(\mathbf{a})~\nu(\mathbf{b})~d\mathbf{b},
\]
where $\nu(\mathbf{b})$ is a product of heat kernels in the $\mathbf{b}$
variables. Since the only dependence on $(t_{1},t_{2},t_{3},t_{4})$ in the
integral is in $\mu_{\mathbf{\alpha},\mathbf{t}},$ the time derivatives in
Theorem \ref{MMab.thm} will pass over the outer integral and hit on the
integral over $K^{4}.$ Theorem \ref{MMab.thm} then follows from Theorem
\ref{localMM.thm}. \hfill$\square$
\end{proof}

We also prove Theorem \ref{MMab.thm}, for gauge invariant functions, in
Section \ref{loop.sec} using the loop or lasso variables. This third proof is
also in a sense local.

It remains to prove the local result in Theorem \ref{localMM.thm}.

\subsubsection{First proof of Theorem \ref{localMM.thm}}

Our strategy is to differentiate under the integral sign, use the heat
equation satisfied by the heat kernel, and then integrate by parts. In this
process, we will get \textquotedblleft good terms\textquotedblright\ in which
derivatives hit on the function $f$, and \textquotedblleft bad
terms\textquotedblright\ in which derivatives hit on other heat kernels. In
each of the two stages of integration by parts, we obtain a cancellation of
the bad terms, allowing all of the derivatives to move off of the heat kernels
and onto $f,$ at which point we easily obtain the local Makeenko--Migdal
equation in \eqref{mmLocal}.

To begin, when computing the time derivatives on the left-hand-side of
\eqref{mmLocal}, passing under the integral, we will use the heat equation
\eqref{e.heat.eq} for $\rho_{t_{j}}$; thus, we must deal with terms of the
form $(\mathrm{\Delta}\rho_{t_{j}})(a_{j+1}^{-1}\alpha_{j}a_{j})$. Since the
heat kernel on $K$ is invariant under conjugation, we can compute these by
various combinations of derivatives with respect to $a_{j}$ and derivatives
with respect to $a_{j+1}$. The following lemma yields a convenient way to
express these terms.

\begin{lemma}
\label{l.Brian's.proof.1} For any $j\in \{1,2,3,4\}$ and any smooth conjugation-invariant
function $\rho\colon K\rightarrow\mathbb{C}$,
\begin{equation}
(\mathrm{\Delta}\rho)(a_{j+1}^{-1}\alpha_{j}a_{j})=\frac{1}{4}\left(
\nabla^{a_{j}}-\nabla^{a_{j+1}}\right)  ^{2}[\rho(a_{j+1}^{-1}\alpha_{j}%
a_{j})]. \label{e.l.Brian's.proof.1}%
\end{equation}

\end{lemma}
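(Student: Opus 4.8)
The plan is to reduce the identity to a statement purely about the heat kernel (or any conjugation-invariant function) composed with a product of edge variables, and then to verify that identity by differentiating along one-parameter subgroups. Fix $j$ and write $h = a_{j+1}^{-1}\alpha_j a_j$. The key observation is that conjugation invariance, $\rho(uxu^{-1}) = \rho(x)$, has an infinitesimal consequence: for any $X \in \mathfrak{k}$,
\[
\left.\frac{d}{ds}\rho\!\left(e^{sX} h e^{-sX}\right)\right|_{s=0} = 0,
\]
and differentiating twice gives a relation between $\left.\frac{d^2}{ds^2}\right|_0 \rho(e^{sX}he^{-sX})$ and the various mixed/second derivatives of $\rho$ at $h$. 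I would first make precise how $\nabla^{a_j}$ and $\nabla^{a_{j+1}}$ act on the composite function $g(\mathbf a) := \rho(a_{j+1}^{-1}\alpha_j a_j)$. Since $a_j$ appears on the right of $h$, pushing $a_j \mapsto a_j e^{sX}$ gives $h \mapsto h e^{sX}$, so $\nabla^{a_j}$ corresponds to the \emph{right-invariant} differentiation of $\rho$ at $h$ (in the sense of the basis-sum in Definition \ref{gradient.def}); since $a_{j+1}$ appears inverted on the left, $a_{j+1}\mapsto a_{j+1}e^{tX}$ gives $a_{j+1}^{-1}\mapsto e^{-tX}a_{j+1}^{-1}$, so $h \mapsto e^{-tX} h$, and $\nabla^{a_{j+1}}$ corresponds to \emph{minus} the left-invariant differentiation of $\rho$ at $h$. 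Consequently $\nabla^{a_j} - \nabla^{a_{j+1}}$ applied to $g$ corresponds, at the point $h$, to the combined operation $\left.\frac{d}{ds}\right|_0 \rho(e^{sX} h e^{sX})$ summed appropriately over the basis — i.e., differentiation in the "symmetric" direction.

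The second step is to expand $\left(\nabla^{a_j} - \nabla^{a_{j+1}}\right)^2$. Writing $L_X$ for the left-invariant derivative and $R_X$ for the right-invariant derivative of $\rho$ at $h$ along $X$, the single operator contributes, per basis vector $X$, the term $(R_X + L_X)$ acting on $\rho$, and squaring and summing over the orthonormal basis $\{X\}$ yields
\[
\sum_X (R_X + L_X)^2 \rho = \sum_X R_X^2 \rho + \sum_X L_X^2 \rho + 2\sum_X R_X L_X \rho.
\]
Now $\sum_X R_X^2 = \sum_X L_X^2 = \Delta$, since the Laplacian of a bi-invariant metric is both the left- and the right-invariant Casimir. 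The remaining job is to show the cross term $\sum_X R_X L_X \rho$ also equals $\Delta\rho$ when $\rho$ is conjugation invariant; then the right-hand side is $4\Delta\rho$ evaluated at $h$, and dividing by $4$ gives \eqref{e.l.Brian's.proof.1}. This cross-term identity is exactly where conjugation invariance enters: differentiating $\rho(e^{sX} x e^{-sX}) = \rho(x)$ twice in $s$ at $s=0$ gives $R_X^2\rho - 2 R_X L_X \rho + L_X^2 \rho = 0$ pointwise (the adjoint action infinitesimally is $X$ acting on the left minus $X$ acting on the right), hence $2 R_X L_X\rho = R_X^2\rho + L_X^2\rho$, and summing over the basis yields $2\sum_X R_X L_X \rho = 2\Delta\rho$, as needed. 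Combining, $\sum_X(R_X+L_X)^2\rho = 2\Delta\rho + 2\Delta\rho = 4\Delta\rho$.

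I expect the main obstacle to be purely bookkeeping: getting the signs and the left-versus-right conventions consistent, especially the sign in $\nabla^{a_{j+1}}$ coming from the inverse $a_{j+1}^{-1}$, and confirming that the parametrization $h \mapsto e^{-tX}he^{sX}$ is what the operator $\nabla^{a_j}-\nabla^{a_{j+1}}$ actually produces, so that its square is $\sum_X(R_X+L_X)^2$ rather than, say, $\sum_X(R_X-L_X)^2$ (which would instead vanish). Once those conventions are pinned down, the identities $\sum_X R_X^2 = \sum_X L_X^2 = \Delta$ and the conjugation-invariance relation $2R_XL_X\rho = R_X^2\rho + L_X^2\rho$ are standard and the factor $\tfrac14$ falls out immediately. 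One clean way to organize the write-up is to prove the pointwise identity $\left.\frac{d^2}{ds^2}\right|_{s=0}\rho(e^{sX}xe^{sX}) = 4(\text{something})$ first for a single $X$ and then sum; alternatively, one can phrase everything through the substitution $h = e^{sX}h_0 e^{sX}$ and note that, because $\rho$ is a class function, $\rho(e^{sX}h_0e^{sX}) = \rho(e^{2sX}h_0)$ (conjugate the first factor across), reducing the computation to a single left-invariant second derivative of $\rho$ with a chain-rule factor of $2$, whose square is $4$ — this is arguably the slickest route and avoids the cross-term bookkeeping entirely.
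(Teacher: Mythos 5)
Your argument is correct, and it rests on the same two observations as the paper's proof: that $\nabla^{a_j}$ and $-\nabla^{a_{j+1}}$ act on $\rho(a_{j+1}^{-1}\alpha_j a_j)$ as differentiation of $\rho$ on the right and on the left of the point $h=a_{j+1}^{-1}\alpha_j a_j$ respectively, and that conjugation invariance lets one trade the one for the other. Where you differ is in the organization: you expand the square into $\sum_X R_X^2+\sum_X L_X^2+2\sum_X R_X L_X$, identify the first two sums with $\Delta$ (valid for any smooth function, by bi-invariance of the metric), and then use the infinitesimal form of conjugation invariance, $(R_X-L_X)^2\rho=0$ summed over the basis, to evaluate the cross term as $\Delta\rho$ as well, giving $4\Delta\rho$. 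The paper instead works at first order: it shows $(\widehat{X}^{a_j}-\widehat{X}^{a_{j+1}})[\rho(h)]=2(\widehat{X}\rho)(h)$ directly from conjugation invariance and then applies the same operator again (noting that $\widehat{X}\rho$, while not a class function, is still invariant under conjugation by $e^{tX}$, which is all that is needed), so the factor $4=2^2$ appears without ever expanding the square. Your closing remark --- that $\rho(e^{sX}h_0e^{sX})=\rho(h_0e^{2sX})$ for a class function, reducing everything to a chain-rule factor of $2$ --- is in substance exactly the paper's proof, so you have in effect found both routes; the expansion route costs a little more bookkeeping (and requires the commuting of left and right translations to justify $(R_X-L_X)^2=R_X^2-2R_XL_X+L_X^2$) but isolates the reusable identity $\sum_X R_XL_X\rho=\Delta\rho$ for class functions, while the iterated first-order route is shorter and is what the paper records. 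One cosmetic caution: your labels ``left-invariant'' and ``right-invariant'' for $L_X$ and $R_X$ are swapped relative to how you then use them ($a_j\mapsto a_je^{sX}$ induces $h\mapsto he^{sX}$, which is the \emph{left-invariant} vector field of Definition \ref{gradient.def}); the computation is unaffected, but the write-up should fix the convention once and keep it.
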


\begin{proof}
To make the notation precise:
\[
(\nabla^{a_{j}}-\nabla^{a_{j+1}})^{2}\equiv\sum_{X}\left(  \widehat{X}^{a_{j}%
}-\widehat{X}^{a_{j+1}}\right)  ^{2},
\]
where $\hat{X}$ denotes the left-invariant vector field associated to
$X\in\mathfrak{k}$ and where the sum is over any orthonormal basis of
$\mathfrak{k}$, as in Definition \ref{gradient.def}. Now, on the one hand, we
have
\begin{align*}
\widehat{X}^{a_{j}}[\rho(a_{j+1}^{-1}\alpha_{j}a_{j})]  &  =\left.
\frac{\partial}{\partial t}\rho\left(  a_{j+1}^{-1}\alpha_{j}(a_{j}%
e^{tX})\right)  \right\vert _{t=0}\\
&  =\left.  \frac{\partial}{\partial t}\rho\left(  (a_{j+1}^{-1}\alpha
_{j}a_{j})e^{tX}\right)  \right\vert _{t=0}=(\widehat{X}\rho)(a_{j+1}%
^{-1}\alpha_{j}a_{j}).
\end{align*}
On the other hand, by the conjugation invariance of $\rho$, we have
\begin{align*}
\widehat{X}^{a_{j+1}}[\rho(a_{j+1}^{-1}\alpha_{j}a_{j})]  &  =\left.
\frac{\partial}{\partial t}\rho\left(  (a_{j+1}e^{tX})^{-1}\alpha_{j}%
a_{j}\right)  \right\vert _{t=0}\\
&  =\left.  \frac{\partial}{\partial t}\rho\left(  (a_{j+1}^{-1}\alpha
_{j}a_{j})e^{-tX}\right)  \right\vert _{t=0}=-(\widehat{X}\rho)(a_{j+1}%
^{-1}\alpha_{j}a_{j}).
\end{align*}
Thus $(\widehat{X}^{a_{j}}-\widehat{X}^{a_{j+1}})[\rho(a_{j+1}^{-1}\alpha
_{j}a_{j})]=2(\hat{X}\rho)(a_{j+1}^{-1}\alpha_{j}a_{j})$.

Now, although the function $\hat{X}\rho$ may not be conjugation invariant, it
is easily seen to be invariant under conjugation by elements of the form
$e^{tX}$, which is all that is needed in the argument in the previous
paragraph. Thus, applying $\widehat{X}^{a_{j}}-\widehat{X}^{a_{j+1}}$ a second
time gives
\[
\left(  \widehat{X}^{a_{j}}-\widehat{X}^{a_{j+1}}\right)  ^{2}[\rho
(a_{j+1}^{-1}\alpha_{j}a_{j})]=4(\hat{X}^{2}\rho)(a_{j+1}^{-1}\alpha_{j}%
a_{j}).
\]
Summing over $X$ yields the lemma. \hfill$\square$
\end{proof}

We now proceed with the proof of Theorem \ref{localMM.thm}. Denote the density
of $\mu_{\mathbf{\alpha},\mathbf{t}}$ (cf.\ \eqref{e.def.mu_alpha}) by
$\mathcal{R}$:
\begin{equation}
\mathcal{R}(\mathbf{a})=\rho_{t_{1}}(a_{2}^{-1}\alpha_{1}a_{1})\rho_{t_{2}%
}(a_{3}^{-1}\alpha_{2}a_{2})\rho_{t_{3}}(a_{4}^{-1}\alpha_{3}a_{3})\rho
_{t_{4}}(a_{1}^{-1}\alpha_{4}a_{4}). \label{e.Brian.density}%
\end{equation}
For ease of reading, denote $b_{j}=a_{j+1}^{-1}\alpha_{j}a_{j}$, and denote by
$\rho_{j}$ the function $\rho_{j}(\mathbf{a})=\rho_{t_{j}}(a_{j+1}^{-1}%
\alpha_{j}a_{j})=\rho_{t_{j}}(b_{j})$, where the index $j$ is mod $4$ as
usual. Thus $\mathcal{R}=\rho_{1}\rho_{2}\rho_{3}\rho_{4}$. Starting with each
time-derivative term on the left-hand-side of \eqref{mmLocal}, we pass the
derivative under the integral. Using the heat equation \eqref{e.heat.eq}, and
applying Lemma \ref{l.Brian's.proof.1} with $\rho=\rho_{j}$, we have%

\[
\frac{\partial}{\partial t_{j}}\int f\,d\mu_{\mathbf{\alpha},\mathbf{t}}%
=\frac{1}{2}\int f\,(\mathrm{\Delta}\rho_{j})\frac{\mathcal{R}}{\rho_{j}%
}\,d\mathbf{a}=\frac{1}{8}\int f\,\left[  (\nabla^{a_{j}}-\nabla^{a_{j+1}%
})^{2}\rho_{j}\right]  \frac{\mathcal{R}}{\rho_{j}}\,d\mathbf{a}.
\]

We now move one factor of $\nabla^{a_{j}}-\nabla^{a_{j+1}}$ off of $\rho_{j}$,
integrating by parts a first time:
\begin{align}
&  \int f\,\left[  (\nabla^{a_{j}}-\nabla^{a_{j+1}})^{2}\rho_{j}\right]
\frac{\mathcal{R}}{\rho_{j}}\,d\mathbf{a}\nonumber\\
=  &  -\int(\nabla^{a_{j}}-\nabla^{a_{j+1}})\rho_{j}\cdot(\nabla^{a_{j}%
}-\nabla^{a_{j+1}})\left(  f\frac{\mathcal{R}}{\rho_{j}}\right)
\,d\mathbf{a}. \label{e.Brian.int.by.parts.1}%
\end{align}
Following the proof of Lemma \ref{l.Brian's.proof.1}, we can express the first
factor on the right-hand side of (\ref{e.Brian.int.by.parts.1}) as
\begin{equation}
(\nabla^{a_{j}}-\nabla^{a_{j+1}})\rho_{j}=(\nabla^{a_{j}}-\nabla^{a_{j+1}%
})[\rho_{t_{j}}(b_{j})]=2(\nabla\rho_{t_{j}})(b_{j}). \label{firstTerm}%
\end{equation}
Using the product rule in the second factor gives
\[
(\nabla^{a_{j}}-\nabla^{a_{j+1}})\left(  f\frac{\mathcal{R}}{\rho_{j}}\right)
=\frac{\mathcal{R}}{\rho_{j}}(\nabla^{a_{j}}-\nabla^{a_{j+1}})f+f(\nabla
^{a_{j}}-\nabla^{a_{j+1}})\left(  \frac{\mathcal{R}}{\rho_{j}}\right)  .
\]

Now, $\mathcal{R}/\rho_{j}$ consists of the three heat kernel terms other than
$\rho_{j}$. The only one among these that depends on $a_{j}$ is $\rho
_{j-1}=\rho_{t_{j-1}}(a_{j}^{-1}\alpha_{j}a_{j-1})$; the only one that depends
on $a_{j+1}$ is $\rho_{j+1}=\rho_{t_{j+1}}(a_{j+2}^{-1}\alpha_{j+1}a_{j+1})$.
Thus
\begin{align}
(\nabla^{a_{j}}-\nabla^{a_{j+1}})\frac{\mathcal{R}}{\rho_{j}}  &
=(\nabla^{a_{j}}\rho_{j-1})\frac{\mathcal{R}}{\rho_{j-1}\rho_{j}}%
-(\nabla^{a_{j+1}}\rho_{j+1})\frac{\mathcal{R}}{\rho_{j}\rho_{j+1}}\nonumber\\
&  =-\left[  \frac{(\nabla\rho_{t_{j-1}})(b_{j-1})}{\rho_{j-1}\rho_{j}}%
+\frac{(\nabla\rho_{t_{j+1}})(b_{j+1})}{\rho_{j}\rho_{j+1}}\right]
\mathcal{R}. \label{secondTerm}%
\end{align}
Substituting (\ref{firstTerm}) and (\ref{secondTerm}) into
(\ref{e.Brian.int.by.parts.1}) gives three terms:
\begin{align*}
\frac{\partial}{\partial t_{j}}\int f\,d\mu=-  &  A_{j}+B_{j}+B_{j+1}%
\end{align*}
where
\begin{align*}
A_{j}  &  \equiv\frac{1}{8}\int(\nabla^{a_{j}}-\nabla^{a_{j+1}})\rho_{j}%
\cdot\lbrack(\nabla^{a_{j}}-\nabla^{a_{j+1}})f]\frac{1}{\rho_{j}}%
\,d\mu_{\alpha,\mathbf{t}},\quad\text{and}\\
B_{j}  &  \equiv\frac{1}{4}\int\frac{(\nabla\rho_{t_{j}})(b_{j})}{\rho_{j}%
}\cdot\frac{(\nabla\rho_{t_{j-1}})(b_{j-1})}{\rho_{j-1}}f\,d\mu_{\alpha
,\mathbf{t}}.
\end{align*}
Upon taking the alternating sum, the (\textquotedblleft bad\textquotedblright)
$B_{j}+B_{j+1}$ terms cancel in pairs, leaving only the (\textquotedblleft
good\textquotedblright) $A_{j}$ terms:
\begin{align}
&  \sum_{j=1}^{4}(-1)^{j-1}\frac{\partial}{\partial t_{j}}\int f\,d\mu
_{\alpha,\mathbf{t}}\nonumber\\
=  &  -\frac{1}{8}\sum_{j=1}^{4}(-1)^{j-1}\int(\nabla^{a_{j}}-\nabla^{a_{j+1}%
})\rho_{j}\cdot\lbrack(\nabla^{a_{j}}-\nabla^{a_{j+1}})f]\frac{1}{\rho_{j}%
}\,d\mu_{\alpha,\mathbf{t}}. \label{firstStage}%
\end{align}

We now integrate by parts again in each of the integrals in (\ref{firstStage}%
), moving the remaining derivatives off of $\rho_{j}$ and onto the other
factors. Using the product rule, this gives
\begin{align}
&  \int(\nabla^{a_{j}}-\nabla^{a_{j+1}})\rho_{j}\cdot\lbrack(\nabla^{a_{j}%
}-\nabla^{a_{j+1}})\,f]\frac{\mathcal{R}}{\rho_{j}}\,d\mathbf{a}\\
=  &  -\int(\nabla^{a_{j}}-\nabla^{a_{j+1}})^{2}f\,d\mu_{\alpha,\mathbf{t}%
}\label{e.Brian.int.by.parts.1.5}\\
&  - \int(\nabla^{a_{j}}-\nabla^{a_{j+1}})f\cdot\rho_{j}\,(\nabla^{a_{j}%
}-\nabla^{a_{j+1}})\left(  \frac{\mathcal{R}}{\rho_{j}}\right)  \,d\mathbf{a}.
\label{e.Brian.int.by.parts.2}%
\end{align}
Using (\ref{secondTerm}) again, \eqref{e.Brian.int.by.parts.2} expands to
\begin{align}
&  -\int(\nabla^{a_{j}}-\nabla^{a_{j+1}})f\cdot\rho_{j}\,(\nabla^{a_{j}%
}-\nabla^{a_{j+1}})\left(  \frac{\mathcal{R}}{\rho_{j}}\right)  \,d\mathbf{a}%
\nonumber\\
=  &  \int(\nabla^{a_{j}}-\nabla^{a_{j+1}})f\cdot\left[  \frac{(\nabla
\rho_{t_{j-1}})(b_{j-1})}{\rho_{j-1}}+\frac{(\nabla\rho_{t_{j+1}})(b_{j+1}%
)}{\rho_{j+1}}\right]  \,d\mu_{\alpha,\mathbf{t}}.
\label{e.Brian.middle.terms.1}%
\end{align}

We now claim that the alternating sum of the terms in
(\ref{e.Brian.middle.terms.1}) is zero, assuming that $f$ has extended gauge
invariance. Consider, for example, the $j=1$ and $j=3$ terms, namely%
\begin{equation}
\int(\nabla^{a_{1}}-\nabla^{a_{2}})f\cdot\left[  \frac{(\nabla\rho_{t_{4}%
})(b_{4})}{\rho_{4}}+\frac{(\nabla\rho_{t_{2}})(b_{2})}{\rho_{2}}\right]
\,d\mu_{\alpha,\mathbf{t}} \label{j1}%
\end{equation}
and%
\begin{equation}
\int(\nabla^{a_{3}}-\nabla^{a_{4}})f\cdot\left[  \frac{(\nabla\rho_{t_{2}%
})(b_{2})}{\rho_{2}}+\frac{(\nabla\rho_{t_{4}})(b_{4})}{\rho_{4}}\right]
\,d\mu_{\alpha,\mathbf{t}}. \label{j2}%
\end{equation}
Since $f$ has extended gauge invariance, (\ref{e.inf.ext.gauge.inv}) tells us
that $\nabla^{a_{3}}f=-\nabla^{a_{1}}f$ and $\nabla^{a_{4}}f=-\nabla^{a_{2}%
}f.$ Thus, (\ref{j1}) and (\ref{j2}) cancel in the alterating sum. The $j=2$
and $j=4$ terms cancel similarly.

Thus, after integrating by parts in (\ref{firstStage}), only
\eqref{e.Brian.int.by.parts.1.5} contributes, giving
\begin{equation}
\sum_{j=1}^{4}(-1)^{j-1}\frac{\partial}{\partial t_{j}}\int f\,d\mu
_{\alpha,\mathbf{t}}=\frac{1}{8}\sum_{j=1}^{4}(-1)^{j-1}\int(\nabla^{a_{j}%
}-\nabla^{a_{j+1}})^{2}f\,d\mu_{\alpha,\mathbf{t}}. \label{e.penultimate}%
\end{equation}
The significance of this expression is that the derivatives $(\nabla^{a_{j}%
}-\nabla^{a_{j+1}})^{2}$ that were initially applied to $\rho_{j}$ are now
applied only to the function $f$ and not to any of the heat kernels.

Now, since $\nabla^{a_{j}}$ and $\nabla^{a_{j+1}}$ commute (because $a_{j}$
and $a_{j+1}$ are independent variables), we have
\[
(\nabla^{a_{j}}-\nabla^{a_{j+1}})^{2}=(\nabla^{a_{j}})^{2}+(\nabla^{a_{j+1}%
})^{2}-2\nabla^{a_{j}}\cdot\nabla^{a_{j+1}}.
\]
If we let $C_{j}=\displaystyle{\int(\nabla^{a_{j}})^{2}f\,d\mu_{\alpha
,\mathbf{t}}}$, then in (\ref{e.penultimate}), the $C_{j}$ terms will cancel
in pairs, leaving us with
\begin{align*}
&  \sum_{j=1}^{4}(-1)^{j-1}\frac{\partial}{\partial t_{j}}\int f\,d\mu\\
=  &  -\frac{1}{4}\int\left[  \nabla^{a_{1}}\cdot\nabla^{a_{2}}-\nabla^{a_{2}%
}\cdot\nabla^{a_{3}}+\nabla^{a_{3}}\cdot\nabla^{a_{4}}-\nabla^{a_{4}}%
\cdot\nabla^{a_{1}}\right]  f\,d\mu_{\alpha,\mathbf{t}}\\
=  &  -\int\nabla^{a_{1}}\cdot\nabla^{a_{2}}f\,d\mu,
\end{align*}
where we used \eqref{gradABforms} in the last equality. This is what we wanted
to prove. \hfill$\square$

\subsubsection{Second proof of Theorem \ref{localMM.thm}}

In our second proof, which is likely to be about as short as possible, we
begin by writing the density of $\mu_{\mathbf{\alpha},\mathbf{t}}$ as a
product of two terms: those corresponding to $(t_{1},t_{2})$ and those to
$(t_{3},t_{4})$:
\[
\mathcal{R}_{12}(\mathbf{a})=\rho_{t_{1}}(a_{2}^{-1}\alpha_{1}a_{1}%
)\rho_{t_{2}}(a_{3}^{-1}\alpha_{2}a_{2}),\quad\mathcal{R}_{34}(\mathbf{a}%
)=\rho_{t_{3}}(a_{4}^{-1}\alpha_{3}a_{3})\rho_{t_{4}}(a_{1}^{-1}\alpha
_{4}a_{4}).
\]
What is important is that $\mathcal{R}_{12}$ depends on $a_{1},a_{2},a_{3}$
but not $a_{4}$, while $\mathcal{R}_{34}$ depends on $a_{1},a_{3},a_{4}$, but
not $a_{2}$. Then
\[
d\mu_{\mathbf{\alpha},\mathbf{t}}=\mathcal{R}_{12}\mathcal{R}_{34}%
\,d\mathbf{a}.
\]
For the remainder of the proof, we write integrals of functions $g$ against
$d\mathbf{a}$ simply as $\int g$.

Now, using extended gauge invariance as in \eqref{gradABforms}, taking care to
commute partial derivatives, we may write
\[
\nabla^{a_{1}}\cdot\nabla^{a_{2}}f=\frac{1}{2}(\nabla^{a_{1}}-\nabla^{a_{3}%
})\cdot\nabla^{a_{2}}f.
\]
Then we integrate by parts once, and use the product rule.%
\begin{align*}
-\int\nabla^{a_{1}}\cdot\nabla^{a_{2}}f\,~d\mu_{\mathbf{\alpha},\mathbf{t}}
&  =-\frac{1}{2}\int[(\nabla^{a_{1}}-\nabla^{a_{3}})\cdot(\nabla^{a_{2}%
}f)]\,\mathcal{R}_{12}\mathcal{R}_{34}\\
&  =\frac{1}{2}\int[\nabla^{a_{2}}f\cdot(\nabla^{a_{1}}-\nabla^{a_{3}%
})(\mathcal{R}_{12}\mathcal{R}_{34})]\\
&  =\frac{1}{2}\int\mathcal{R}_{34}[\nabla^{a_{2}}f\cdot(\nabla^{a_{1}}%
-\nabla^{a_{3}})(\mathcal{R}_{12})]\\
&  +\frac{1}{2}\int\mathcal{R}_{12}[\nabla^{a_{2}}f\cdot(\nabla^{a_{1}}%
-\nabla^{a_{3}})(\mathcal{R}_{34})].
\end{align*}

We now use extended gauge invariance once more, in the second term, writing
$\nabla^{a_{2}}f=-\nabla^{a_{4}}f$, yielding
\[
\frac{1}{2}\int\mathcal{R}_{34}[\nabla^{a_{2}}f\cdot(\nabla^{a_{1}}%
-\nabla^{a_{3}})(\mathcal{R}_{12})]-\frac{1}{2}\int\mathcal{R}_{12}%
[\nabla^{a_{4}}f\cdot(\nabla^{a_{1}}-\nabla^{a_{3}})(\mathcal{R}_{34})].
\]
Since $\mathcal{R}_{12}$ does not depend on $a_{4}$, and $\mathcal{R}_{34}$
does not depend on $a_{2}$, we can integrate this by parts once more, and the
$\nabla^{a_{2}}$ and $\nabla^{a_{4}}$ derivatives only hit the already
differentiated factors. Thus,
\begin{align}
-\int\nabla^{a_{1}}\cdot\nabla^{a_{2}}f\,~d\mu_{\mathbf{\alpha},\mathbf{t}}
&  =-\frac{1}{2}\int f\mathcal{R}_{34}[\nabla^{a_{2}}\cdot(\nabla^{a_{1}%
}-\nabla^{a_{3}})\mathcal{R}_{12}]\nonumber\\
&  \;\;\;\,+\frac{1}{2}\int f\mathcal{R}_{12}[\nabla^{a_{4}}\cdot
(\nabla^{a_{1}}-\nabla^{a_{3}})\mathcal{R}_{34}]. \label{e.shortest.1}%
\end{align}

Finally, we compute the second derivatives. Recalling that $\mathcal{R}%
_{12}=\rho_{t_{1}}\rho_{t_{2}}$ and recalling the arguments of the heat
kernels from the definition of $\mu_{\mathbf{\alpha},\mathbf{t}}$, we have%
\begin{align*}
(\nabla^{a_{1}}-\nabla^{a_{3}})\mathcal{R}_{12}  &  =(\nabla^{a_{1}}%
-\nabla^{a_{3}})(\rho_{t_{1}}(a_{2}^{-1}\alpha_{1}a_{1})\rho_{t_{2}}%
(a_{3}^{-1}\alpha_{2}a_{2}))\\
&  =\rho_{t_{2}}(a_{3}^{-1}\alpha_{2}a_{2})(\nabla\rho_{t_{1}})(a_{2}%
^{-1}\alpha_{1}a_{1})\\
&  +\rho_{t_{1}}(a_{2}^{-1}\alpha_{1}a_{1})(\nabla\rho_{t_{2}})(a_{3}%
^{-1}\alpha_{2}a_{2}).
\end{align*}
Applying $\nabla^{a_{2}}$ then yields
\[
\nabla^{a_{2}}\cdot(\nabla^{a_{1}}-\nabla^{a_{3}})\mathcal{R}_{12}=\nabla
\rho_{t_{2}}\cdot\nabla\rho_{t_{1}}-\rho_{t_{2}}\mathrm{\Delta}\rho_{t_{1}%
}-\nabla\rho_{t_{2}}\cdot\nabla\rho_{t_{1}}+\rho_{t_{1}}\mathrm{\Delta}%
\rho_{t_{2}}.
\]
The first and third terms cancel, and we see that the first term on the
right-hand side of \eqref{e.shortest.1} is equal to%
\begin{align*}
-\frac{1}{2}\int f\mathcal{R}_{34}[\nabla^{a_{2}}\cdot(\nabla^{a_{1}}%
-\nabla^{a_{3}})\mathcal{R}_{12}]  &  =-\frac{1}{2}\int f\rho_{t_{3}}%
\rho_{t_{4}}(-\rho_{t_{2}}\mathrm{\Delta}\rho_{t_{1}}+\rho_{t_{1}%
}\mathrm{\Delta}\rho_{t_{2}})\\
&  =\left(  \frac{\partial}{\partial t_{1}}-\frac{\partial}{\partial t_{2}%
}\right)  \int f\,~d\mu_{\mathbf{\alpha},\mathbf{t}},
\end{align*}
where we have used the heat equation \eqref{e.heat.eq} in the second equality.

An entirely analogous computation shows that the second term on the right-hand
side of \eqref{e.shortest.1} is equal to $(\frac{\partial}{\partial t_{3}%
}-\frac{\partial}{\partial t_{4}})\int f\,d\mu_{\mathbf{\alpha},\mathbf{t}}$,
and adding these up gives the left-hand-side of Theorem \ref{MMab.thm},
concluding the proof. \hfill$\square$

\subsection{Additional examples of the abstract Makeenko--Migdal equation
\label{MMexamples.sec}}

We have noted that the Makeenko--Migdal equation in Theorem \ref{MMun.thm} is
a special case of the abstract Makeenko--Migdal equation. As L\'{e}vy has
noted \cite[Proposition 6.24]{Levy}, the abstract result can be specialized in
many interesting ways; we mention a few of these here. We now take $K=U(N),$
with metric normalized as in (\ref{HSN}).

First, suppose that $\mathbb{G}$ is a graph and $L_{1},\ldots,L_{r}$ are loops
traced out in $\mathbb{G}.$ Assume that $L_{r}$ has a simple crossing at $v$
and that none of the remaining loops passes through $v.$ Let $L_{r}^{\prime}$
and $L_{r}^{\prime\prime}$ be the splitting of $L_{r}$ at $v, $ as in the
statement of Theorem \ref{MMun.thm}. Then%
\begin{align*}
&  \left(  \frac{\partial}{\partial t_{1}}-\frac{\partial}{\partial t_{2}%
}+\frac{\partial}{\partial t_{3}}-\frac{\partial}{\partial t_{4}}\right)
\mathbb{E}\{\mathrm{tr}(\mathrm{hol}(L_{1}))\cdots\mathrm{tr}(\mathrm{hol}%
(L_{r-1}))\mathrm{tr}(\mathrm{hol}(L_{r}))\}\\
&  =\mathbb{E}\{\mathrm{tr}(\mathrm{hol}(L_{1}))\cdots\mathrm{tr}%
(\mathrm{hol}(L_{r-1}))\mathrm{tr}(\mathrm{hol}(L_{r}^{\prime}))\mathrm{tr}%
(\mathrm{hol}(L_{r}^{\prime\prime}))\}.
\end{align*}
The derivation of this example from Theorem \ref{MMab.thm} is precisely the
same as in (\ref{MMholonomy}); the additional loop $L_{1},\ldots,L_{r-1}$
simply tag along for the ride.

\begin{figure}[ptb]
\centering
\includegraphics[
height=1.9303in,
width=2.5564in
]{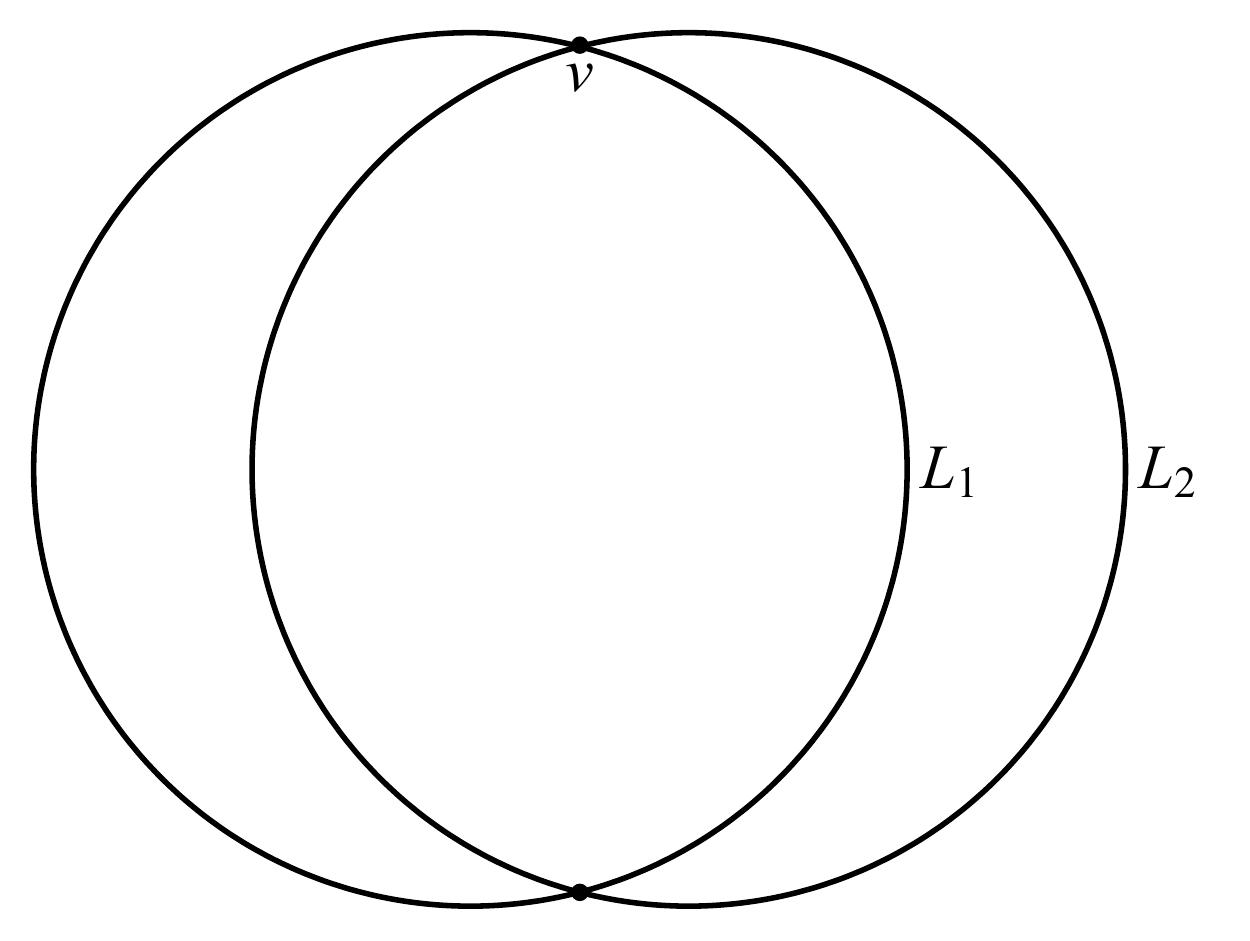}\caption{Two loops passing through $v$}%
\label{twoloops.fig}%
\end{figure}

Second, suppose $L_{1}$ and $L_{2}$ are two loops traced out in a graph
$\mathbb{G}$ and suppose $L_{1}$ starts at $v$, goes out along $e_{1}$, and
then eventually returns to $v$ along $e_{3}^{-1}$, but otherwise does not pass
through any of $e_{1},\ldots,e_{4}$. Suppose $L_{2}$ starts at $v$ goes out
along $e_{2}$ and then eventually returns to $v$ along $e_{4}^{-1}$, but
otherwise does not pass through any of $e_{1},\ldots,e_{4}.$ (See Figure
\ref{twoloops.fig} for a simple example.) We will shortly verify that
\begin{align}
&  \left(  \frac{\partial}{\partial t_{1}}-\frac{\partial}{\partial t_{2}%
}+\frac{\partial}{\partial t_{3}}-\frac{\partial}{\partial t_{4}}\right)
\mathbb{E}\left[  \mathrm{tr}(\mathrm{hol}(L_{1}))\mathrm{tr}(\mathrm{hol}%
(L_{2}))\right] \nonumber\\
&  =\frac{1}{N^{2}}\mathbb{E}\left[  \mathrm{tr}\{\mathrm{hol}(L_{1}%
)\mathrm{hol}(L_{2})\}\right]  . \label{twoLoops}%
\end{align}
For this example, we note that the integrand on the left-hand side of
(\ref{twoLoops}) has the form
\[
f(\mathbf{x})=\mathrm{tr}(a_{3}^{-1}\alpha a_{1})\mathrm{tr}(a_{4}^{-1}\beta
a_{2})
\]
where $\alpha$ and $\beta$ are words in the $\mathbf{b}$ variables. This
function has extended gauge invariance at $v$, and we find that%
\begin{align*}
(\nabla^{a_{1}}\cdot\nabla^{a_{2}}f)(\mathbf{x})  &  =\sum_{X}\mathrm{tr}%
(a_{3}^{-1}\alpha a_{1}X)\mathrm{tr}(a_{4}^{-1}\beta a_{2}X)\\
&  =-\frac{1}{N^{2}}\mathrm{tr}(a_{3}^{-1}\alpha a_{1}a_{4}^{-1}\beta a_{2})\\
&  =-\frac{1}{N^{2}}\mathrm{tr}[\mathrm{hol}(L_{1})\mathrm{hol}(L_{2})],
\end{align*}
where in the second equality, we have used a simple identity (e.g., the last
line of Proposition 3.1 in \cite{DHK}). Then (\ref{twoLoops}) follows from
Theorem \ref{MMab.thm}.

\section{A proof using loop variables\label{loop.sec}}

In Section 7 of \cite{Dahl}, Dahlqvist gave a proof of the Makeenko--Migdal
equation for $U(N)$ (Theorem \ref{MMun.thm}) using \textquotedblleft
loop\textquotedblright\ or \textquotedblleft lasso\textquotedblright%
\ variables. This proof stands in contrast to the proof in \cite{Levy} of the
abstract Makeenko--Migdal equation (which implies Theorem \ref{MMun.thm})
using \textquotedblleft edge\textquotedblright\ variables. Like L\'{e}vy's
proof using edge variables, Dahlqvist's proof is based on a formula for the
derivative with respect to an individual time variable: a formula which
contains a large number of terms that must cancel upon taking the alternating
sum. We give a new loop-based proof in which we work with the alternating sum
from the beginning and obtain the necessary cancellations without ever
encountering all the terms arising in \cite{Dahl}. Our loop-based proof
actually establishes the abstract Makeenko--Migdal (Theorem \ref{MMab.thm})
for functions that are gauge invariant, in addition to having extended gauge
invariance at the crossing in question. This result contains, as a special
case, the Makeenko--Migdal equation for $U(N).$

The goal of this section is to give a different proof of the following special
case of the abstract Makeenko--Migdal equation.

\begin{theorem}
\label{mmeLoop.thm}Let $v$ be a vertex of $\mathbb{G}$ with four incident
edges. Assume that $f$ is gauge invariant in the sense of Definition
\ref{discreteGaugeInv.def} and that $f$ has extended gauge invariance at $v.$
Then the abstract Makeenko--Migdal equation for $f$ at $v$ holds, as in
Theorem \ref{MMab.thm}.
\end{theorem}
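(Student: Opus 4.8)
The plan is to re-express the Yang--Mills measure in terms of loop (``lasso'') variables attached to a spanning tree, in the spirit of \cite{Dahl}, but to work with the \emph{full alternating sum} of area derivatives from the outset, so that the ``non-local'' contributions cancel before they are ever written down. Concretely, I would fix a spanning tree $T$ of $\mathbb{G}$ rooted at $v$; to each non-tree edge there is associated a lasso based at $v$, with exactly one lasso per bounded face, and after the standard change of variables the measure $\mu$ becomes a product $\prod_i\rho_{t_i}(w_i)$ of conjugation-invariant heat kernels against Haar measure in the lasso variables, where $w_i$ is a word ``anchored'' by the lasso of $F_i$. Using the lasso machinery of \cite{Dahl,Levy}, one arranges $T$ so that the four heat kernels attached to $F_1,\dots,F_4$ depend on the edges at $v$ only through $a_1,\dots,a_4$, with $w_i$ conjugate to the plaquette holonomy $h_i=a_{i+1}^{-1}\alpha_i a_i$ of \eqref{hi}, so that near $v$ the picture is again \eqref{muInAVariables} but the remaining heat kernels have been decoupled. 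Since $f$ is gauge invariant in the sense of Definition \ref{discreteGaugeInv.def}, it descends to a function of the lasso variables, and since $\nabla^{a_1}\cdot\nabla^{a_2}f$ is again gauge invariant by Proposition \ref{gaugeInvPreserved.prop}, both sides of \eqref{e.MM.abstract} are insensitive to the gauge fixing implicit in the lasso construction. This use of ordinary gauge invariance --- in addition to extended gauge invariance at $v$ --- is precisely why the loop-variable argument carries the extra hypothesis absent from the edge-variable proofs.

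Next I would differentiate $\int f\,d\mu$ in a single $t_i$: pass under the integral, apply the heat equation \eqref{e.heat.eq} to $\rho_{t_i}(w_i)$, and integrate by parts so that half a Laplacian comes off $\rho_{t_i}$. Rewriting the Laplacian through $\tfrac14(\nabla^{a_i}-\nabla^{a_{i+1}})^2$ by the conjugation-invariance manipulation of Lemma \ref{l.Brian's.proof.1}, and then integrating by parts once more, produces a ``good'' term, in which $(\nabla^{a_i}-\nabla^{a_{i+1}})^2$ lands on $f$, together with ``bad'' terms in which a derivative reaches one of the remaining heat kernels. A single lasso generator will in general occur in more than one of the words $w_j$, so these bad terms are genuinely present for an individual $\partial/\partial t_i$ --- this is the source of the ``large number of terms'' encountered in \cite{Dahl} --- but I expect that upon forming the alternating sum $\partial_{t_1}-\partial_{t_2}+\partial_{t_3}-\partial_{t_4}$ they cancel in pairs, in the same way the terms $B_j$ cancel in the first edge-variable proof, and that the cancellation involves only the four heat kernels meeting at $v$. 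Carrying this out explicitly, in a bookkeeping that keeps the argument local and makes no reference to the unbounded face, is the step I expect to be the main obstacle.

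After the bad terms are gone, what survives is $\tfrac18\sum_{j=1}^4(-1)^{j-1}\int(\nabla^{a_j}-\nabla^{a_{j+1}})^2 f\,d\mu$, and from here the argument closes exactly as in the edge-variable proof: expand $(\nabla^{a_j}-\nabla^{a_{j+1}})^2=(\nabla^{a_j})^2+(\nabla^{a_{j+1}})^2-2\,\nabla^{a_j}\cdot\nabla^{a_{j+1}}$; observe that the diagonal pieces $\int(\nabla^{a_j})^2 f\,d\mu$ telescope to zero in the alternating sum; and use the extended gauge invariance of $f$ in its infinitesimal form \eqref{e.inf.ext.gauge.inv}, together with the commutation \eqref{doubleInv} --- that is, the chain of identities \eqref{gradABforms} --- to collapse the four surviving mixed terms into one. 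This leaves precisely $-\int\nabla^{a_1}\cdot\nabla^{a_2}f\,d\mu$, which is \eqref{e.MM.abstract}. Finally, the non-generic configurations (coincidences among $e_1,\dots,e_4$, or an unbounded face among $F_1,\dots,F_4$) are disposed of as in Section \ref{generic.sec}: the convention $\partial/\partial t_i=0$ for the unbounded face removes the corresponding derivatives, and coinciding edges are handled by the matching modification of the lasso variables, which does not disturb the local cancellation.
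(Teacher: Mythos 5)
Your proposal misidentifies where the difficulty lies in a loop-variable proof, and as a result the central mechanism you describe does not exist in that setting. By L\'{e}vy's independence result (Proposition 4.4 of \cite{Levy}, quoted just before \eqref{changeOfMeasure}), the push-forward $\tilde{\mu}$ of the Yang--Mills measure under the map to loop variables is a product of heat kernels \emph{each evaluated at a single loop variable} $\ell_i$ --- not at a word $w_i$ in several lasso generators. Consequently, when you differentiate in $t_i$, use the heat equation, and integrate by parts, the Laplacian lands \emph{only} on $g$ and never on another heat kernel: one gets \eqref{LapTerm} immediately, with no ``bad terms'' at all on the time-derivative side. The cancellation you flag as ``the main obstacle'' --- bad terms from neighboring heat kernels cancelling in the alternating sum, in the style of the $B_j$ terms of the first edge-variable proof --- is a phenomenon of the \emph{edge}-variable density \eqref{muInAVariables}, where each $a_j$ sits in two heat kernels; it has no counterpart in the loop picture. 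If, on the other hand, you carry out your steps literally in the edge variables (where that cancellation is real), you have simply reproduced the first edge-variable proof of Theorem \ref{localMM.thm}: the lasso change of variables then plays no role, the gauge-invariance hypothesis is never used, and nothing new is proved.

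The genuine work in the loop-variable proof, which your proposal never engages, is on the \emph{right-hand} side of \eqref{e.MM.abstract}. The operator $\nabla^{a_1}\cdot\nabla^{a_2}$ is defined through the edge variables, and one must first translate it into loop-variable derivatives via the induced transformations \eqref{extendLoop1}--\eqref{extendLoop3} (Definition \ref{gradabLoop.def}); after using conjugation invariance of $\tilde{\mu}$ (Lemma \ref{adInv.lem}) this produces, besides the desired Laplacians, the surplus combination $\mathrm{\Delta}_{\ell_1}-\mathrm{\Delta}_{\ell_2}+\mathrm{\Delta}_{\ell_3}-\mathrm{\Delta}_{\ell_4}+2\nabla^{\ell_1}\cdot\nabla^{\ell_3}-2\nabla^{\ell_2}\cdot\nabla^{\ell_4}$, and one must prove that this integrates to zero, as in \eqref{integratedIdentity}. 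That step is where both hypotheses are actually consumed: ordinary gauge invariance lets $f$ descend to the loop variables and makes Lemma \ref{adInv.lem} applicable (killing all derivatives in $\ell_j$, $j\geq 5$), and extended gauge invariance, differentiated as in \eqref{adInvId}, supplies the identity \eqref{integratedIdentity2} from which \eqref{integratedIdentity} follows. Your closing appeal to \eqref{e.inf.ext.gauge.inv} and \eqref{gradABforms} is fine as far as it goes, but it only addresses the easy final collapse; the proof you outline has no route to the intermediate expression $\tfrac18\sum_{j=1}^4(-1)^{j-1}\int(\nabla^{a_j}-\nabla^{a_{j+1}})^2 f\,d\mu$ within the loop-variable framework, so the argument as proposed does not go through.
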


If the four edges at $v$ are not distinct, extended gauge invariance should be
interpreted as in Section \ref{interpret.sec}. This result implies, in
particular, the usual Makeenko--Migdal equation for the trace in $U(N)$ of the
holonomy around a loop with a simple crossing at $v.$

\subsection{The loop variables}

It is well known that the fundamental group of any graph is free. Given a
planar graph $\mathbb{G}$, a fixed vertex $v$ of $\mathbb{G}$, and a spanning
tree $\mathbb{T}$ in $\mathbb{G}$, L\'{e}vy gives a particular set of free
generators for $\pi_{1}(\mathbb{G})$, which we refer to as loops or lassos.
The generators are in one-to-one correspondence with the bounded faces of
$\mathbb{G}$, and are constructed as follows. First, the choice of
$\mathbb{T}$ determines, for each bounded face $F$ of $\mathbb{G},$ a
distinguished edge $e_{F}$ on the boundary of $F$ that bounds $F$ positively.
Roughly, we travel from $F$ to the unbounded face of $\mathbb{G}$ by crossing
only edges not in $\mathbb{T},$ and $e$ is the first such edge crossed. To be
more precise, we look for a sequence of faces $F_{1},\ldots,F_{n}$ with
$F_{1}=F$ and $F_{n}$ being the unbounded face, and a sequence of edges
$e_{1},\ldots,e_{n-1}$ where $e_{i}$ is not in $\mathbb{T},$ lies on the
boundary of both $F_{i}$ and $F_{i+1},$ and bounds $F_{i}$ positively. The
$F_{i}$'s and $e_{i}$'s exist and are unique, as long as we do not immediately
recross an edge we just crossed; that is, $e_{i+1}$ should not be the inverse
of $e_{i}$. (See Section 4.3 of \cite{Levy} for more information.) For a given
bounded face $F,$ we set $e_{F}=e_{1}.$

L\'{e}vy then associates to each bounded face $F$ of $\mathbb{G}$ a loop
$l_{F}$, as follows. Let $v_{F}$ be the initial vertex of $e_{F}.$ We then
start at the base point $v,$ proceed to $v_{F}$ along a path $p$ in
$\mathbb{T},$ travel around $F$ in the positive direction (beginning with the
edge $e_{F}$), and then return to $v$ along the inverse of $p.$ The loop
$l_{F}$ is then the reduced loop obtained by removing backtracks from the
just-described loop. According to Proposition 4.2 of \cite{Levy}, the $l_{F}%
$'s form a set of free generators for $\pi_{1}(\mathbb{G}).$ (The well-known
general procedure for constructing free generators of the fundamental group of
a finite graph associates a free generator to each undirected edge of
$\mathbb{G}\setminus\mathbb{T}.$ L\'{e}vy's construction refines this
procedure in the planar case by making a one-to-one correspondence between the
undirected edges of $\mathbb{G}\setminus\mathbb{T}$ and the bounded faces of
$\mathbb{G}.$)

We now introduce the loop variables, which are simply the products of the edge
variables associated to the edges in the just-defined loops. The loop
variables are almost the same as the holonomy variables $h_{i}$ entering into
Driver's formula (\ref{DriversForm}), except that they contain a
\textquotedblleft tail\textquotedblright\ representing the path $p$ in the
previous paragraph. (Since $\rho_{t}$ is conjugation invariant, the tail may
be omitted from the heat kernel.)

In Theorem \ref{mmeLoop.thm}, we do not make any genericity assumptions on
$\mathbb{G}.$ In the proof, however, it is convenient to assume at first---as
in our other proofs---that $\mathbb{G}$ is generic at $v,$ meaning that the
four edges emanating from $v$ are distinct, and that the four adjacent faces
are distinct and bounded. In addition, it is convenient to assume that it is
possible to choose a spanning tree $\mathbb{T}$ for $\mathbb{G}$ in such a way
that the loops $L_{i}$ associated to the adjacent faces have the following
form:
\begin{equation}
L_{i}=\partial F_{i}=e_{i}A_{i}e_{i+1}^{-1},\quad i=1,2,3,4,
\label{adjacentLoopAssumption}%
\end{equation}
where $A_{i}$ is a word in edges not belonging to $\{e_{1},e_{2},e_{3}%
,e_{4}\}.$ In the current section, we prove Theorem \ref{mmeLoop.thm} under
both of these assumptions on $\mathbb{G}.$ In Section \ref{generic.sec}, these
assumptions will be lifted.

If the adjacent loops have the form in (\ref{adjacentLoopAssumption}), then
since parallel transport is order-reversing, the corresponding loop
\textit{variables} (with values in $K$) will have the form%
\begin{equation}
\ell_{i}=a_{i+1}^{-1}\alpha_{i}a_{i},\quad i=1,2,3,4, \label{adjacentLoops}%
\end{equation}
where $\alpha_{i}$ is a word in the $\mathbf{b}$ variables (that is, the edge
variables not belonging to $\{a_{1},a_{2},a_{3},a_{4}\}$).

\begin{figure}[ptb]
\centering
\includegraphics[
height=2.5564in,
width=2.5564in
]{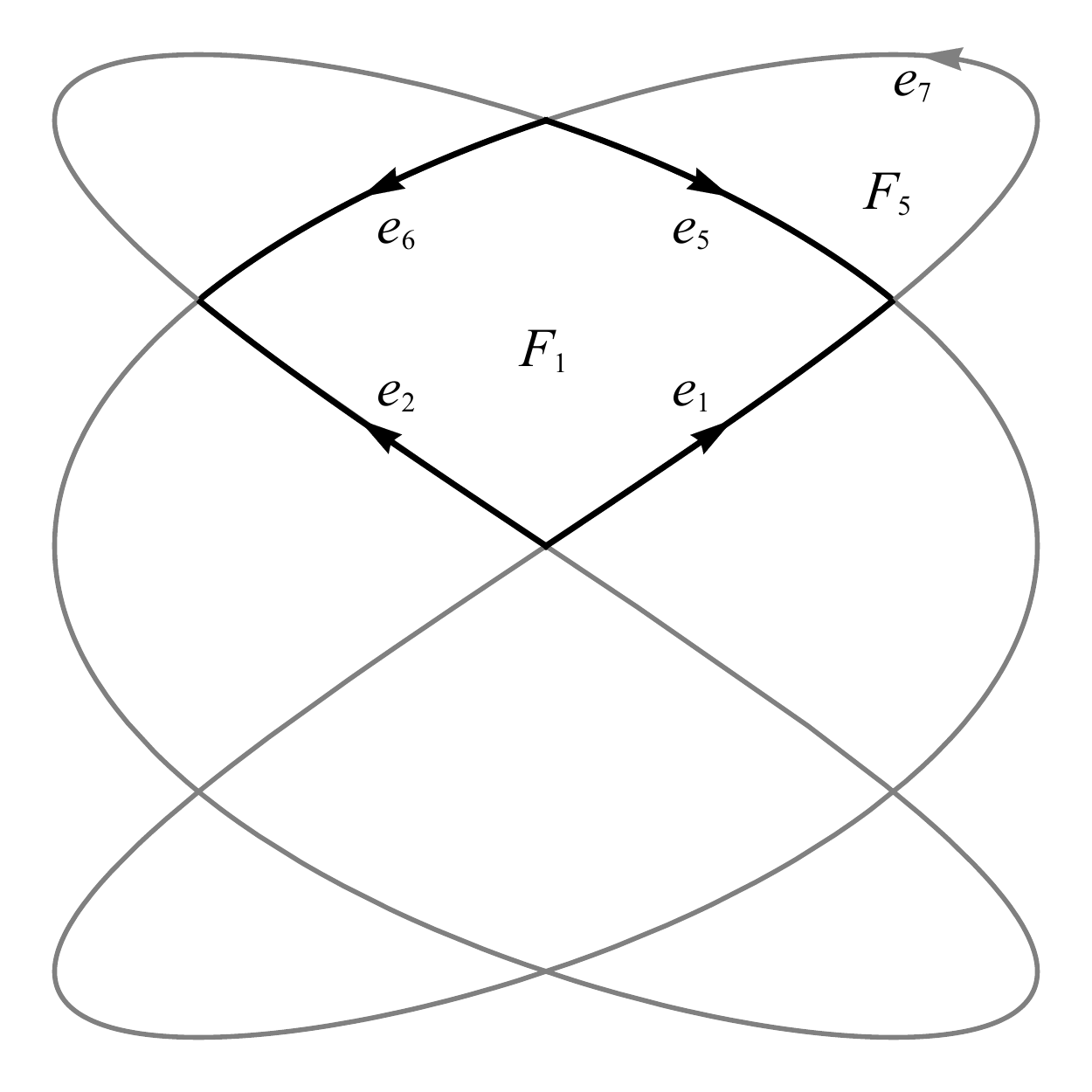}\caption{The loop $L_{1}$ associated to $F_{1}$ is $e_{1}%
e_{5}^{-1}e_{6}e_{2}^{-1}$}%
\label{loopvar1.fig}%
\end{figure}

Meanwhile, for the loop $L_{j}$ associated to any bounded face \textit{other}
than $F_{1},$ $F_{2},$ $F_{3},$ and $F_{4},$ the associated loop will have the
form $L_{j}=e_{i_{j}}B_{j}e_{i_{j}}^{-1},$ where $e_{i_{j}}\in\{e_{1}%
,e_{2},e_{3},e_{4}\}$ is the first edge traversed by $L_{j}$ and where $B_{j}$
is a word in edge variables not in $\{e_{1},e_{2},e_{3},e_{4}\}.$ Thus, the
corresponding loop \textit{variable} will have the form%
\begin{equation}
\ell_{j}=a_{i_{j}}^{-1}\beta_{j}a_{i_{j}},\quad j\geq5,
\label{nonAdjacentLoops}%
\end{equation}
where $\beta_{j}$ is a word in the $\mathbf{b}$ variables. In Figures
\ref{loopvar1.fig} and \ref{loopvar2.fig}, for example, if the spanning tree
is chosen to include the edges $e_{1}$, $e_{2}$, and $e_{6}$, then the
distinguished edges associated to $F_{1}$ and $F_{5}$ in L\'{e}vy's procedure
will be $e_{5}^{-1} $ and $e_{7}$, respectively, and the associated loops
$L_{1}$ and $L_{5}$ will be as indicated in the figures.

\begin{figure}[ptb]
\centering
\includegraphics[
height=2.5564in,
width=2.5564in
]{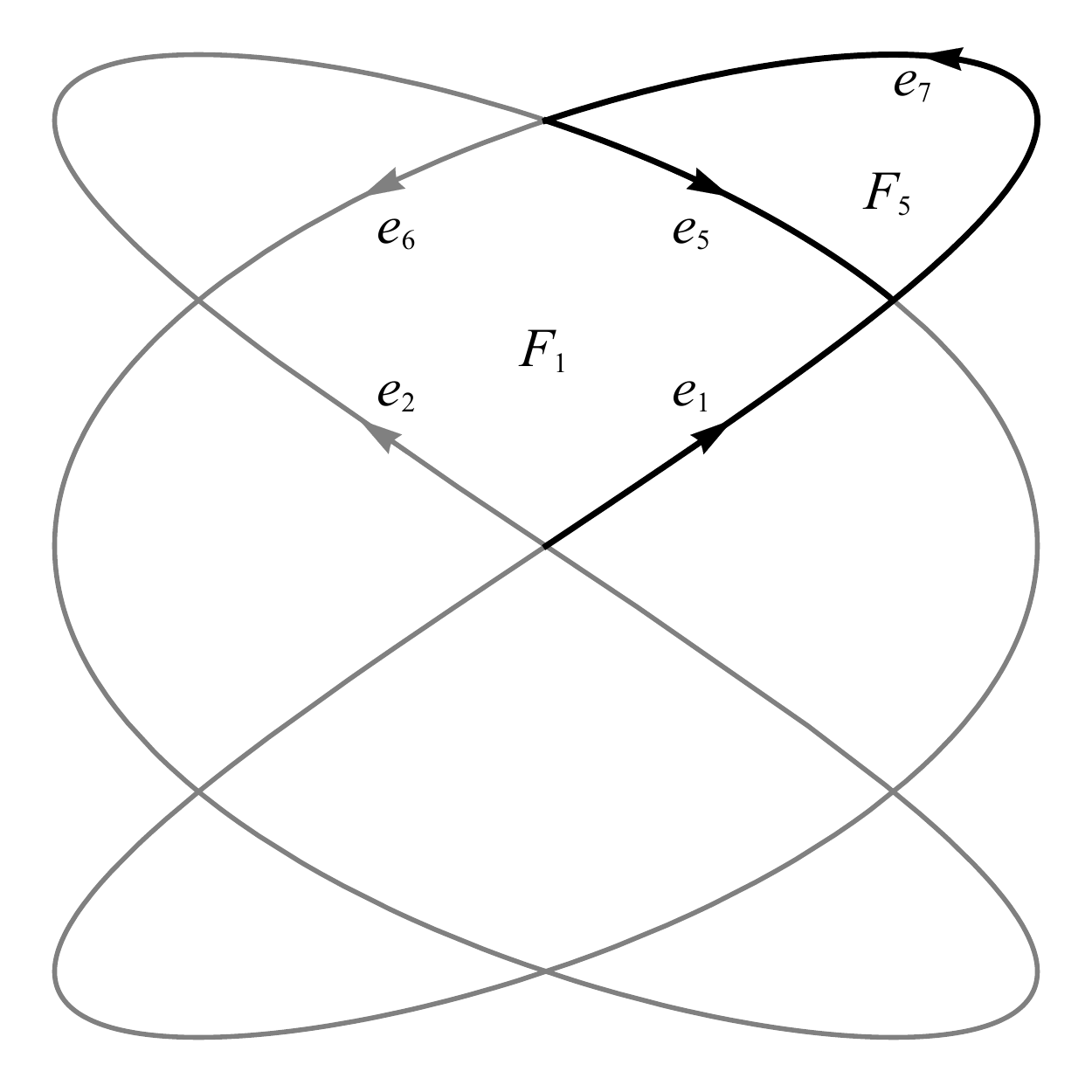}\caption{The loop $L_{5}$ associated to $F_{5}$ is $e_{1}%
e_{7}e_{5}e_{1}^{-1} $}%
\label{loopvar2.fig}%
\end{figure}

If $n$ is the number of (non-oriented) edges of $\mathbb{G}$ and $m$ is the
number of bounded faces, the assignments in (\ref{adjacentLoops}) and
(\ref{nonAdjacentLoops}) define a smooth map $\mathrm{\Gamma}:K^{n}\rightarrow
K^{m},$ sending the edge variables to the loop variables. Typically, there
will be fewer loop variables than edge variables. Thus, not every function of
the edge variables (whether or not the function has extended gauge invariance
at a particular vertex $v$) will be expressible as a function of the loop
variables. On the other hand, since the basic loops generate the fundamental
group of $\mathbb{G},$ the trace of the holonomy around any closed curve in
$\mathbb{G}$ will be expressible as a function of the loop variables. More
generally, according to Lemma 2.1.5 of \cite{Levy1}, if $f$ is gauge invariant
(Definition \ref{discreteGaugeInv.def}), then $f$ can be expressed as a
function of the loop variables. (More precisely, the cited result of L\'{e}vy
shows that a gauge-invariant function can be expressed in terms of
\textit{certain} loop variables; since the $L_{j}$'s generate $\pi
_{1}(\mathbb{G}),$ those loop variables can then be expressed in terms of the
$\ell_{j}$'s.)

Meanwhile, L\'{e}vy has shown, using Driver's formula \eqref{DriversForm} for
the Yang--Mills measure for $\mathbb{G}$, that the loop variables are
independent and heat kernel distributed. (See Proposition 4.4 in \cite{Levy}.)
That is to say: the push-forward of the measure $\mu$ under the map
$\mathrm{\Gamma}$ is simply the product of heat kernel measures with time
parameters equal to the areas of the bounded faces. (This independence result
does not hold for surfaces other than the plane; thus, our proof using the
loop variables does not extend to general surfaces.) If $\tilde{\mu}$ refers
to this pushed forward measure, the measure-theoretic change of variables
theorem says that if $f=g\circ\mathrm{\Gamma},$ then
\begin{equation}
\int_{K^{n}}f~d\mu=\int_{K^{m}}g~d\tilde{\mu}. \label{changeOfMeasure}%
\end{equation}

We now consider how changes in the four \textquotedblleft
adjacent\textquotedblright\ edge variables affect the loop variables. If we
change, say, $a_{1}$ to $a_{1}x,$ we can read off the corresponding change in
the adjacent loop variables (as in (\ref{adjacentLoops})) as%
\[
(\ell_{1},\ell_{2},\ell_{3},\ell_{4})\mapsto(\ell_{1}x,\ell_{2},\ell
_{3},x^{-1}\ell_{4}).
\]
Meanwhile, each nonadjacent loop variable $\ell_{j}$ with $j\geq5$ (as in
(\ref{nonAdjacentLoops})) will either be conjugated by $x$ or unchanged,
depending on whether $L_{j}$ goes out along $e_{1}$ or along $e_{2},$ $e_{3},
$ or $e_{4}.$ Similar transformation rules hold for changes in $a_{2}, $
$a_{3},$ and $a_{4}.$

In particular, if we make the substitution
\[
(a_{1},a_{2},a_{3},a_{4})\mapsto(a_{1}x,a_{2}y,a_{3}x,a_{4}y)
\]
we have the following substitutions for the adjacent loop variables:%
\begin{equation}
(\ell_{1},\ell_{2},\ell_{3},\ell_{4})\mapsto(y^{-1}\ell_{1}x,x^{-1}\ell
_{2}y,y^{-1}\ell_{3}x,x^{-1}\ell_{4}y), \label{extendLoop1}%
\end{equation}
whereas each loop $\ell_{j}$ with $j\geq5$ changes either as%
\begin{equation}
\ell_{j}\mapsto x^{-1}\ell_{j}x \label{extendLoop2}%
\end{equation}
or as
\begin{equation}
\ell_{j}\mapsto y^{-1}\ell_{j}y \label{extendLoop3}%
\end{equation}
depending on the first outgoing edge traversed by the loop $L_{j}$.

The above calculation motivates the following definition.

\begin{definition}
\label{extendedInvariance.def}We say that a function on $K^{m}$ has
\emph{extended gauge invariance} at $v$ if it is invariant under every
transformation of the sort in \eqref{extendLoop1}, \eqref{extendLoop2}, and \eqref{extendLoop3}.
\end{definition}

Since changes in the variables $a_{1},$ $a_{2},$ $a_{3},$ and $a_{4}$
translate into simple changes in the loop variables, we can translate the
differential operators $\nabla^{a_{i}}\cdot\nabla^{a_{i+1}}$ on $K^{n}$ into
differential operators on $K^{m}.$ In a slight abuse of notation, we will
continue to refer to the operators on $K^{m}$ as $\nabla^{a_{i}}\cdot
\nabla^{a_{i+1}}.$

\begin{definition}
\label{gradabLoop.def}Suppose $g:K^{m}\rightarrow\mathbb{C}$ is a smooth
function of the loop variables. Then $\nabla^{a_{i}}\cdot\nabla^{a_{j}}%
g:K^{m}\rightarrow\mathbb{C},$ with $i\neq j$ in $\{1,2,3,4\},$ is the
function computed as follows. Let $\boldsymbol{\ell}=(\ell_{i})_{i=1}^{m}$ be
the loop variables defined in \eqref{adjacentLoops} and
\eqref{nonAdjacentLoops}. Define a parametrized surface $\boldsymbol{\ell
}(s,t)$ in $K^{m}$ by replacing $a_{i}$ and $a_{j}$ with $a_{i}\mapsto
a_{i}e^{sX}$ and $a_{j}\mapsto a_{j}e^{tX}$ in those two equations. We then
set%
\[
(\nabla^{a_{i}}\cdot\nabla^{a_{j}}g)(\boldsymbol{\ell})=\left.  \sum_{X}%
\frac{\partial^{2}}{\partial s\partial t}g(\boldsymbol{\ell}(s,t))\right\vert
_{s=t=0}.
\]

\end{definition}

The notation is suggestive because if $\mathrm{\Gamma}$ is the map from the
edge variables to the loop variables, we have $\nabla^{a_{i}}\cdot
\nabla^{a_{j}}(g\circ\mathrm{\Gamma})=(\nabla^{a_{i}}\cdot\nabla^{a_{j}%
}g)\circ\mathrm{\Gamma}.$ We will compute these operators in the next subsection.

\subsection{The proof}

We now prove Theorem \ref{mmeLoop.thm}, under the following assumptions on
$\mathbb{G}$: First, the edges $e_{1},\ldots,e_{4}$ at $v$ are distinct,
second, the faces $F_{1},\ldots,F_{4}$ adjacent to $v$ are distinct and
bounded, and third, it is possible to choose a spanning tree $\mathbb{T}$ for
$\mathbb{G}$ in such a way that the loops associated to $F_{1},\ldots,F_{4}$
have the form given in (\ref{adjacentLoopAssumption}). All of these
assumptions are lifted in Section \ref{generic.sec}.

Suppose $f$ satisfies the hypotheses of Theorem \ref{mmeLoop.thm}. As we have
noted, since $f$ is gauge invariant, it can be expressed as $f=g\circ
\mathrm{\Gamma},$ where $g$ is a function of the loop variables. Since, also,
$f $ has extended gauge invariance at $v,$ the function $g$ has extended gauge
invariance at $v$ in the sense of Definition \ref{extendedInvariance.def}. To
prove the theorem, it suffices to show that%

\begin{equation}
\left(  \frac{\partial}{\partial t_{1}}-\frac{\partial}{\partial t_{2}}%
+\frac{\partial}{\partial t_{3}}-\frac{\partial}{\partial t_{4}}\right)
\int_{K^{m}}g~d\tilde{\mu}=-\int_{K^{m}}\nabla^{a_{1}}\cdot\nabla^{a_{2}%
}g~d\tilde{\mu}, \label{loopVersion}%
\end{equation}
where $\nabla^{a_{1}}\cdot\nabla^{a_{2}}$ is the differential operator in
Definition \ref{gradabLoop.def}.

The advantage of working with the loop variables is that since each heat
kernel is evaluated on a separate loop variable, when we differentiate and
integrate by parts, none of the derivatives hits on any other heat kernel, but
only on the function $g.$ Thus,%
\begin{align}
&  \left(  \frac{\partial}{\partial t_{1}}-\frac{\partial}{\partial t_{2}%
}+\frac{\partial}{\partial t_{3}}-\frac{\partial}{\partial t_{4}}\right)
\int_{K^{m}}g~d\tilde{\mu}\nonumber\\
&  =\frac{1}{2}\int_{K^{m}}\left(  \mathrm{\Delta}_{\ell_{1}}-\mathrm{\Delta
}_{\ell_{2}}+\mathrm{\Delta}_{\ell_{3}}-\mathrm{\Delta}_{\ell_{4}}\right)
g~d\tilde{\mu}. \label{LapTerm}%
\end{align}
On the other hand, even if $g$ has extended gauge invariance, the integrand on
the right-hand side of (\ref{LapTerm}) will contain many other terms besides
the one we want. (Compare Section \ref{example.sec}.) We will show that these
unwanted terms cancel out after integration.

We now consider the right-hand side of the loop form of the Makeenko--Migdal
equation, as in (\ref{loopVersion}). A key point will be to exploit the
invariance of the measure $\tilde{\mu}$ under conjugation in each variable. We
consider both a left-invariant gradient $\nabla^{L}$ and right-invariant
gradient $\nabla^{R}$ in the loop variables, similar to Definition
\ref{gradient.def}:%
\begin{align*}
(\nabla^{L}g)(\ell)  &  =\sum_{X}\left(  \left.  \frac{d}{ds}g(\ell
e^{sX})\right\vert _{s=0}\right)  ~X\\
(\nabla^{R}g)(\ell)  &  =\sum_{X}\left(  \left.  \frac{d}{ds}g(e^{sX}%
\ell)\right\vert _{s=0}\right)  ~X.
\end{align*}

\begin{lemma}
\label{adInv.lem}If $g$ is any smooth function on $K^{m},$ we have%
\begin{align*}
\int_{K^{m}}\nabla^{\ell_{j},L}\cdot\nabla^{\ell_{k},L}g~d\tilde{\mu}  &
=\int_{K^{m}}\nabla^{\ell_{j},L}\cdot\nabla^{\ell_{k},R}g~d\tilde{\mu}\\
&  =\int_{K^{m}}\nabla^{\ell_{j},R}\cdot\nabla^{\ell_{k},L}g~d\tilde{\mu}%
=\int_{K^{m}}\nabla^{\ell_{j},R}\cdot\nabla^{\ell_{k},R}g~d\tilde{\mu},
\end{align*}
where $\nabla^{\ell_{j},L}$ and $\nabla^{\ell_{j},R}$ denote the
left-invariant and right-invariant gradients in the variable $\ell_{j}$,
respectively, with the other variables fixed.
\end{lemma}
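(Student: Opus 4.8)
The plan is to exploit the fact that the pushed-forward measure $\tilde\mu$ is a product of conjugation-invariant heat kernel measures, hence invariant under the conjugation action $\ell_j \mapsto u\ell_j u^{-1}$ in each variable separately. The key observation is that conjugating by $u=e^{sX}$ interchanges the roles of the left- and right-invariant gradients: if $C^X_j$ denotes the flow $\ell_j\mapsto e^{sX}\ell_j e^{-sX}$ (conjugation) in the $j$-th variable, then differentiating $g\circ C^X_j$ at $s=0$ produces exactly $(\widehat{X}^{\ell_j,L} - \widehat{X}^{\ell_j,R})g$, since right-multiplication by $e^{-sX}$ generates the left-invariant field and left-multiplication by $e^{sX}$ generates the right-invariant field (up to the sign coming from the inverse). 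Because $\tilde\mu$ is invariant under this flow, $\int (\widehat{X}^{\ell_j,L} - \widehat{X}^{\ell_j,R})h\,d\tilde\mu = 0$ for any smooth $h$; equivalently, inside the integral we may freely replace $\nabla^{\ell_j,L}$ by $\nabla^{\ell_j,R}$ as long as we are careful about what the operator is being applied to.

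First I would make this precise: fix $X$ in an orthonormal basis of $\mathfrak k$, and for the pair of indices $j\ne k$ consider the two-parameter family $\ell\mapsto(\dots, e^{sX}\ell_j e^{-sX},\dots)$ obtained by conjugating only the $j$-th variable. Invariance of $\tilde\mu$ under this flow gives $\int_{K^m}\left.\frac{d}{ds}\right|_{s=0} g(\text{conjugated }\ell)\,d\tilde\mu = 0$, i.e. $\int (\widehat X^{\ell_j,L} - \widehat X^{\ell_j,R})g\,d\tilde\mu = 0$. Applying this identity not to $g$ but to $(\widehat X^{\ell_k,L} g)$ — which is again a smooth function on $K^m$ — and summing over $X$ yields $\int \nabla^{\ell_j,L}\cdot\nabla^{\ell_k,L}g\,d\tilde\mu = \int \nabla^{\ell_j,R}\cdot\nabla^{\ell_k,L}g\,d\tilde\mu$; here I use that $\widehat X^{\ell_j}$ and $\widehat X^{\ell_k}$ commute since $j\ne k$ makes $\ell_j,\ell_k$ independent coordinates, so the order of the derivatives does not matter. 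This establishes the equality of the first and third displayed integrals. Symmetrically, applying the same invariance argument in the $k$-th variable to the function $(\widehat X^{\ell_j,\bullet}g)$ converts $\nabla^{\ell_k,L}$ into $\nabla^{\ell_k,R}$, giving equality with the second and fourth integrals as well. Chaining the four conversions gives all four integrals equal.

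The one point requiring a little care — and the place I expect the bookkeeping to be most delicate — is the sign and the precise identification of which one-parameter subgroup generates which invariant gradient under conjugation, together with making sure that when we apply the ``integration-by-parts-free'' identity $\int(\nabla^{\ell_j,L}-\nabla^{\ell_j,R})h\,d\tilde\mu=0$ we are genuinely allowed to plug in $h = \widehat X^{\ell_k,\bullet}g$ with the \emph{same} basis element $X$; this works because conjugation invariance holds simultaneously for all of $K$, so it holds in particular along the subgroup $e^{sX}$ for the fixed $X$, and the function $h$ is perfectly smooth. Since the commuting of the $j$ and $k$ derivatives is immediate from independence of the coordinates, no further subtlety arises, and the lemma follows. (Strictly, one should note $\tilde\mu$ is a finite measure and $g$ smooth on the compact manifold $K^m$, so all differentiations under the integral sign and all boundary-free integrations are justified.)
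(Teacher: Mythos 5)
Your argument is correct and is essentially the paper's own proof: both rest on the conjugation invariance of each heat-kernel factor of $\tilde\mu$ in each loop variable separately, the paper phrasing it as the integrated identity $\int g(\ldots,\ell_j x,\ldots)\,d\tilde\mu=\int g(\ldots,x\ell_j,\ldots)\,d\tilde\mu$ and then differentiating twice, while you phrase it infinitesimally as $\int(\nabla^{\ell_j,L}-\nabla^{\ell_j,R})h\,d\tilde\mu=0$ applied to $h=\widehat{X}^{\ell_k}g$. The only point worth adding is that the lemma is also invoked with $j=k$ (to compute $\mathrm{\Delta}_{\ell_j}$ as a left gradient dotted with a right gradient), and your argument covers that case verbatim since left- and right-invariant vector fields on $K$ commute.
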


\begin{proof}
The invariance of each heat kernel under conjugation (cf.\ \eqref{e.conj.inv})
tells us that for any $j,$ we have%
\[
\int_{K^{m}}g(\ell_{1},\ldots,\ell_{j}x,\ldots,\ell_{m})~d\tilde{\mu}%
=\int_{K^{m}}g(\ell_{1},\ldots,x\ell_{j},\ldots,\ell_{m})~d\tilde{\mu}.
\]
If follows that%
\begin{align*}
&  \left.  \frac{\partial^{2}}{\partial s\partial t}\int_{K^{m}}g(\ell
_{1},\ldots,\ell_{j}e^{tX},\ldots,\ell_{k}e^{sX},\ldots,\ell_{m})~d\tilde{\mu
}\right\vert _{s=t=0}\\
=  &  \left.  \frac{\partial^{2}}{\partial s\partial t}\int_{K^{m}}g(\ell
_{1},\ldots,e^{tX}\ell_{j},\ldots,e^{sX}\ell_{k},\ldots,\ell_{m})~d\tilde{\mu
}\right\vert _{s=t=0},
\end{align*}
from which it follows that $\int_{K^{m}}\nabla^{\ell_{j},L}\cdot\nabla
^{\ell_{k},L}g~d\tilde{\mu}$ coincides with $\int_{K^{m}}\nabla^{\ell_{j}%
,R}\cdot\nabla^{\ell_{k},R}g~d\tilde{\mu}.$ (The argument works equally well
whether $j=k$ or $j\neq k.$) All the other claimed equalities follow by
analogous arguments.
\end{proof}

In what follows, we will make repeated use, usually without mention, of Lemma
\ref{adInv.lem}. It is convenient, in this context, to use the notation%
\[
f\cong g
\]
to indicate that $f$ and $g$ have the same integral.

We now work out more explicitly the differential operators $\nabla^{a_{i}%
}\cdot\nabla^{a_{j}}$ in Definition \ref{gradabLoop.def}. If we make the
substitutions $a_{1}\mapsto a_{1}e^{sX}$ and $a_{2}\mapsto a_{2}e^{tX},$ the
loop variables change as in (\ref{extendLoop1}), (\ref{extendLoop2}), and
(\ref{extendLoop3}). In particular, under such substitutions, each $\ell_{j}$
with $j\geq5$ merely gets conjugated or not changed at all. Let $\mathbf{m}$
denote the tuple of variables $\ell_{5},\ldots,\ell_{m}$ and let
$\mathbf{m}^{\prime}$ denote the new value of the these variables after
changing $a_{1}$ and $a_{2}$ as above. Then, by the conjugation invariance of
the measure, we have
\begin{align*}
\int_{K^{m}}\nabla^{a_{1}}\cdot\nabla^{a_{2}}g~d\tilde{\mu}  &  =\sum
_{X}\left.  \frac{\partial^{2}}{\partial s\partial t}\int_{K^{m}}\!\!g(e^{-tX}%
\ell_{1}e^{sX},\ell_{2}e^{tX},\ell_{3},e^{-sX}\ell_{4},\mathbf{m}^{\prime
})d\tilde{\mu}\right\vert _{s=t=0}\\
&  =\sum_{X}\left.  \frac{\partial^{2}}{\partial s\partial t}\int_{K^{m}}
\!\!g(e^{(s-t)X}\ell_{1},e^{tX}\ell_{2},\ell_{3},e^{-sX}\ell_{4},\mathbf{m}%
)d\tilde{\mu}\right\vert _{s=t=0}\\
&  =\int_{K^{m}}(-\mathrm{\Delta}_{\ell_{1}}+\nabla^{\ell_{1}}\cdot
\nabla^{\ell_{2}}+\nabla^{\ell_{1}}\cdot\nabla^{\ell_{4}}-\nabla^{\ell_{2}%
}\cdot\nabla^{\ell_{4}})g d\tilde{\mu}.
\end{align*}
(Recall that $\mathbf{m}^{\prime}$ differs from $\mathbf{m}$ only by
conjugations in some of the variables, which does not affect the value of the
integral.) Note that in light of Lemma \ref{adInv.lem}, we do not have to
specify whether the gradients are left-invariant or right-invariant.

After integrating and using the conjugation invariance of the measure, we are
left with a \textquotedblleft local\textquotedblright\ formula for
$\int_{K^{m}}\nabla^{a_{1}}\cdot\nabla^{a_{2}}g~d\tilde{\mu},$ that is, one in
which only derivatives in the variables $\ell_{1},$ $\ell_{2},$ $\ell_{3}$,
and $\ell_{4}$ enter. Since (\ref{LapTerm}) is also local in this sense, there
is no need to consider derivatives in any variables not belonging to
$\{\ell_{1},\ell_{2},\ell_{3},\ell_{4}\}.$

Using similar calculations for the other pairs of cyclically adjacent
variables, we have%
\begin{align*}
\nabla^{a_{1}}\cdot\nabla^{a_{2}}g  &  \cong(-\mathrm{\Delta}_{\ell_{1}%
}+\nabla^{\ell_{1}}\cdot\nabla^{\ell_{2}}+\nabla^{\ell_{1}}\cdot\nabla
^{\ell_{4}}-\nabla^{\ell_{2}}\cdot\nabla^{\ell_{4}})g\\
\nabla^{a_{2}}\cdot\nabla^{a_{3}}g  &  \cong(-\mathrm{\Delta}_{\ell_{2}%
}+\nabla^{\ell_{2}}\cdot\nabla^{\ell_{3}}+\nabla^{\ell_{1}}\cdot\nabla
^{\ell_{2}}-\nabla^{\ell_{1}}\cdot\nabla^{\ell_{3}})g\\
\nabla^{a_{3}}\cdot\nabla^{a_{4}}g  &  \cong(-\mathrm{\Delta}_{\ell_{3}%
}+\nabla^{\ell_{3}}\cdot\nabla^{\ell_{4}}+\nabla^{\ell_{2}}\cdot\nabla
^{\ell_{3}}-\nabla^{\ell_{2}}\cdot\nabla^{\ell_{4}})g\\
\nabla^{a_{4}}\cdot\nabla^{a_{1}}g  &  \cong(-\mathrm{\Delta}_{\ell_{4}%
}+\nabla^{\ell_{1}}\cdot\nabla^{\ell_{4}}+\nabla^{\ell_{3}}\cdot\nabla
^{\ell_{4}}-\nabla^{\ell_{1}}\cdot\nabla^{\ell_{3}})g.
\end{align*}
Now, if $g$ has extended gauge invariance, each of these terms reduces to one
of $\pm\nabla^{a_{1}}\cdot\nabla^{a_{2}}g.$ Hence, we may take an alternating
sum and divide by 4 to obtain%
\begin{align*}
\nabla^{a_{1}}\cdot\nabla^{a_{2}}g  &  \cong\left(  -\frac{1}{2}%
\mathrm{\Delta} _{\ell_{1}}+\frac{1}{2}\mathrm{\Delta}_{\ell_{2}}-\frac{1}%
{2}\mathrm{\Delta}_{\ell_{3}}+\frac{1}{2}\mathrm{\Delta}_{\ell_{4}}\right)
g\\
&  +\left(  \frac{1}{4}\mathrm{\Delta}_{\ell_{1}}-\frac{1}{4}\mathrm{\Delta
}_{\ell_{2}}+\frac{1}{4}\mathrm{\Delta}_{\ell_{3}}-\frac{1}{4}\mathrm{\Delta
}_{\ell_{4}}\right)  g\\
&  +\frac{1}{2}\left(  \nabla^{\ell_{1}}\cdot\nabla^{\ell_{3}}-\nabla
^{\ell_{2}}\cdot\nabla^{\ell_{4}}\right)  g,
\end{align*}
where we have written the \textquotedblleft correct\textquotedblright%
\ Laplacian term (as in (\ref{LapTerm})) on the first line. To establish the
Makeenko--Migdal equation, we need to prove that the last two lines disappear
after integration:%
\begin{equation}
\int_{K^{m}}\left(  \left[  \mathrm{\Delta}_{\ell_{1}}-\mathrm{\Delta}%
_{\ell_{2}}+\mathrm{\Delta} _{\ell_{3}}-\mathrm{\Delta}_{\ell_{4}}%
+2\nabla^{\ell_{1}}\cdot\nabla^{\ell_{3}}-2\nabla^{\ell_{2}}\cdot\nabla
^{\ell_{4}}\right]  g\right)  ~d\tilde{\mu}=0, \label{integratedIdentity}%
\end{equation}
whenever $g$ has extended gauge invariance at $v.$

To establish \eqref{integratedIdentity}, we recall that extended gauge
invariance means invariance under the transformations in \eqref{extendLoop1},
(\ref{extendLoop2}), and (\ref{extendLoop3}). Applying these transformations
with $x$ equal to $e^{tX}$ and $y$ equal to the identity and differentiating
shows that%
\begin{equation}
0=(\nabla^{\ell_{1},L}-\nabla^{\ell_{2},R}+\nabla^{\ell_{3},L}-\nabla
^{\ell_{4},R})g+\sum_{j\in I}(\nabla^{\ell_{j},L}-\nabla^{\ell_{j},R})g,
\label{adInvId}%
\end{equation}
where $I$ refers to the set of indices $j\geq5$ for which the loop goes out
from the basepoint along $a_{1}$ or $a_{3}.$ We now apply the operator
$\nabla^{\ell_{1},L}+\nabla^{\ell_{2},L}+\nabla^{\ell_{3},L}+\nabla^{\ell
_{4},L}$ to both sides of (\ref{adInvId}), integrate against $\tilde{\mu},$
and use Lemma \ref{adInv.lem}. All terms involving derivatives with respect to
$\ell_{j},$ $j\geq5,$ will drop out, and we do not have to specify whether the
remaining derivatives are left-invariant or right-invariant, giving%
\begin{equation}
\int_{K^{m}}\left[  (\nabla^{\ell_{1}}+\nabla^{\ell_{2}}+\nabla^{\ell_{3}%
}+\nabla^{\ell_{4}})(\nabla^{\ell_{1}}-\nabla^{\ell_{2}}+\nabla^{\ell_{3}%
}-\nabla^{\ell_{4}})g\right]  ~d\tilde{\mu}=0. \label{integratedIdentity2}%
\end{equation}

If we expand out the product on the left-hand side of
(\ref{integratedIdentity2}), we find that products of derivatives on
cyclically adjacent variables (e.g., $\nabla^{\ell_{1}}\cdot\nabla^{\ell_{2}}$
or $\nabla^{\ell_{4}}\cdot\nabla^{\ell_{1}}$) cancel, while products of
derivatives on \textquotedblleft opposite\textquotedblright\ variables (i.e.,
$\nabla^{\ell_{1}}\cdot\nabla^{\ell_{3}}$ and $\nabla^{\ell_{2}}\cdot
\nabla^{\ell_{4}}$) combine. Thus, (\ref{integratedIdentity2}) is precisely
equivalent to the desired identity (\ref{integratedIdentity}).

\subsection{An example\label{example.sec}}

\begin{figure}[ptb]
\centering
\includegraphics[
height=2.1949in,
width=2.1949in
]{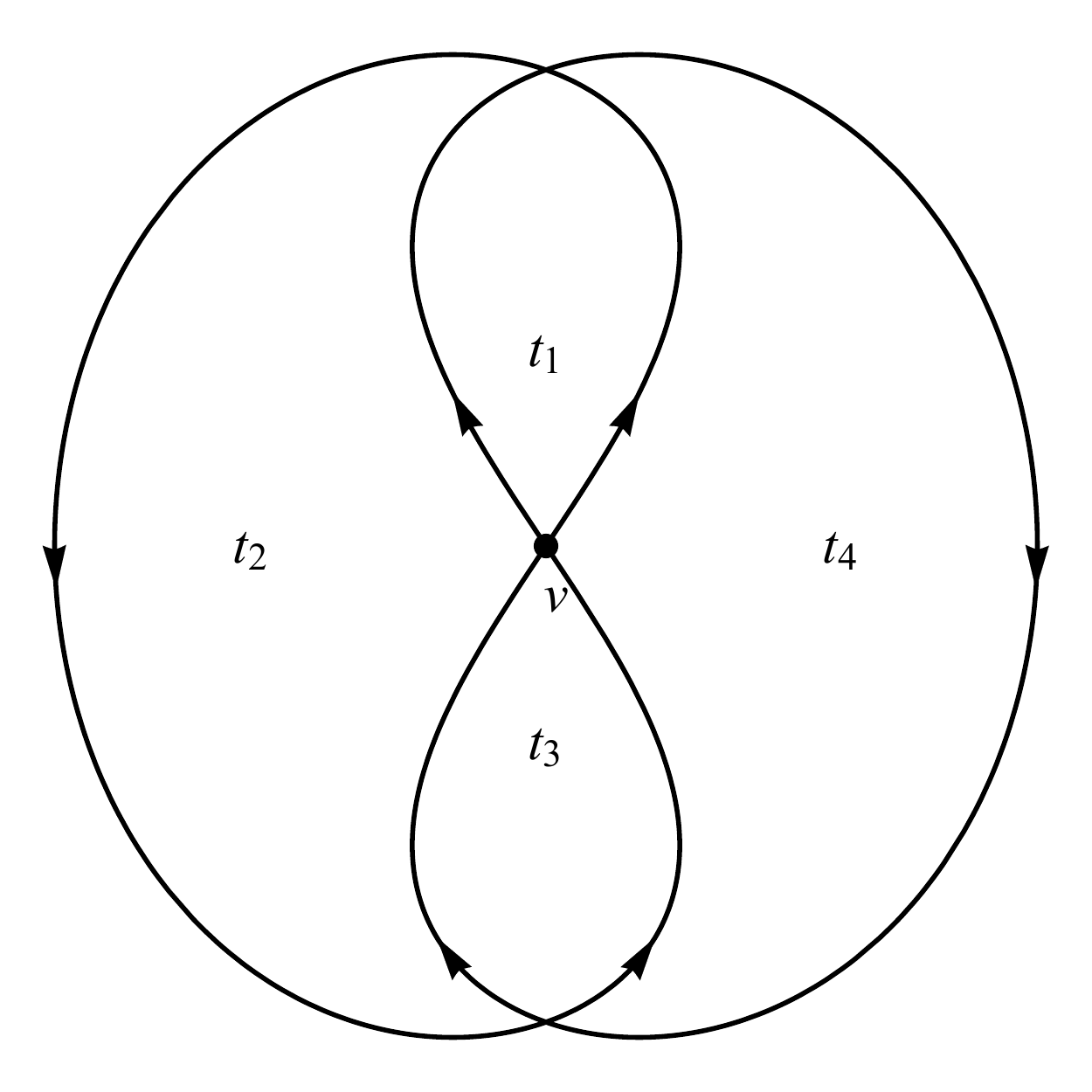}\caption{An example loop}%
\label{example.fig}%
\end{figure}

We now illustrate the preceding proof of the Makeenko--Migdal equation for the
loop in Figure \ref{example.fig}. We take the spanning tree $\mathbb{T}$ to
consisting of the edge between $F_{1}$ and $F_{2}$ and the edge between
$F_{3}$ and $F_{4}.$ In that case, it is easy to work out that the reduced
loops (with backtracks removed) associated to each bounded face $F$ of
$\mathbb{G}$ will simply proceed from $v$ around $F$ in the counter-clockwise
direction, as in Figure \ref{exampledecomp.fig}. It is straightforward to
check that the loop in Figure \ref{example.fig} decomposes as%
\[
(L_{1}L_{2}L_{3})(L_{1}^{-1}L_{4}^{-1}L_{3}^{-1}),
\]
where the notation means that we first traverse $L_{1},$ then $L_{2},$ and so
on. The expressions in parentheses indicate the component loops in the $U(N)$
version of the Makeenko--Migdal equation. (We will carry out the calculation
for a general compact group $K$ and then indicate what happens when $K=U(N).$)
Since parallel transport is order-reversing, the holonomy around $L$ is
expressed in terms of the loop variables as
\[
\mathrm{hol}(L)=\ell_{3}^{-1}\ell_{4}^{-1}\ell_{1}^{-1}\ell_{3}\ell_{2}%
\ell_{1}.
\]

\begin{figure}[ptb]
\centering
\includegraphics[
height=2.0185in,
width=2.1949in
]{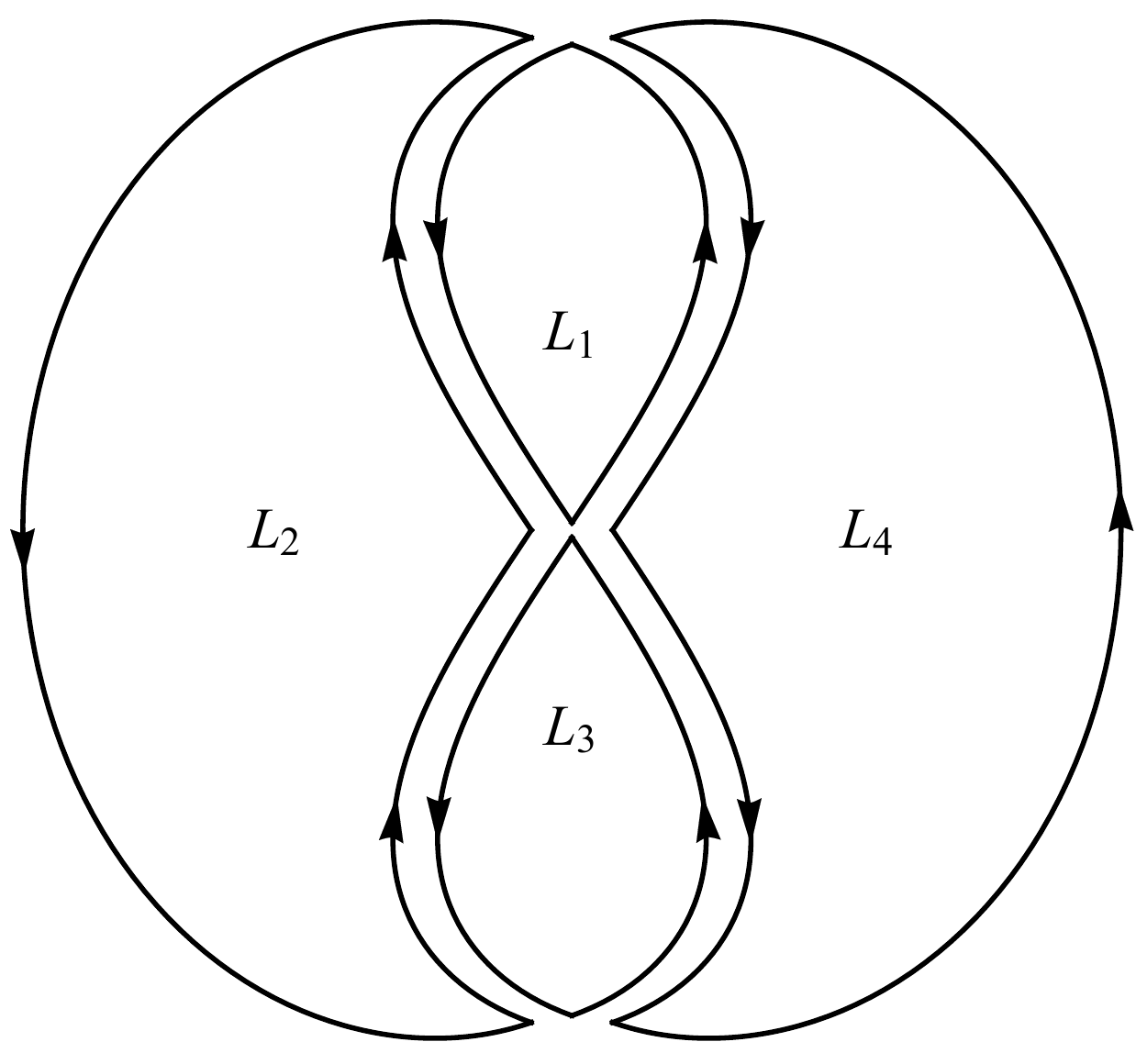}\caption{The generating loops for the example in Figure
\ref{example.fig}}%
\label{exampledecomp.fig}%
\end{figure}

We start by computing $\nabla^{a_{1}}\cdot\nabla^{a_{2}}g,$ with
\[
g(\ell_{1},\ell_{2},\ell_{3},\ell_{4})=\mathrm{tr}(\ell_{3}^{-1}\ell_{4}%
^{-1}\ell_{1}^{-1}\ell_{3}\ell_{2}\ell_{1}),
\]
with $\mathrm{tr}$ denoting the normalized trace in some representation of
$K.$ Recalling (\ref{adjacentLoops}), we find that the substitutions
$a_{1}\mapsto a_{1}e^{sX}$ and $a_{2}\mapsto a_{2}e^{tX}$ in the edge
variables translates into the substitutions
\[
(\ell_{1},\ell_{2},\ell_{3},\ell_{4})\mapsto(e^{-tX}\ell_{1}e^{sX},\ell
_{2}e^{tX},\ell_{3},e^{-sX}\ell_{4})
\]
in the loop variables. Thus $-\nabla^{a_{1}}\cdot\nabla^{a_{2}}g$ is equal to
\begin{align}
&  -\sum_{X}\left.  \frac{\partial^{2}}{\partial s~\partial t}\mathrm{tr}%
[\ell_{3}^{-1}(\ell_{4}^{-1}e^{sX})(e^{-sX}\ell_{1}^{-1}e^{tX})\ell_{3}%
(\ell_{2}e^{tX})(e^{-tX}\ell_{1}e^{sX})]\right\vert _{s=t=0}\nonumber\\
=  &  -\sum_{X}\left.  \frac{\partial^{2}}{\partial s~\partial t}%
\mathrm{tr}[\ell_{3}^{-1}\ell_{4}^{-1}\ell_{1}^{-1}e^{tX}\ell_{3}\ell_{2}%
\ell_{1}e^{sX}]\right\vert _{s=t=0}\nonumber\\
=  &  -\sum_{X}\mathrm{tr}[\ell_{3}^{-1}\ell_{4}^{-1}\ell_{1}^{-1}X\ell
_{3}\ell_{2}\ell_{1}X], \label{gradABexample}%
\end{align}
where in each term we sum $X$ over an orthonormal basis of the Lie algebra
$\mathfrak{k}$ of $K.$ In the $K=U(N)$ case with the normalized trace taken in
the standard representation, the last line of (\ref{gradABexample}) simplifies
to $\mathrm{tr}(\ell_{3}^{-1}\ell_{4}^{-1}\ell_{1}^{-1})\mathrm{tr}(\ell
_{3}\ell_{2}\ell_{1}),$ by (\ref{magic}).

We now compute the alternating sum of time derivatives of $\int g\,d\tilde
{\mu},$ using (\ref{LapTerm}). By Lemma \ref{adInv.lem}, we are free to
evaluate the Laplacians using any combination of derivatives on the left and
on the right. The computations work out most simply if we compute each
Laplacian as a product of a gradient on the left and a gradient on the right.
With this convention, we easily obtain
\begin{align*}
\mathrm{\Delta}_{\ell_{1}}g  &  =\sum_{X}\big(\mathrm{tr}(\ell_{3}^{-1}%
\ell_{4}^{-1}X\ell_{1}^{-1}X\ell_{3}\ell_{2}\ell_{1})-\mathrm{tr}(\ell
_{3}^{-1}\ell_{4}^{-1}X\ell_{1}^{-1}\ell_{3}\ell_{2}X\ell_{1})\\
&  -\mathrm{tr}(\ell_{3}^{-1}\ell_{4}^{-1}\ell_{1}^{-1}X\ell_{3}\ell_{2}%
\ell_{1}X)+\mathrm{tr}(\ell_{3}^{-1}\ell_{4}^{-1}\ell_{1}^{-1}\ell_{3}\ell
_{2}X\ell_{1}X)\big)\\
& \\
\mathrm{\Delta}_{\ell_{2}}g  &  =\sum_{X}\mathrm{tr}(\ell_{3}^{-1}\ell
_{4}^{-1}\ell_{1}^{-1}\ell_{3}X\ell_{2}X\ell_{1})
\end{align*}
\begin{align*}
\mathrm{\Delta}_{\ell_{3}}g  &  =\sum_{X}\big(\mathrm{tr}(X\ell_{3}^{-1}%
X\ell_{4}^{-1}\ell_{1}^{-1}\ell_{3}\ell_{2}\ell_{1})-\mathrm{tr}(X\ell
_{3}^{-1}\ell_{4}^{-1}\ell_{1}^{-1}X\ell_{3}\ell_{2}\ell_{1})\\
&  -\mathrm{tr}(\ell_{3}^{-1}X\ell_{4}^{-1}\ell_{1}^{-1}\ell_{3}X\ell_{2}%
\ell_{1})+\mathrm{tr}(\ell_{3}^{-1}\ell_{4}^{-1}\ell_{1}^{-1}X\ell_{3}%
X\ell_{2}\ell_{1})\big)
\end{align*}
\begin{align*}
\mathrm{\Delta}_{\ell_{4}}g  &  =\sum_{X}\mathrm{tr}(\ell_{3}^{-1}X\ell
_{4}^{-1}X\ell_{1}^{-1}\ell_{3}\ell_{2}\ell_{1}),
\end{align*}
where in each term, we sum $X$ over an orthonormal basis for $\mathfrak{k}.$

After taking half the alternating sum of these Laplacians, we obtain two
\textquotedblleft good\textquotedblright\ terms (the third term in
$\mathrm{\Delta}_{\ell_{1}}$ and the second term in $\mathrm{\Delta}_{\ell
_{3}}$), namely%
\begin{equation}
-\frac{1}{2}\mathrm{tr}(\ell_{3}^{-1}\ell_{4}^{-1}\ell_{1}^{-1}X\ell_{3}%
\ell_{2}\ell_{1}X)-\frac{1}{2}\mathrm{tr}(X\ell_{3}^{-1}\ell_{4}^{-1}\ell
_{1}^{-1}X\ell_{3}\ell_{2}\ell_{1}). \label{goodTerms}%
\end{equation}
After using the cyclic invariance of the trace, these terms reduce precisely
to (\ref{gradABexample}). We are left with eight \textquotedblleft
bad\textquotedblright\ terms in the alternating sum that must cancel out after integration.

To verify this cancellation directly, we compute that%
\begin{align}
&  (\nabla^{\ell_{1},L}\cdot\nabla^{\ell_{2},L}-\nabla^{\ell_{1},R}\cdot
\nabla^{\ell_{2},R})g\nonumber\\
&  =\sum_{X}\big(-\mathrm{tr}(\ell_{3}^{-1}\ell_{4}^{-1}X\ell_{1}^{-1}\ell
_{3}\ell_{2}X\ell_{1})+\mathrm{tr}(\ell_{3}^{-1}\ell_{4}^{-1}\ell_{1}^{-1}%
\ell_{3}\ell_{2}X\ell_{1}X)\nonumber\\
&  +\mathrm{tr}(\ell_{3}^{-1}\ell_{4}^{-1}\ell_{1}^{-1}X\ell_{3}X\ell_{2}%
\ell_{1})-\mathrm{tr}(\ell_{3}^{-1}\ell_{4}^{-1}\ell_{1}^{-1}\ell_{3}X\ell
_{2}X\ell_{1})\big) \label{Ad1}%
\end{align}
and that
\begin{align}
&  (\nabla^{\ell_{3},L}\cdot\nabla^{\ell_{4},L}-\nabla^{\ell_{3},R}\cdot
\nabla^{\ell_{4},R})g\nonumber\\
&  =\sum_{X}\big(\mathrm{tr}(X\ell_{3}^{-1}X\ell_{4}^{-1}\ell_{1}^{-1}\ell
_{3}\ell_{2}\ell_{1})-\mathrm{tr}(\ell_{3}^{-1}X\ell_{4}^{-1}\ell_{1}^{-1}%
\ell_{3}X\ell_{2}\ell_{1})\nonumber\\
&  -\mathrm{tr}(\ell_{3}^{-1}X\ell_{4}^{-1}X\ell_{1}^{-1}\ell_{3}\ell_{2}%
\ell_{1})+\mathrm{tr}(\ell_{3}^{-1}\ell_{4}^{-1}X\ell_{1}^{-1}X\ell_{3}%
\ell_{2}\ell_{1})\big). \label{Ad2}%
\end{align}
One can easily check that the eight bad terms in the alternating sum of
Laplacians are exactly the sum of the right-hand sides of (\ref{Ad1}) and
(\ref{Ad2}). Thus, by Lemma \ref{adInv.lem}, the bad terms integrate to zero.

\section{Reduction to the generic case\label{generic.sec}}

In the preceding sections, we assumed that our graph $\mathbb{G}$ was
\textquotedblleft generic\textquotedblright\ relative to the given vertex $v,
$ meaning that the four adjacent faces are distinct and bounded and that the
four edges emanating from $v$ are distinct. (More precisely, nongeneric
behavior of the edges is when one of the outgoing edges $e_{i}$ emanating from
$v$ coincides with $e_{j}^{-1}$ for some $j\neq i$.) In this section, we show
that a version of Theorem \ref{MMab.thm} still holds even if the preceding
assumptions are not satisfied.

If a graph $\mathbb{G}$ is not generic relative to a given vertex $v,$ we
construct a new graph $\mathbb{G}^{\prime}$ by adding four new vertices and
connecting them in a circular pattern as in Figures \ref{generic1.fig}. Our
strategy will be to \textquotedblleft promote\textquotedblright\ a function of
the edge variables of $\mathbb{G}$ to a function of the edge variables of
$\mathbb{G}^{\prime},$ apply the Makeenko--Migdal equation for $\mathbb{G}%
^{\prime},$ and then deduce the Makeenko--Migdal equation for $\mathbb{G}.$

Note that if $\mathbb{G}^{\prime}$ is as in Figure \ref{generic1.fig}, it will
satisfy the additional assumption used in Section \ref{loop.sec}, namely that
we can choose a spanning tree $\mathbb{T}^{\prime}$ for $\mathbb{G}^{\prime}$
so that the loops associated to the four faces adjacent to $v$ have the form
given in (\ref{adjacentLoopAssumption}). After all, if we choose
$\mathbb{T}^{\prime}$ to include all four edges coming out of $v,$ then for
$i=1,\ldots,4,$ there will only be one edge of $F_{i}$ that bounds $F_{i}$
positively and is not in $\mathbb{T}^{\prime},$ namely the \textquotedblleft
circular\textquotedblright\ edge (oriented counter-clockwise). Thus, the loop
given by L\'{e}vy's construction will simply proceed out along $e_{i},$ then
counter-clockwise around the circular edge, then back to $v$ along
$e_{i+1}^{-1}.$ Such a loop is of the form (\ref{adjacentLoopAssumption}).

\subsection{Consistency of the Yang--Mills measure\label{consistency.sec}}

The first key point is to establish a consistency result, stating that the
integral of the promoted function with respect to the Yang--Mills measure for
$\mathbb{G}^{\prime}$ is the same as the integral of the original function
with respect to the Yang--Mills measure for $\mathbb{G}.$

\begin{figure}[ptb]
\centering
\includegraphics[
height=1.5843in,
width=3.2785in
]{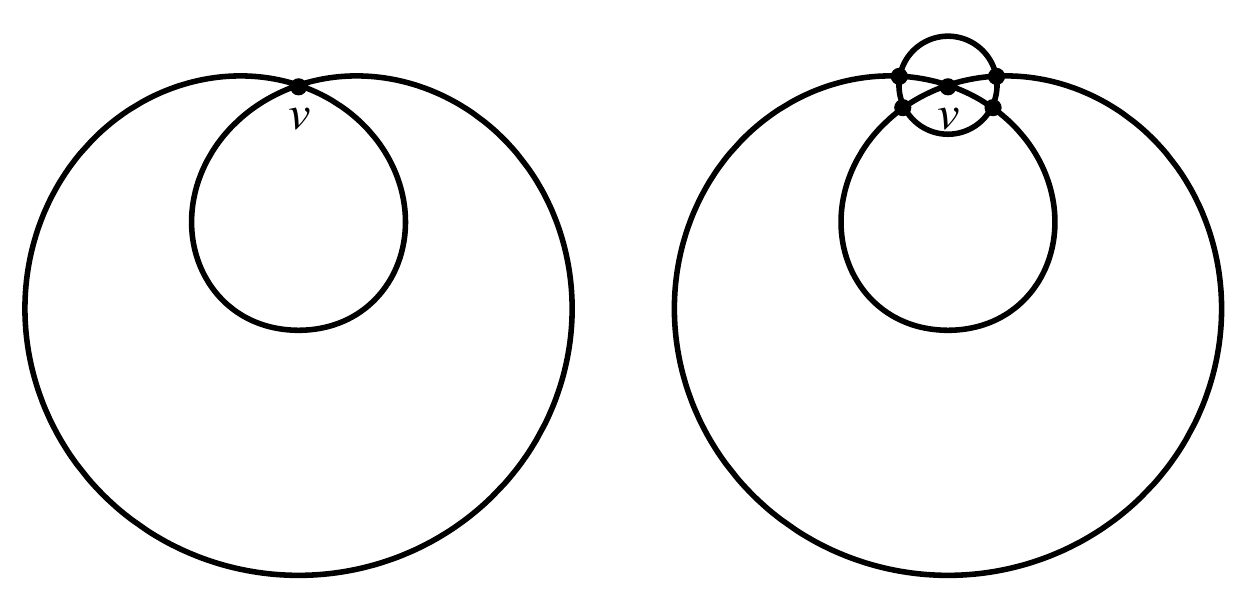}\caption{A nongeneric graph relative to $v$ (left) and its
generic counterpart (right)}%
\label{generic1.fig}%
\end{figure}

Conceptually, this consistency result holds (at least for gauge-invariant
functions) because both integrals are computing the same functional of the
underlying white noise in the path-integral formulation of the Yang--Mills
theory. It is also possible to establish consistency directly from the
formulas for the integrals over the graphs, as we now explain. The consistency
result has two aspects. First, suppose we add an extra vertex in the middle of
an edge $e,$ thus subdividing $e$ into two new edges $e^{\prime}$ and
$e^{\prime\prime}.$ Thus, $e$ is replaced by $e^{\prime}e^{\prime\prime}.$
Since parallel transport is order reversing, we will then replace the edge
variable $x$ associated with $e$ by the product $x^{\prime\prime}x^{\prime}$
of edge variables $x^{\prime}$ and $x^{\prime\prime}$ associated to
$e^{\prime}$ and $e^{\prime\prime}.$ Thus, if $f$ is a function of the edge
variables of the original graph, we form a function $f^{\prime}$ of the edge
variables of the new graph by setting%
\begin{equation}
f^{\prime}(x^{\prime},x^{\prime\prime},\mathbf{y})=f(x^{\prime\prime}%
x^{\prime},\mathbf{y}), \label{promotion1}%
\end{equation}
where $\mathbf{y}$ is the collection of all edge variables different from $x$
(in the original graph) or different from $x^{\prime}$ and $x^{\prime\prime}$
(in the new graph).

Consistency of the integrals under this change is easy to establish. If $\nu$
is the density of the Yang--Mills measure, it is easy to see from
(\ref{DriversForm}) that $\nu_{\mathbb{G}^{\prime}}(x^{\prime},x^{\prime
\prime},\mathbf{y})=\nu_{\mathbb{G}}(x^{\prime\prime}x^{\prime},\mathbf{y}).$
Thus,
\begin{align*}
\int f^{\prime}~d\mu_{\mathbb{G}^{\prime}}  &  =\int\int\int f(x^{\prime
\prime}x^{\prime},\mathbf{y})\nu_{\mathbb{G}}(x^{\prime\prime}x^{\prime
},\mathbf{y})~dx^{\prime}~dx^{\prime\prime}~d\mathbf{y}\\
&  =\int\int f(z,\mathbf{y})\nu_{\mathbb{G}}(z,\mathbf{y})~dz~d\mathbf{y}\\
&  =\int f~d\mu_{\mathbb{G}},
\end{align*}
where in the second equality, we have made the change of variable
$z=x^{\prime\prime}x^{\prime}$ in the $x^{\prime}$-integral and used the
normalization of the Haar measure in the $x^{\prime\prime}$ integral.

We also consider consistency under another type of change in the graph.
Suppose we create a new graph $\mathbb{G}^{\prime}$ from $\mathbb{G}$ by
keeping the vertex set the same and adding some new edges. Then any function
$f$ of the edge variables of $\mathbb{G}$ can be promoted to a function of the
edge variables of $\mathbb{G}^{\prime}$ by making the $f^{\prime}$ independent
of the new edge variables; that is,
\begin{equation}
f^{\prime}(\mathbf{x},\mathbf{y})=f(\mathbf{y}), \label{promotion2}%
\end{equation}
where $\mathbf{x}$ represents the edge variables for the new edges and
$\mathbf{y}$ represents the edges variables for the old edges. The consistency
identity%
\begin{equation}
\int f^{\prime}~d\mu_{\mathbb{G}^{\prime}}=\int f~d\mu_{\mathbb{G}}
\label{graphConsistency2}%
\end{equation}
is a special case of Theorem 1.22 in \cite{Levy1}. Note that adding an edge
can divide a face with area $t$ into two faces with areas $s$ and $s^{\prime}
$ satisfying $s+s^{\prime}=t.$ The key idea in verifying
(\ref{graphConsistency2}) is the convolution identity for heat kernels:
$\rho_{s}\ast\rho_{s^{\prime}}=\rho_{t}.$

\subsection{Interpretation of the theorem\label{interpret.sec}}

We now work toward establishing a version of the abstract Makeenko--Migdal
equation (Theorem \ref{MMab.thm}) in the nongeneric case. We must first
describe the proper interpretation of the theorem in the nongeneric case. If
one of the adjacent faces $F_{i}$ is the unbounded face, the corresponding
time derivative $\partial/\partial t_{i}$ should be interpreted as the zero
operator. If $F_{i}=F_{j}$ for $i\neq j,$ we simply have the same
area-derivative twice on the left-hand side of the Makeenko--Migdal equation.

\begin{figure}[ptb]
\centering
\includegraphics[
height=2.2in,
width=2.2in
]{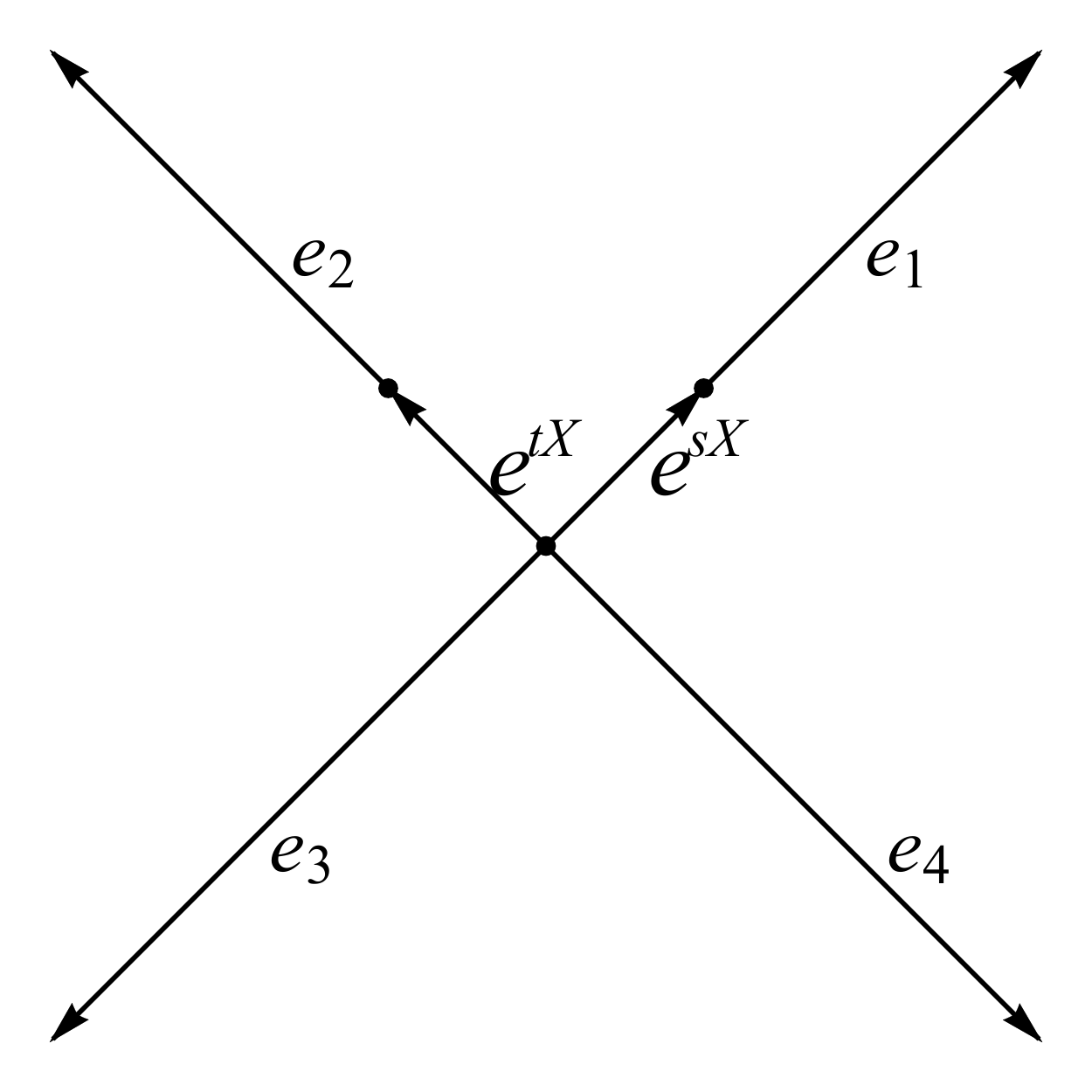}\caption{Geometric interpretation of $\nabla^{a_{1}}\cdot
\nabla^{a_{2}}f$}%
\label{gradab.fig}%
\end{figure}

Meanwhile, if for some $i\neq j,$ the edges $e_{i}$ and $e_{j}$ are inverses
of each other, we choose one of the indices (say, $i$) and we then no longer
view $a_{j}$ as an independent variable, but as simply another name for
$a_{i}^{-1}.$ Then to correctly interpret the expression $\nabla^{a_{1}}%
\cdot\nabla^{a_{2}}f,$ we insert a factor of $e^{sX}$ at the beginning of
$e_{1}$ and a factor of $e^{tX}$ at the beginning of $e_{2},$ as in Figure
\ref{gradab.fig}. We then determine the corresponding changes of the
independent variables---keeping in mind that parallel transport is order
reversing---differentiate at $s=t=0,$ and sum $X$ over an orthonormal basis.
Suppose, for example, that $e_{1}$ and $e_{2}$ are inverses of each other, but
$e_{3}$ and $e_{4}$ are distinct, and suppose we take $a_{1},$ $a_{3},$ and
$a_{4}$ as our independent variables. Then Figure \ref{gradab.fig} tells us
that we should replace $a_{1}$ by $e^{-tX}a_{1}e^{sX},$ so that
\[
(\nabla^{a_{1}}\cdot\nabla^{a_{2}}f)(a_{1},a_{3},a_{4},\mathbf{b})=\sum
_{X}\left.  \frac{\partial^{2}}{\partial s\partial t}f(e^{-tX}a_{1}%
e^{sX},a_{3},a_{4},\mathbf{b})\right\vert _{s=t=0}.
\]

Finally, we describe the correct notion of extended gauge invariance at $v$ in
the case where the edges $e_{1},\ldots,e_{4}$ are not necessarily distinct. We
insert two extra factors of $x$ as on either of the two sides of Figure
\ref{extended.fig}. A function $f$ has extended gauge invariance if the value
of $f$ is unchanged by this insertion. We may be more precise about this
definition as follows. Let $\mathbb{G}$ be a graph and $v$ a vertex of
$\mathbb{G}$ with four edges, where we count an edge $e$ twice if both ends of
$e$ are attached to $v.$ Let $f$ be a function of the edge variables of
$\mathbb{G}.$ Let $\mathbb{G}^{\prime}$ be either of the two graphs in Figure
\ref{extended.fig}, obtained by adding two new vertices to $\mathbb{G}.$ Let
$f^{\prime}(x,y,\mathbf{z})$ be the function of the edge variables of
$\mathbb{G}^{\prime}$ formed by the method described earlier in this section,
where $x$ and $y$ denote the edge variables associated to the two new edges
emanating from $v$ and $\mathbf{z}$ represents all the other edge variables.
We say that $f$ has extended gauge invariance if%
\[
f^{\prime}(x,x,\mathbf{z})=f^{\prime}(\mathrm{id},\mathrm{id},\mathbf{z}),
\]
for all $x\in K,$ where $\mathrm{id}$ is the identity element of $K.$

\begin{figure}[ptb]
\centering
\includegraphics[
height=1.6743in,
width=3.6391in
]{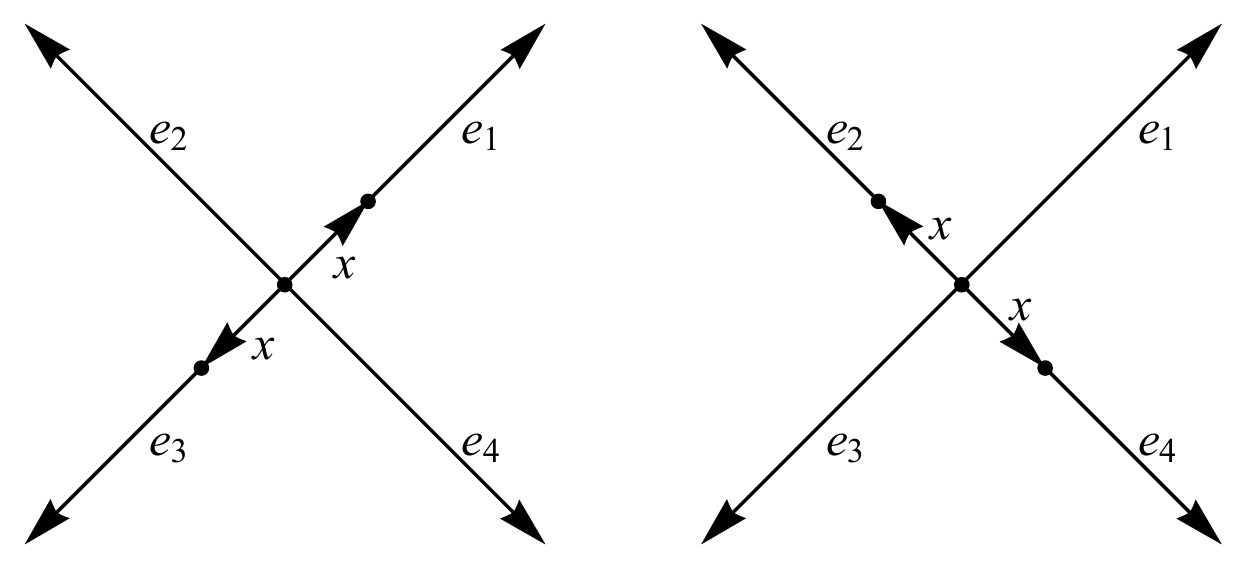}\caption{Geometric interpretation of extended gauge invariance}%
\label{extended.fig}%
\end{figure}

If the edges $e_{1},\ldots,e_{4}$ are distinct, this new notion of extended
gauge invariance agrees with the notion in Definition \ref{extended.def}.
After all, when the edges are distinct, $f^{\prime}(x,x,\mathbf{z})$ will
simply be (in the notation of Definition \ref{extended.def}) either
$f(a_{1}x,a_{2},a_{3}x,a_{4},\mathbf{b})$ or $f(a_{1},a_{2}x,a_{3}%
,a_{4}x,\mathbf{b}).$

We consider two other examples of our new notion of extended invariance.
Suppose that $e_{3}$ coincides with $e_{2}^{-1}$ (but $e_{1}$ and $e_{4}$ are
distinct), and that we choose $a_{1},$ $a_{2},$ and $a_{4}$ as our independent
variables. Since parallel transport is order reversing, the transformation
indicated by the left side of Figure \ref{extended.fig} will change $a_{2}$ to
$x^{-1}a_{2}$ and $a_{1}$ to $a_{1}x,$ while leaving $a_{4} $ unchanged.
Meanwhile, the transformation for the right-hand side of the figure changes
$a_{2}$ and $a_{4}$ to $a_{2}x$ and $a_{4}x$, as usual. Thus, when
$e_{3}=e_{2}^{-1},$ extended gauge invariance means that%
\[
f(a_{1},a_{2},a_{4},\mathbf{b})=f(a_{1}x,x^{-1}a_{2},a_{4},\mathbf{b}%
)=f(a_{1},a_{2}x,a_{4}x,\mathbf{b}),
\]
for all $x\in K.$ Similarly, suppose $e_{1}$ and $e_{3}$ are inverses of each
other (but $e_{2}$ and $e_{4}$ are distinct) and we take $a_{1},$ $a_{2},$ and
$a_{4}$ as our independent variables. Then extended gauge invariance means
that%
\[
f(a_{1},a_{2},a_{4},\mathbf{b})=f(x^{-1}a_{1}x,a_{2},a_{4},\mathbf{b}%
)=f(a_{1},a_{2}x,a_{4}x,\mathbf{b}),
\]
for all $x\in K.$

\subsection{Proof of the theorem}

If $\mathbb{G}$ is a graph that is not generic at $v,$ we consider another
graph $\mathbb{G}^{\prime}$ as in Figure \ref{generic1.fig}, which is generic
at $v.$ We label the edges and faces of $\mathbb{G}^{\prime}$ around $v$ as in
Figure \ref{generic2.fig}, and we let $a_{i}^{\prime}$ denote the edge
variable associated to $e_{i}^{\prime}.$ We now promote any function $f$ of
the edge variables of $\mathbb{G}$ to a function $f^{\prime}$ of the edge
variables of $\mathbb{G}^{\prime},$ by the method described in Section
\ref{consistency.sec}. Using the geometric interpretation of extended gauge
invariance in Figure \ref{extended.fig}, we can verify that if $f$ has
extended gauge invariance at $v$, so does $f^{\prime}.$ Furthermore, if $f$ is
gauge invariant--- as we assume in our loop-based proof in Section
\ref{loop.sec}---it is not hard to see that $f^{\prime}$ is also gauge invariant.

\begin{figure}[ptb]
\centering
\includegraphics[
height=1.9251in,
width=3.6391in
]{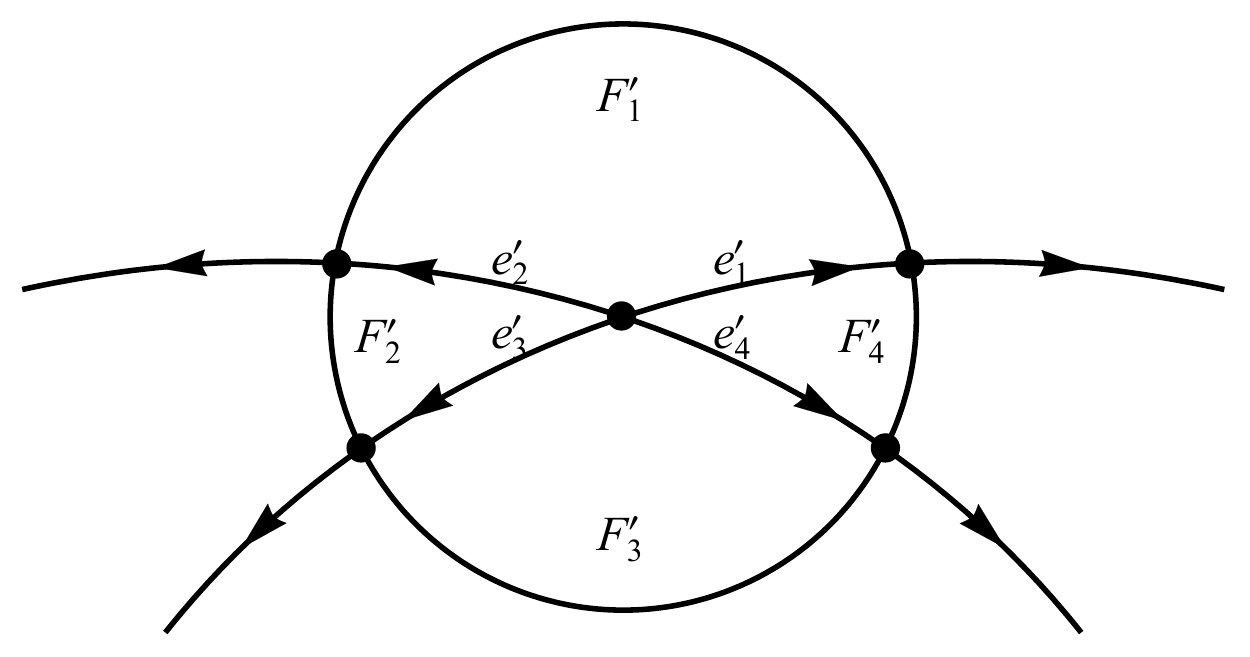}\caption{The adjacent faces and edge variables for the generic
graph}%
\label{generic2.fig}%
\end{figure}

Since the new graph is generic relative to $v$, all of our proofs of the
Makeenko--Migdal equation apply to $\int f^{\prime}~d\mu_{\mathbb{G}^{\prime}%
}$, where in the case of the loop-based proof, we would need to assume that
$f$ (and thus, also, $f^{\prime}$) is gauge invariant. (As we have noted in
the introduction to this section, $\mathbb{G}^{\prime}$ has the additional
property, used in Section \ref{loop.sec}, that we can choose a spanning tree
for $\mathbb{G}^{\prime}$ so that the adjacent loops have the form in
(\ref{adjacentLoopAssumption}).) It now remains only to see that
the\ Makeenko--Migdal equation for $\int f^{\prime}~d\mu_{\mathbb{G}^{\prime}%
}$ reduces to the Makeenko--Migdal equation for $\int f~d\mu_{\mathbb{G}}.$

We begin with the time derivatives and we consider first the possibility that
one of the adjacent faces $F_{i}$ in the original graph is the unbounded face.
In that case, the face $F_{i}^{\prime}$ in the new graph will share a
\textquotedblleft circular\textquotedblright\ edge with the unbounded face.
Since the circular edge lies between $F_{i}^{\prime}$ and the unbounded face,
the corresponding edge variable $c_{i}$ will occur in \textit{only one} of the
heat kernels in the definition of $\mu_{\mathbb{G}^{\prime}}.$ Thus, the
density of $\mu_{\mathbb{G}^{\prime}}$ will take the form%
\[
\rho_{\tilde{t}_{i}}(c_{i}\gamma)\delta,
\]
where $\gamma$ is a word in edge variables other than $c_{i}$ and where
$\delta$ is a product of heat kernels evaluated on edge variables other than
$c_{i}.$ Thus, if $t_{i}^{\prime}$ denotes the area of $F_{i}^{\prime},$ we
have%
\begin{align*}
\frac{\partial}{\partial t_{i}^{\prime}}\int f^{\prime}~d\mu_{\mathbb{G}%
^{\prime}}  &  =\frac{1}{2}\int f^{\prime}~\mathrm{\Delta}_{c_{i}}%
[\rho_{\tilde{t}_{i}}(c_{i}\gamma)]\delta~d\mathrm{Haar}\\
&  =\frac{1}{2}\int(\mathrm{\Delta}_{c_{i}}f^{\prime})~d\mu_{\mathbb{G}%
^{\prime}}\\
&  =0,
\end{align*}
since $f^{\prime}$ is, by construction, independent of $c_{i}.$

We next consider the possibility that for some $i\neq j,$ two bounded faces
$F_{i}$ and $F_{j}$ in the original graph coincide. In that case, the face
$F_{i}=F_{j}$ is divided into three faces in the new graph, $F_{i}^{\prime}$,
$F_{j}^{\prime},$ and one other face $G.$ Thus,%
\[
t_{i}=t_{i}^{\prime}+t_{j}^{\prime}+s,
\]
where $s$ is the area of $G,$ which means that varying $t_{i}^{\prime}$ has
the same effect as varying $t_{i}.$ It follows from this observation and the
consistency of the Yang--Mills measure that%
\begin{equation}
\frac{\partial}{\partial t_{i}^{\prime}}\int f^{\prime}~d\mu_{\mathbb{G}%
^{\prime}}=\frac{\partial}{\partial t_{i}}\int f~d\mu_{\mathbb{G}}.
\label{titi}%
\end{equation}
If three or more bounded faces in the original graph coincide, a very similar
argument shows that (\ref{titi}) still holds.

Finally, using the geometric interpretation of $\nabla^{a_{1}}\cdot
\nabla^{a_{2}}$ in Figure \ref{gradab.fig}, we can verify that
\begin{equation}
(\nabla^{a_{1}}\cdot\nabla^{a_{2}}f)^{\prime}=\nabla^{\tilde{a}_{1}}%
\cdot\nabla^{\tilde{a}_{2}}f^{\prime}, \label{aiCommute}%
\end{equation}
showing that the right-hand side of the Makeenko--Migdal equation for
$\mathbb{G}^{\prime}$ reduces to the corresponding expression for $\mathbb{G}.
$

\begin{acknowledgement}
We thank Franck Gabriel for comments on and corrections to an earlier
version of this article. We also thank Ambar Sengupta for many useful
discussions about Yang--Mills theory and the master field.  Finally, we thank
the referees for a careful reading of the manuscript, which has led to several
clarifications and corrections and a significant improvement to the paper.
\end{acknowledgement}

\end{document}